\documentclass[journal,onecolumn]{IEEEtran}
\usepackage[colorlinks, linkcolor=red, citecolor=blue]{hyperref}
\usepackage{amssymb,amsthm,amsmath,latexsym,cite}
\usepackage{lineno,hyperref,mathtools}
\usepackage{tikz}
\usepackage{pdflscape}
\usepackage{longtable}
\usepackage{microtype}
\usepackage{setspace}
\usepackage{amsmath}
\usepackage{amssymb}
\usepackage{bm}
\usepackage{amssymb}
\usepackage{amsthm}
\usepackage{multirow,booktabs}
\usepackage{cases}
\usepackage{tabularx}
\usepackage{adjustbox}
\usepackage[figuresright]{rotating}
\usepackage{longtable}
\usepackage{supertabular}
\usepackage{threeparttable}
\usepackage{booktabs}
\usepackage{multirow}
\usepackage{color}
\usepackage{cite}
\usepackage{multirow,booktabs}
\usepackage{cases}
\usepackage{tabularx}
\usepackage{adjustbox}
\usepackage[figuresright]{rotating}
\usepackage{longtable}
\usepackage{booktabs}
\usepackage{multirow}
\usepackage{color}
\usepackage{lineno,hyperref,mathtools}
\usepackage{soul}
\usepackage[ruled,linesnumbered]{algorithm2e}
\usepackage{algpseudocode}
\usepackage{pifont}
\usepackage{booktabs}
\usepackage{graphicx}
\usepackage{orcidlink}
\usepackage{enumitem}
\allowdisplaybreaks
\newtheorem{theorem}{Theorem}

\newtheorem{problem}{Problem}
\newtheorem{statement}{Statement}

\newtheorem{lemma}[theorem]{Lemma}
\newtheorem{corollary}[theorem]{Corollary}

\theoremstyle{definition} 
\newtheorem{remark}{Remark}
\newtheorem{example}{Example}
\newtheorem{definition}{Definition}
\theoremstyle{remark}

\newcommand{\F}{\mathbb{F}}

\newcommand {\ccc}{{\mathbf{c}}}
\newcommand{\C}{\mathcal{C}}
\newcommand{\D}{{\mathcal{D}}}

\newcommand{\Hom}{{\mathrm{Hom}}}

\newcommand{\Hull}{{\mathrm{Hull}}}
\newcommand{\Col}{{\mathrm{Col}}}

\newcommand{\wt}{{{\rm{wt}}}}

\newcommand{\rank}{{\rm rank}}

\definecolor{mustfixcolor}{rgb}{0.95,0.00,0.85}
\definecolor{prooffixcolor}{rgb}{0.45,0.00,0.80}
\definecolor{polishfixcolor}{rgb}{0.00,0.45,0.18}
\definecolor{datafixcolor}{rgb}{0.90,0.35,0.00}
\definecolor{oldtextcolor}{rgb}{0.45,0.45,0.45}

\newcommand{\ignore}[1]{}
\makeatletter
\newcommand{\rmnum}[1]{\romannumeral #1}
\newcommand{\Rmnum}[1]{\expandafter\@slowromancap\romannumeral #1@}
\makeatother

\begin{document}
\title{Generalized Extended Codes with Applications in Entanglement-Assisted Qubit and Qutrit Codes}
\author{Yang Li${}^{\orcidlink{0000-0003-0286-9263}}$, 
Martianus Frederic Ezerman${}^{\orcidlink{0000-0002-5851-2717}}$, 
Shitao Li${}^{\orcidlink{0000-0002-9716-6212}}$, 
San Ling${}^{\orcidlink{0000-0002-1978-3557}}$, 
and Zhonghua Sun${}^{\orcidlink{0000-0002-8975-1163}}$
\thanks{
This research is supported by the Nanyang Technological University Research Grant under No. 04INS000047C230GRT01.}
\thanks{Yang Li is with the School of Physical and Mathematical Sciences, 
Nanyang Technological University, 21 Nanyang Link, Singapore 637371, Singapore
(email: yanglimath@163.com).} 
\thanks{Martianus Frederic Ezerman is with the School of Physical and Mathematical Sciences, Nanyang Technological University, 21 Nanyang Link, Singapore 637371, Singapore
(email: fredezerman@ntu.edu.sg).} 
\thanks{Shitao Li is with the School of Internet, Anhui University, Hefei, Anhui 230039, China (email: lishitao0216@163.com).}
\thanks{San Ling is with the School of Physical and Mathematical Sciences, 
Nanyang Technological University, 21 Nanyang Link, Singapore 637371, Singapore
(email: lingsan@ntu.edu.sg). 
He is also with VinUniversity, Vinhomes Ocean Park, Gia Lam, Hanoi 100000, Vietnam (email: ling.s@vinuni.edu.vn).} 
\thanks{Zhonghua Sun is with the School of Mathematics, Hefei University of Technology, Hefei,
230601, Anhui, China (email: sunzhonghuas@163.com).}}

\maketitle
\begin{abstract}
Given a linear code $\mathcal{C}$ of length $n$ over $\mathbb{F}_{q^2}$ and a nonzero vector ${\bf u}\in \mathbb{F}_{q^2}^n$,  
Sun, Ding, and Chen introduced the second kind of extended construction resulting in the code $\overline{\C}({\bf u})$. Their construction generalizes the standard extended construction $\overline{\C}(-{\bf 1})$, that is, when ${\bf u}$ is fixed to $-{\bf 1}$. They further showed that every $[n,k,d]_{q^2}$ linear code with $d \geq 2$ can be obtained from this construction for a suitable choice of the initial code $\mathcal{C}$ and the extension vector ${\bf u}$, up to permutation equivalence. To construct entanglement-assisted quantum error-correcting codes (EAQECCs) with more flexible and potentially better parameters, we consider the Hermitian dual of such extended linear codes and apply monomial variations on them. The resulting family of linear codes consist of $q^2$-ary codes $\C$ with respect to vector ${\bf u}\in \F_{q^2}^n \setminus \C$ and the scalar $a\in \F_{q^2}^*$. We call them generalized extended codes, with the code denoted by $\C({\bf u},a)$.

We prove that any generalized extended code is monomially equivalent to the Hermitian dual of a code which is closely related to a second kind of extended code of $\C^{\perp_{\rm H}}$. Every $[n+1,k+1]_{q^2}$ linear code $\D$ with $d(\D^{\perp_{\rm H}})>1$ is monomially equivalent to the generalized extended code $\C({\bf u},a)$ of an $[n,k]_{q^2}$ linear code $\C$ for a fixed $a\in\F_{q^2}^{*}$ and some ${\bf u}\in\F_{q^2}^{n}$. We then characterize the Hermitian hull and Hermitian dual distance of $\C({\bf u},a)$ in terms of the position of ${\bf u}$ relative to $\C+\C^{\perp_{\rm H}}$ and the interaction between ${\bf u}$ and the minimum weight codewords of $\C^{\perp_{\rm H}}$, respectively. We obtain explicit criteria to independently control the expected Hermitian hull dimension and Hermitian dual distance of $\C({\bf u},a)$. In particular, several conditions for simultaneously increasing the Hermitian hull dimension and the Hermitian dual distance of $\C({\bf u},a)$ are derived. 

Applying these results to the Hermitian construction for EAQECCs gives us $267$ new EA qubit codes of lengths $n \leq 40$ and  
$14$ new EA qutrit codes of lengths $n \leq 25$ compared to the best-known codes in Grassl's code tables and the imporvements recorded in very recent works in the literature. Among the new parameter sets, we confirm improvements for $236$ qubit and $8$ qutrit codes.
\end{abstract}

\begin{IEEEkeywords}
Entanglement-assisted, quantum code, Hermitian dual distance, Hermitian hull, generalized extended code.
\end{IEEEkeywords}

\section{Introduction}
\IEEEPARstart{T}{hroughout} this paper, let $\F_q$ denote the {\em finite field} with $q$ elements and let
$\F_q^*=\F_q\setminus\{0\}$, where $q$ is a prime power. An
$[n,k,d]_{q^2}$ {\em linear code} $\C$ is a $k$-dimensional linear subspace of $\F_{q^2}^n$ with minimum Hamming distance $d:=d(\C)$. 
We use ${\bf 0}$ and ${\bf 1}$ to denote appropriate all zero and all one vectors, respectively. 

\subsection{Hermitian Hulls and Equivalent Linear Codes}

The {\em Hermitian inner product} of any two vectors 
${\bf x}=(x_1,x_2,\ldots,x_n)$ and ${\bf y}=(y_1,y_2,\ldots,y_n)$ in $\F_{q^2}^n$ is $\langle {\bf x}, {\bf y} \rangle_{\rm H}=\sum_{i=1}^n x_i y_i^q$. 
The {\em Hermitian dual} and the {\em Hermitian hull} of $\C$ are defined, respectively, by 
\[
\C^{\perp_{\rm H}}= \{{\bf y}\in \F_{q^2}^n \, : \, 
\langle {\bf x}, {\bf y} \rangle_{\rm H}=0 \mbox{ for all }{\bf x}\in \C\} 
\mbox{ and } 
\Hull_{\rm H}(\C)=\C\cap \C^{\perp_{\rm H}}.
\]
The code $\C$ is {\em Hermitian self-orthogonal} if $\Hull_{\rm H}(\C)=\C$. The code is {\em Hermitian dual-containing} if $\Hull_{\rm H}(\C)=\C^{\perp_{\rm H}}$. It is Hermitian self-orthogonal if and only if $\C^{\perp_{\rm H}}$ is Hermitian dual-containing.
If $\C$ is an $[n,k]_{q^2}$ linear code, 
then $\C^{\perp_{\rm H}}$ is an $[n,n-k]_{q^2}$ linear code 
and $\Hull_{\rm H}(\C)$ is an $[n,\ell]_{q^2}$ Hermitian self-orthogonal code, with $\ell$ being the hull dimension.

A linear code $\C$ can be completely described by a generator matrix $G:=G(\C)$ whose rows form a basis. A generator matrix $H$ of $\C^{\perp_{\rm H}}$ is a \emph{Hermitian parity-check matrix} of $\C$. Two linear codes $\C_1$ and $\C_2$ that can be generated, respectively, by $G_1$ and $G_2$ are {\em permutation equivalent} if there exists a permutation matrix $P$ such that $G_1 \, P$ is a generator matrix of $\C_2$. The two codes are {\em monomially equivalent} if there exists a monomial matrix $M$ such that $G_1 \, M$ is a generator matrix of $\C_2$. We note that $G_2$ is not required to be equal to $G_1P$ or $G_1M$. 
Equivalent codes have the same length, dimension, minimum distance, and Hermitian dual distance. Luo {\it et al.} in \cite{LEGL2023} have shown that permutation equivalent codes have the same Hermitian hull dimension. Luo {\it et al.} in \cite{LSEL2024} and Chen in \cite{Chen2023-DCC} established that monomial equivalence may either increase or decrease the hull dimension. 

\subsection{EAQECCs from the Second Kind of Extended Codes}

Quantum error-correcting codes (QECCs), or simply quantum codes, provide a fundamental mechanism that can protect quantum information from decoherence and other sources of quantum noise \cite{Shor1995,Steane1996,CalderbankShor1996,KKKS2006}. 
The entanglement-assisted variant, with the codes often called EAQECCs in short, utilizes pre-shared entanglement between the sender and the receiver. This allows one to construct quantum codes from arbitrary classical linear codes without requiring self-orthogonality or dual containment \cite{BrunDevetakHsieh2006,GHMR2019}. 

An $[[n,\kappa,\delta;c]]_q$ EAQECC encodes $\kappa$ logical qudits into $n$ physical qudits. Its (quantum) minimum distance is $\delta$, provided that $c$ maximally entangled pairs have been shared. We typically assume $c > 0$ to distinguish the EAQECC from a QECC as the latter does not require a pre-shared entanglement. To evaluate the performance of an $[[n,\kappa,\delta;c]]_q$ EAQECC, we consider its {\em rate}, {\em net rate}, and {\em error-correcting capacity}, which are defined, respectively, as 
\[
\rho:=\frac{\kappa}{n}, \quad 
\overline{\rho}:=\frac{\kappa-c}{n}, \quad \mbox{ and } \quad
e:= \left\lfloor \frac{\delta-1}{2} \right\rfloor. 
\]
Here, $\rho$ measures the information transmission efficiency, 
$\overline{\rho}$ reflects the effective coding rate after accounting for the entanglement, and $e$ is the number of arbitrary qudit errors that the EAQECC can correct. It has also been confirmed in \cite{LSEL2024} that EAQECCs can indeed outperform QECCs in terms of 
error-correcting capacity when the rate is fixed. EAQECCs with larger values of $\rho$, $\overline{\rho}$, and $e$ are preferred. An EAQECC $\mathcal{Q}_1$ is {\em better than} $\mathcal{Q}_2$, or 
$\mathcal{Q}_1$ has {\em improved parameters} compared to $\mathcal{Q}_2$, if $\mathcal{Q}_1$ has at least an improvement among the parameters $\rho$, {$\overline{\rho}$}, and $e$ when the other parameters are fixed. Equivalently, an $[[n_1,\kappa_1,\delta_1;c_1]]_q$ EAQECC $\mathcal{Q}_1$ is better 
than an $[[n_2,\kappa_2,\delta_2;c_2]]_q$ EAQECC $\mathcal{Q}_2$ 
if $n_1\leq n_2$, $\kappa_1\geq \kappa_2$, $\delta_1\geq \delta_2$, and $c_1\leq c_2$, 
with at least one of these inequalities being strict. 

The symplectic, CSS, and Hermitian constructions \cite{BrunDevetakHsieh2006,GHMR2019} form several general construction methods to derive EAQECCs from classical linear codes with respect to the symplectic, Euclidean, and Hermitian inner products, respectively. 
Compared to the other two construction methods, the Hermitian construction in \cite[Theorem 3]{GHMR2019} is recognized 
as having greater potential in constructing EAQECCs. We rephrase it in the next lemma. 
\begin{lemma}{\rm (\!\! Hermitian construction)}\label{lem.Hermitian-Construction}
If $\C$ is an $[n,k]_{q^2}$ linear code whose Hermitian dual is $\C^{\perp_{\rm H}}$, then there exists an $[[n,\kappa,\delta;c]]_q$ EAQECC $\mathcal{Q}$ with parameters
\[
c=k-\dim\bigl(\Hull_{\rm H}(\C)\bigr), \quad \kappa=n-2k+c, \quad 
\delta=
\begin{cases}
d^{\perp_{\rm H}},
& \text{if } \C^{\perp_{\rm H}}\subseteq \C,\\[0.6ex]
\wt\bigl(\C^{\perp_{\rm H}}\setminus \Hull_{\rm H}(\C)\bigr)\geq d^{\perp_{\rm H}},
& \text{otherwise}.
\end{cases}
\]
The code $\mathcal{Q}$ is {\em pure} if $\delta=d^{\perp_{\rm H}}$ and is {\em impure} otherwise.
\end{lemma}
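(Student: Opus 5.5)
This is the standard Hermitian construction, and I would prove it by reducing to the general entanglement-assisted stabilizer formalism. The route goes through a symplectic code over $\F_q$ obtained from $\C$ by a trace map.

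\textbf{Step 1: pass to an additive code with symplectic form.}
Fix $\omega \in \F_{q^2}\setminus\F_q$. Use the $\F_q$-linear bijection $\phi:\F_{q^2}^n\to\F_q^{2n}$ given by writing each coordinate in the basis $\{1,\omega\}$. Equip $\F_{q^2}^n$ with the trace-alternating form
\[
\langle {\bf x},{\bf y}\rangle_a=\mathrm{Tr}_{q^2/q}\!\left(\frac{\langle {\bf x},{\bf y}\rangle_{\rm H}-\langle {\bf y},{\bf x}\rangle_{\rm H}}{\omega^{2q}-\omega^{2}}\right).
\]
Under $\phi$ this form becomes the standard symplectic form on $\F_q^{2n}$, and $\phi$ also preserves symplectic weight.

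The key lemma is that, for an $\F_{q^2}$-linear code $\C$, the trace-alternating dual of $\C$ equals $\C^{\perp_{\rm H}}$. The reason is $\F_{q^2}$-linearity: scaling ${\bf x}$ by all $\lambda\in\F_{q^2}$ and using nondegeneracy of the trace forces $\langle {\bf x},{\bf y}\rangle_{\rm H}=0$.

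\textbf{Step 2: apply the EA stabilizer theorem.}
Let $S=\phi(\C)$, an $\F_q$-subspace of dimension $2k$, viewed as a (possibly non-abelian) subgroup of Paulis modulo phases. By the Brun--Devetak--Hsieh result, as generalized in \cite{GHMR2019}, one gets an $[[n,\kappa,\delta;c]]_q$ EAQECC with:
\begin{itemize}
\item $2c$ equal to the symplectic rank of the restricted form, i.e.\ $2c=\dim S-\dim(S\cap S^{\perp_s})$;
\item $\kappa=n-\dim S/2-c$...
\end{itemize}
To get the counting right, I would use the symplectic-basis decomposition of $S$. It splits into an isotropic part of dimension $s$ and $c$ hyperbolic pairs. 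Extending by $c$ ancilla qudits on Bob's side makes the group abelian with $s+2c$ generators on $n+c$ qudits. This leaves $\kappa=n+c-(s+2c)=n-s-c$ logical qudits.

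\textbf{Step 3: translate the quantities back to $\C$.}
\begin{itemize}
\item By Step 1, $S\cap S^{\perp_s}=\phi(\Hull_{\rm H}(\C))$, which has $\F_q$-dimension $2\ell$ with $\ell=\dim\Hull_{\rm H}(\C)$.
\item Hence $s=2\ell$ and $2c=2k-2\ell$, so $c=k-\ell$.
\item Then $\kappa=n-2\ell-(k-\ell)=n-k-\ell=n-2k+c$, as claimed.
\item The distance is the minimum symplectic weight of $S^{\perp_s}\setminus(S\cap S^{\perp_s})$. By Step 1 this equals $\wt(\C^{\perp_{\rm H}}\setminus\Hull_{\rm H}(\C))$, which is at least $d^{\perp_{\rm H}}$.
\item When $\C^{\perp_{\rm H}}\subseteq\C$, the hull is $\C^{\perp_{\rm H}}$ itself. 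The set difference is then empty, and the convention (as for pure dual-containing codes) takes $\delta=d^{\perp_{\rm H}}$.
\end{itemize}

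\textbf{Main obstacle.}
The delicate step is Step 2: justifying that a non-abelian $S$ yields a valid code with exactly $c$ ebits, via the symplectic Gram–Schmidt decomposition over $\F_q$. I would cite \cite{BrunDevetakHsieh2006,GHMR2019} for this rather than reprove it.
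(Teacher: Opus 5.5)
The paper gives no proof of this lemma; it simply quotes \cite[Theorem 3]{GHMR2019}. Your route is the standard way that Hermitian statement is obtained: move $\C$ to an $\F_q$-subspace of $\F_q^{2n}$ carrying a symplectic form, then invoke the symplectic entanglement-assisted stabilizer theorem. Your bookkeeping in Steps 2 and 3 is right: $s=2\ell$, $c=k-\ell$, $\kappa=n+c-(s+2c)=n-2k+c$, distance $\wt(\C^{\perp_{\rm H}}\setminus\Hull_{\rm H}(\C))$, and the convention in the dual-containing case. Delete the unfinished bullet $\kappa=n-\dim S/2-c$, which is wrong as written.

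The genuine problem is the form in Step 1. Put $h=\langle{\bf x},{\bf y}\rangle_{\rm H}$ and $D=\omega^{2q}-\omega^{2}$. Since $(h-h^{q})^{q}=-(h-h^{q})$ and $D^{q}=-D$, the quotient $(h-h^{q})/D$ already lies in $\F_q$. So $\mathrm{Tr}_{q^2/q}$ just multiplies it by $2$. For even $q$ (in particular $q=2$, the qubit case that dominates the paper) your form is therefore identically zero. The dual of $\C$ with respect to it is all of $\F_{q^2}^n$, it does not become the standard symplectic form, and the ``nondegeneracy of the trace'' step fails.

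Separately, $\omega^{2q}-\omega^{2}$ is the normalization for a normal basis $\{\beta,\beta^{q}\}$. For the basis $\{1,\omega\}$ it equals $(\omega^{q}-\omega)\,\mathrm{Tr}_{q^2/q}(\omega)$, which vanishes when $\omega^{q}=-\omega$; this can happen for odd $q$. For odd $q$ with $\mathrm{Tr}_{q^2/q}(\omega)\neq 0$, your form is the nonzero multiple $2/\mathrm{Tr}_{q^2/q}(\omega)$ of the correct one, so nothing breaks there.

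The fix is to use
\[
\langle{\bf x},{\bf y}\rangle_a=\mathrm{Tr}_{q^2/q}\!\left(\frac{\langle{\bf x},{\bf y}\rangle_{\rm H}}{\omega^{q}-\omega}\right)=\frac{\langle{\bf x},{\bf y}\rangle_{\rm H}-\langle{\bf y},{\bf x}\rangle_{\rm H}}{\omega^{q}-\omega}.
\]
In coordinates $x_i=a_i+b_i\omega$, $y_i=a_i'+b_i'\omega$, this is exactly $\sum_i(a_ib_i'-b_ia_i')$. With this form your key lemma goes through as sketched. Suppose $\mathrm{Tr}_{q^2/q}\bigl(\lambda h/(\omega^{q}-\omega)\bigr)=0$ for all $\lambda\in\F_{q^2}$, which is legitimate because $\C$ is $\F_{q^2}$-linear. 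Nondegeneracy of the trace pairing then gives $h=0$. Hence $\phi(\C)^{\perp_s}=\phi(\C^{\perp_{\rm H}})$ and $\phi(\C)\cap\phi(\C)^{\perp_s}=\phi(\Hull_{\rm H}(\C))$, and the rest of your argument is unchanged.
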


Many classes of EAQECCs with good parameters have been found by applying the Hermitian construction in Lemma \ref{lem.Hermitian-Construction} to specific families of classical linear codes. The works done, \textit{e.g.}, in \cite{LSEL2023,LEGL2023,FFLZ2019,Cao2021,LSEL2024,Chen2023-DCC} and many references therein can be consulted for further details. Notably, by using optimal and best-known linear codes documented in \cite{G2026-linear}, Luo {\it et al.} systematically constructed EAQECCs, particularly EA qubit and EA qutrit codes, of a wide range of lengths in \cite{LEGL2023}. Together with the codes obtained from \cite{G2021-PRA}, EA qubit codes of lengths up to $64$ and EA qutrit codes of lengths up to $36$ have been collected in \cite{G-binary-2025} and \cite{G-ternary-2025}, respectively. By using linear codes from the building-up construction \cite{LLS2025,K2023}, concatenated codes \cite{FCLLC2025}, the coordinate-wise juxtaposition of codes \cite{LLZS2024,LZ2024-DM,LSL2023}, nested GRS codes and related punctured codes \cite{CLZ2025}, and cyclic codes and punctured symplectic self-orthogonal quasi-cyclic codes \cite{ZK2026,LZ2024-JAMC}, some {\em sporadic examples} with improved parameters have also been discovered.

Sun, Ding, and Chen introduced and investigated the second kind of extended code $\overline{\C}({\bf u})$ of a linear code $\C$ in \cite{SDC2024FFA} and \cite{SDC2024DM} as a generalization of the standard extension technique. For an $[n,k]_{q^2}$ linear code $\C$ and a vector ${\bf u}=(u_1,u_2,\ldots,u_n)\in \F_{q^2}^n$, 
{\em the second kind of extended code} of $\C$ with respect to ${\bf u}$ is 
\begin{align}\label{eq.intro-extended-code}
\overline{\C}({\bf u})
=
\left\{
(c_1,c_2,\ldots,c_n,c_{n+1}):\
(c_1,c_2,\ldots,c_n)\in \C,\
c_{n+1}=\sum_{i=1}^{n}c_i u_i
\right\}.
\end{align}
If $\C$ has a generator matrix $G$ and a Hermitian parity-check matrix $H$, 
then the respective generator and Hermitian parity-check matrices of $\overline{\C}({\bf u})$ are 
\begin{align}\label{eq.extended-matrix}
\overline{G}:= 
\begin{pmatrix}
    G & G {\bf u}^{\top}
\end{pmatrix} \mbox{ and } 
\overline{H}=
\begin{pmatrix}
H & {\bf 0}^{\top}\\
{\bf u}^{q} & -1
\end{pmatrix}.   
\end{align}
In particular, the extended code $\overline{\C}({\bf u})$ become the standard extended code if ${\bf u}=-{\bf 1}$. A remarkable feature of this construction is its universality. As Sun {\it et al.} have proven in \cite[Theorem 12]{SDC2024FFA}, every $[n,k,d]_q$ linear code with $d>1$ is permutation equivalent to the extended code $\overline{\C'}({\bf u})$ of an $[n-1,k]_q$ linear code $\C'$ for some ${\bf u}\in \F_q^{n-1}$. In light of \eqref{eq.extended-matrix}, any linear code with minimum distance larger than $1$ can be expressed algebraically in terms of the second kind of extended code. 

\subsection{Our Motivations and Contributions} 

The preceding discussion leads to the following observations.
\begin{enumerate}
\item Classical linear codes with flexible parameters and explicit hull structures are likely to yield EAQECCs with parameters that correspond to better trade-offs. Lemma \ref{lem.Hermitian-Construction} deals directly with the key values of $\dim(\Hull_{\rm H}(\C))$ and $d^{\perp_{\rm H}}(\C)$.

\item The family of second kind of extended codes can indeed provide candidates of classical linear codes with flexible parameters in terms of \cite[Theorem 12]{SDC2024FFA} and explicit algebraic structures in terms of \eqref{eq.intro-extended-code} and \eqref{eq.extended-matrix}. 

\item Permutation equivalence preserves the Hermitian dual distance and the Hermitian hull dimension. Monomial equivalence preserves the Hermitian dual distance but may change the Hermitian hull dimension. To obtain EAQECCs with more flexible and better parameters, 
it is therefore natural to add monomial variations to second kind of extended codes and use the resulting codes as classical ingredients in the Hermitian construction in Lemma \ref{lem.Hermitian-Construction}.
\end{enumerate}

Motivated by the above observations, we introduce a modified version of the second kind of extended code and call it a
{\em generalized extended code}. Let $\C$ be an $[n,k]_{q^2}$ linear code.
Given a vector $\mathbf{u}\in \F_{q^2}^n$ and an element
$a\in \F_{q^2}^*$, we define the generalized extended code of $\C$ with respect to 
the extension vector $\mathbf{u}\in \F_{q^2}^n\setminus \C$ and the scalar $a\in \F_{q^2}^*$ by 
\begin{align}\label{eq.modified}
  \C({\bf u},a):=\{ {\bf c}_{({\bf u},a)}=({\bf c}+b {\bf u},~ ab) :~{\bf c}\in \C,~ b\in \F_{q^2}\},  
\end{align}
where the condition $\mathbf{u}\in \F_{q^2}^n\setminus \C$ 
excludes the trivial case in which $\C(\mathbf{u},a)$ contains the word $(0,0,\ldots,0,1)$. 
By Theorem \ref{th:gec-dual-extension-universality}, $\C({\bf u},a)$ is monomially
equivalent to the Hermitian dual of the second kind of extended code of $\overline{\C^{\perp_{\rm H}}}({\bf u}^q)$ 
and every $[n+1,k+1]_{q^2}$ linear code $\D$ with
$d(\D^{\perp_{\rm H}})>1$ is monomially equivalent to a generalized
extended code. 
Although the Hermitian hulls of a linear code and its Hermitian dual coincide,
the generalized extended code gives more flexibility in controlling the
Hermitian hull dimension, while retaining the advantages of second-kind
extended codes in terms of parameter flexibility and explicit algebraic
structure. We state the central question that this paper and answer it affirmatively.
\begin{problem}\label{prob.intro}
Can generalized extended codes $\C({\bf u},a)$ be used to produce improved EAQECCs?
\end{problem}

Our {\em main contributions} are summarized as follows, with the notation explained soon after in Subsection \ref{subsec.notation}. 

\begin{enumerate}
\item Theorem \ref{th:gec-dual-extension-universality} states the advantages of the generalized extended codes $\C({\bf u},a)$ over the second kind of extended codes. Remark \ref{remark:why-gec} provides further explanation.

To determine the number of required maximally entangled pairs in an EAQECC obtained from $\C({\bf u},a)$, Theorem \ref{th.hull} gives a complete characterization of
$\dim(\Hull_{\rm H}(\C({\bf u},a)))$ in terms of
$\dim(\Hull_{\rm H}(\C))$, according to the position of the extension vector
${\bf u}$ relative to the space $\C+\C^{\perp_{\rm H}}$. 
Although the general formula involves a $\{1\}$-inverse of $G \, G^\dagger$, where
$G$ is a generator matrix of $\C$, we prove that the resulting scalar is
independent of the chosen inverse.

With the help of the key reduction in Statement \ref{state.1}, 
Theorem \ref{th.hull222} gives a more explicit characterization of
$\dim(\Hull_{\rm H}(\C({\bf u},a)))$ and determines the explicit structures of
$\Hull_{\rm H}(\C({\bf u},a))$ in the three possible cases. 
The difference between Theorems \ref{th.hull} and \ref{th.hull222} is illustrated
in Remark \ref{rem:eaqecc-meaning-reduction} and Figure \ref{fig:reduced-search-space}. 

Lemma \ref{lem.uu} is on the existence of extension vectors
${\bf u}\in \C^{\perp_{\rm H}}\setminus\Hull_{\rm H}(\C)$ satisfying
${\bf u}{\bf u}^\dagger=0$ or ${\bf u}{\bf u}^\dagger\neq 0$. 
Based on the lemma, Theorem \ref{th:necessary_condition_hull} gives necessary
and sufficient conditions for $\dim(\Hull_{\rm H}(\C({\bf u},a)))=\dim(\Hull_{\rm H}(\C))$ 
and for $\dim(\Hull_{\rm H}(\C({\bf u},a)))=\dim(\Hull_{\rm H}(\C))+1$. 
As a consequence, Corollary \ref{coro.so} supplies necessary and sufficient
conditions to obtain Hermitian self-orthogonal codes and linear codes with large Hermitian hulls from $\C({\bf u},a)$.

\item To evaluate the error-correcting capability of an EAQECC obtained from $\C({\bf u},a)$, 
we determine in Theorem \ref{th.distance} that the Hermitian dual distance of $\C({\bf u},a)$ is 
either $d^{\perp_{\rm H}}$ or $d^{\perp_{\rm H}}+1$, where $d^{\perp_{\rm H}}$ is the Hermitian dual distance of the initial code $\C$. In the cases of ${\bf u}\in \F_{q^2}^n\setminus \C$ and ${\bf u}\in \C^{\perp_{\rm H}}\setminus\Hull_{\rm H}(\C)$, Theorems \ref{th.subcode_characterization} and \ref{th:dualspace_characterization} give two equivalent characterizations of the condition for $d(\C({\bf u},a)^{\perp_{\rm H}})=d^{\perp_{\rm H}}+1$ in terms of the existence of maximal subcodes of $\C^{\perp_{\rm H}}$ that satisfy specific properties. Moreover, we obtain a sufficient existence condition for an extension vector ${\bf u}$ in Theorem \ref{th.sufficient_increment} by an approach from finite geometry. 
Remarks \ref{rem:feasibility_conditions} and \ref{rem:sufficient_increment} give further discussions on the feasibility of the sufficient condition.

To obtain better EAQECCs, Theorem \ref{th.increasing_both} identifies when the Hermitian hull dimension and the Hermitian dual distance of $\C({\bf u},a)$ can be \emph{increased simultaneously}. Corollary \ref{cor:constructive_choice_u} confirms the existence of $a$ to achieve this goal for any given 
\[
{\bf u} \in \C^{\perp_{\rm H}} \setminus
\left(
\mathcal Z_{\rm H}
\cup
\bigcup_{{\bf e}\in \mathbf W_{\min}(\C^{\perp_{\rm H}})} S_{\bf e}
\right)\subseteq \C^{\perp_{\rm H}}\setminus \Hull_{\rm H}(\C),
\] 
where $\mathcal Z_{\rm H}$ is the set of self-orthogonal codewords in
$\C^{\perp_{\rm H}}$ and $S_{\bf e}$ is the maximal subcode of $\C^{\perp_{\rm H}}$ 
orthogonal to ${\bf e}\in \mathbf W_{\min}(\C^{\perp_{\rm H}})$. 
Example \ref{ex.classical} illustrates the operational meaning of Theorem \ref{th.increasing_both} and Corollary \ref{cor:constructive_choice_u} in the classical setting, 
where we show that, even when the base code $\C$ is not optimal, our generalized extended codes can produce optimal linear codes with larger Hermitian hulls compared to the records in Grassl's online tables \cite{G2026-linear}. 
\item Applying the above theoretical results, Theorem \ref{th:application-search-principle} and
Table \ref{tab:parameter-transformations} then translate the possible changes
$(\Delta_{\Hull_{\rm H}},\Delta_{d^{\perp_{\rm H}}})$ into six families of EAQECCs. 
Examples \ref{ex:binary-MI-example} and \ref{ex:improved-eaqecc-small} illustrate 
concrete ways to apply our results to obtain new and improved EA qubit and qutrit codes from $\C({\bf u},a)$.

Compared to the best-known EA qubit and qutrit codes in Grassl's online tables \cite{G-binary-2025,G-ternary-2025} and the latest records in \cite{LLS2025,K2023,FCLLC2025,LLZS2024,LZ2024-DM,CLZ2025,LZ2024-JAMC}, 
we obtain $267$ new qubit codes of lengths up to $40$ and $14$ qutrit codes of lengths up to $25$. Among these entanglement-assisted codes, $236$ qubit and $8$ qutrit are confirmed to have improved parameters. In fact, $180$ of them have \emph{simultaneous improvements in two or more parameters}. Such improvements are rather uncommon in the literature. 
Tables \ref{tab:breakthroughs2140} and \ref{tab:breakthroughs0118} list the $281$ new and improved EAQECCs obtained from $\C({\bf u},a)$. Table \ref{tab:change-distribution} summarizes their distribution. In addition, we find $10$ known EAQECCs with improved parameters compared to \cite{G-binary-2025,G-ternary-2025} in Table \ref{tab:literature-separated-eaqeccs}. 
As stated in Remark \ref{rem.known-dual}, these $10$ extra improvements have already been reported in the literature based on differing methods, while we obtain them in a unified manner.
\end{enumerate}

The rest of this paper is organized as follows. In Section \ref{sec.pre}, we recall basic
notation used in this paper and preliminary results on generalized extended codes. 
In Section \ref{sec.hull}, we study the Hermitian hulls of generalized extended codes. 
In Section \ref{sec.improvment}, we characterize when the Hermitian dual distance can be increased, and when the Hermitian hull
dimension and the Hermitian dual distance can be improved simultaneously. In
Section \ref{sec.application}, we present improved EA qubit and qutrit codes. Finally, Section \ref{sec.conclusion} concludes the paper and discusses possible research directions.
\section{Notation and preliminary discussions}
\label{sec.pre}
\subsection{Basic Notation and Two Useful Lemmas}\label{subsec.notation}
\begin{itemize}
    \item The \emph{sum} of two linear codes $\C_1$ and $\C_2$ of the same length is  
    $
    \C_1+\C_2:=\{\ccc_1+\ccc_2:~\ccc_1 \in \C_1,~\ccc_2\in \C_2\}. 
    $
    If $\C_1\cap \C_2=\{{\bf 0}\}$, then $\C_1+\C_2$ is called the {\em direct sum} of $\C_1$ and $\C_2$, denoted by $\C_1\oplus \C_2$.
 

    \item We use $G^\dagger$ and ${\bf u}^\dagger$ to denote the {\em conjugate transpose} of 
a matrix $G$ and a vector ${\bf u}$ over $\F_{q^2}$, respectively. 

    \item For any two vectors ${\bf u}$ and ${\bf v}$ in $\F_{q^2}^n$, 
    we say that ${\bf u}$ is {\em Hermitian orthogonal} to ${\bf v}$ if $\langle {\bf u}, {\bf v}\rangle_{\rm H}=0$.
    
    \item $\Col(G)$ denotes the {\em column space} of $G$.  
    
    \item $\mathbf{W}_{\min}(\C^{\perp_{\rm H}}):=\{{\bf e}\in \C^{\perp_{\rm H}}:~ \wt({\bf e})=d^{\perp_{\rm H}}\}$, 
    where $d^{\perp_{\rm H}}:=d(\C^{\perp_{\rm H}})$. 
    In general, $\mathbf{W}_{\min}(\C^{\perp_{\rm H}})$ cannot form a linear subspace. 

    \item For a finite-dimensional vector space $V$ over $\mathbb{F}_{q^2}$, 
    $\Hom(V, \mathbb{F}_{q^2})$ denotes the {\em space of all $\F_{q^2}$-linear functionals} from $V$ to $\mathbb{F}_{q^2}$. It is clear that $\dim(\Hom(V, \mathbb{F}_{q^2}))=\dim(V)$.  

    \item A subcode $S \subsetneq \C$ is called a {\em maximal subcode} if its {\em codimension} in $\C$ is $1$, that is, $\dim(\C) - \dim(S) = 1$.
    
    \item For a linear map $f$, $\ker(f)$ and $\operatorname{Im}(f)$ denote the {\em kernel} and the {\em image} of $f$, respectively.
    
    \item If $W$ is a subspace of a vector space $V$, then $V/W$ denotes the {\em quotient space} of $V$ modulo $W$. 
    
    \item We use $\phi$ to denote a {\em canonical map} or isomorphism. 
    In our context, the canonical linear map 
    \[
    \phi: \mathbb{F}_{q^2}^n \to \Hom(\C^{\perp_{\rm H}},\F_{q^2})
    \]
    is defined by $\phi(\mathbf{u})(\mathbf{x}) = \mathbf{x}\mathbf{u}^\dagger$ 
    for all $\mathbf{x} \in \C^{\perp_{\rm H}}$.

    \item For an $[n,k,d]_{q^2}$ linear code $\C$, the set 
    \[
    \C\times\{0\}:=\{({\bf c},0):~{\bf c}\in \C\}
    \]
    yields an $[n+1,k,d]_{q^2}$ linear code, 
    which is a subcode of $\C({\bf u},a)$ for any ${\bf u}\in \F_{q^2}^n$ and $a\in \F_{q^2}^*$.
\end{itemize}

\begin{lemma}{\rm (\!\! \cite[Lemma 4]{LEGL2023})}\label{lem:hull-dim-gram}
If $\C$ is an $[n,k]_{q^2}$ linear code with a generator matrix $G$, then
\[
\dim(\Hull_{\rm H}(\C))=k-\rank\!\left(G \, G^\dagger\right).
\]
\end{lemma}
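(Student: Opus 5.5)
The identity compares the dimension of $\Hull_{\rm H}(\C)=\C\cap\C^{\perp_{\rm H}}$ with the rank of the Gram matrix $G\,G^\dagger$. The plan is to describe the hull as the kernel of a linear map on coefficient vectors and then apply rank--nullity. Write every codeword of $\C$ uniquely as ${\bf x}G$ with ${\bf x}\in\F_{q^2}^k$; uniqueness holds because the $k$ rows of $G$ are linearly independent. A vector ${\bf c}={\bf x}G$ lies in $\C^{\perp_{\rm H}}$ if and only if it is Hermitian orthogonal to every row ${\bf g}_i$ of $G$, since the rows span $\C$ and the Hermitian inner product is linear in its first argument.

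The key step is to express this orthogonality in matrix form. We have $\langle {\bf x}G,{\bf g}_i\rangle_{\rm H}=({\bf x}G)\,{\bf g}_i^\dagger$, and stacking these conditions over $i=1,\dots,k$ gives the single equation ${\bf x}\,G\,G^\dagger={\bf 0}$. The map ${\bf x}\mapsto {\bf x}G$ is therefore a linear bijection from the left null space $\{{\bf x}\in\F_{q^2}^k : {\bf x}GG^\dagger={\bf 0}\}$ onto $\Hull_{\rm H}(\C)$. It is injective because $G$ has full row rank, and it lands in the hull by the equivalence just established.

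The conclusion then follows from rank--nullity for the $k\times k$ matrix $GG^\dagger$ acting on row vectors: its left null space has dimension $k-\rank(GG^\dagger)$. Hence $\dim(\Hull_{\rm H}(\C))=k-\rank(G\,G^\dagger)$.

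The only point needing care is bookkeeping rather than a real obstacle. Conjugation is semilinear, but it is applied only to the fixed second argument ${\bf g}_i$, so the condition on ${\bf x}$ remains genuinely $\F_{q^2}$-linear. For the same reason there is no conjugate on ${\bf x}$ in the equation ${\bf x}\,G\,G^\dagger={\bf 0}$, and the kernel being counted is an honest linear subspace.
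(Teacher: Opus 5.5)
Your proof is correct. Identifying $\Hull_{\rm H}(\C)$ with the left null space of $G\,G^\dagger$ through the injective map ${\bf x}\mapsto{\bf x}G$ and then applying rank--nullity is the standard argument for this fact; the paper gives no proof of its own and simply imports the lemma from \cite[Lemma 4]{LEGL2023}.

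The one step you leave implicit is the swap of arguments. The definition of $\C^{\perp_{\rm H}}$ gives $\langle{\bf g}_i,{\bf c}\rangle_{\rm H}=0$, whereas you use $\langle{\bf c},{\bf g}_i\rangle_{\rm H}=0$. The two are equivalent because $\langle{\bf c},{\bf g}_i\rangle_{\rm H}=\langle{\bf g}_i,{\bf c}\rangle_{\rm H}^{q}$, or equivalently by taking the conjugate transpose of $G{\bf c}^\dagger={\bf 0}$. This is worth stating explicitly.
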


\begin{lemma}{\rm (\!\! \cite[Theorem 12]{SDC2024FFA})}\label{lem.Sun-FFA}
Every $[n,k,d]_q$ linear code with $d>1$ is permutation equivalent to the extended code 
$\overline{\C'}({\bf u})$ of an $[n-1,k]_q$ linear code $\C'$ for some ${\bf u}\in \F_q^{n-1}$. 
\end{lemma}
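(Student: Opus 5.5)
The plan is to take the given code itself as the extended code. I will puncture it at its last coordinate and then show that the deleted coordinate is a linear function of the remaining ones. No nontrivial permutation should be needed; the identity already realizes the equivalence, and choosing a different coordinate to delete would only change which permutation is used.

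Let $\D$ be an $[n,k,d]_q$ linear code with $d>1$, and let $G$ be a $k\times n$ generator matrix of $\D$. Write $G=\begin{pmatrix} G' & {\bf g}^{\top}\end{pmatrix}$, where $G'$ consists of the first $n-1$ columns and ${\bf g}^{\top}$ is the last column. Let $\C'$ be the code generated by $G'$, that is, the punctured code of $\D$ on coordinate $n$.

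The first step is to show that the puncturing map $\D\to\C'$, $(c_1,\ldots,c_n)\mapsto(c_1,\ldots,c_{n-1})$, is injective. Its kernel consists of codewords supported only on coordinate $n$, hence of weight at most $1$. Since $d>1$, the only such codeword is ${\bf 0}$. It follows that $\rank(G')=k$, so $\C'$ is an $[n-1,k]_q$ code; in particular $k\le n-1$. The case $k=0$ is trivial and can be disposed of separately.

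The second step is to produce the extension vector ${\bf u}$. The matrix $G'$ is $k\times(n-1)$ of full row rank $k$, so its column space $\Col(G')$ is all of $\F_q^k$. Hence the linear system $G'{\bf u}^{\top}={\bf g}^{\top}$ has a solution ${\bf u}\in\F_q^{n-1}$. With this ${\bf u}$ we get $G=\begin{pmatrix} G' & G'{\bf u}^{\top}\end{pmatrix}$. By \eqref{eq.extended-matrix}, applied with the ordinary (bilinear) form over $\F_q$, this is exactly a generator matrix of $\overline{\C'}({\bf u})$. Equivalently, every codeword satisfies $c_n=\sum_{i=1}^{n-1}c_iu_i$, as required by \eqref{eq.intro-extended-code}. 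Therefore $\D=\overline{\C'}({\bf u})$.

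The only substantive point is the first step: the hypothesis $d>1$ is exactly what guarantees that puncturing does not lose dimension. Without it, $G'$ could be rank deficient and ${\bf g}$ might fail to lie in $\Col(G')$. The rest is linear algebra over a field.
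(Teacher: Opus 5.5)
Your proof is correct and complete. The hypothesis $d>1$ makes the puncturing map $\D\to\C'$ injective, so $G'$ has full row rank $k$. Hence the last column ${\bf g}^{\top}$ lies in $\Col(G')=\F_q^k$, and $G=\begin{pmatrix} G' & G'{\bf u}^{\top}\end{pmatrix}$ exhibits $\D=\overline{\C'}({\bf u})$ exactly, with no permutation needed.

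The paper does not prove this lemma itself; it quotes the result from Sun, Ding, and Chen. So there is no in-text proof to compare against, and your puncture-and-solve argument serves as a self-contained proof of the quoted fact.

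Your argument also clarifies what the hypothesis is doing. When $d>1$, every coordinate can be deleted without losing dimension, so the permutation in the statement is superfluous. If you allow a permutation, the same argument works under the weaker condition $k<n$: first move a coordinate outside some information set to the last position. The condition $k<n$ is also necessary, since $\overline{\C'}({\bf u})$ has dimension $\dim(\C')\leq n-1$.
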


\subsection{Preliminary Discussions on Generalized Extended Codes}
\begin{lemma}\label{lem.generator_matrix}
Let $\C$ be an $[n,k]_{q^2}$ linear code with generator matrix $G$ and Hermitian parity-check matrix $H$. For any $\mathbf{u}\in \mathbb{F}_{q^2}^n\setminus \C$ and $a\in \F_{q^2}^*$, the linear code $\C({\bf u},a)$ has parameters $[n+1,k+1]_{q^2}$ and respective generator and Hermitian parity-check matrices
\begin{align}\label{eq.generator-parity-matrix}
G_{({\bf u},a)}=
\begin{pmatrix}G & {\bf 0}\\
{\bf u} & a 
\end{pmatrix} \mbox{ and }
H_{({\bf u},a)}=
\begin{pmatrix}
H & -a^{-q}H{\bf u}^\dagger
\end{pmatrix}.
\end{align}
\end{lemma}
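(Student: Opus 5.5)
The proof reduces to three direct verifications: identify the generator matrix, show it has full rank, and show the proposed parity-check matrix is Hermitian orthogonal to it with the complementary rank.

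\textbf{Generator matrix.} By the definition \eqref{eq.modified}, every codeword of $\C({\bf u},a)$ has the form $({\bf c}+b{\bf u},ab)$ with ${\bf c}=\mathbf{m}G$ for some $\mathbf{m}\in\F_{q^2}^k$ and $b\in\F_{q^2}$. This equals $(\mathbf{m},b)\,G_{({\bf u},a)}$. Hence the row space of $G_{({\bf u},a)}$ is exactly $\C({\bf u},a)$, and the length is clearly $n+1$.

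\textbf{Dimension $k+1$.} It suffices to show the $k+1$ rows of $G_{({\bf u},a)}$ are linearly independent. Suppose $(\mathbf{m},b)\,G_{({\bf u},a)}={\bf 0}$.
\begin{itemize}
\item The last coordinate gives $ab=0$, so $b=0$ because $a\neq 0$.
\item Then $\mathbf{m}G={\bf 0}$, which forces $\mathbf{m}={\bf 0}$ since $G$ has full row rank $k$.
\end{itemize}
So $\rank(G_{({\bf u},a)})=k+1$. Independence holds for any ${\bf u}$; the hypothesis ${\bf u}\notin\C$ is needed only for the non-degeneracy remark in the definition, not for this count.

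\textbf{Parity-check matrix.} I will use that $H$ is $(n-k)\times n$ of rank $n-k$ with $G H^\dagger=0$. The matrix $H_{({\bf u},a)}$ has $n-k$ rows and contains $H$ as its first $n$ columns, so its rank is $n-k=(n+1)-(k+1)$. It therefore remains to verify
\[
G_{({\bf u},a)}H_{({\bf u},a)}^\dagger=0,
\]
where
\[
H_{({\bf u},a)}^\dagger=\begin{pmatrix} H^\dagger\\ \bigl(-a^{-q}H{\bf u}^\dagger\bigr)^\dagger\end{pmatrix}.
\]

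The one point needing care is the conjugate of the scalar in the last column. Conjugate transpose replaces $a^{-q}$ by $(a^{-q})^q=a^{-q^2}=a^{-1}$, because $a^{q^2}=a$ in $\F_{q^2}$. Also $(H{\bf u}^\dagger)^\dagger={\bf u}H^\dagger$. So the last row of $H_{({\bf u},a)}^\dagger$ is $-a^{-1}{\bf u}H^\dagger$.

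Now compute the two block rows of the product.
\begin{itemize}
\item First block row: $GH^\dagger+{\bf 0}\cdot(-a^{-1}{\bf u}H^\dagger)=0$.
\item Last row: ${\bf u}H^\dagger+a\cdot(-a^{-1}{\bf u}H^\dagger)={\bf u}H^\dagger-{\bf u}H^\dagger={\bf 0}$.
\end{itemize}

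Thus the rows of $H_{({\bf u},a)}$ lie in $\C({\bf u},a)^{\perp_{\rm H}}$. That space has dimension $n-k$, matching the rank of $H_{({\bf u},a)}$, so $H_{({\bf u},a)}$ generates $\C({\bf u},a)^{\perp_{\rm H}}$. This completes the proof.
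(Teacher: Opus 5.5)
Your proof is correct and follows the same route as the paper's: identify $G_{(\mathbf{u},a)}$ from the definition, note $\rank(H_{(\mathbf{u},a)})=n-k$, and check Hermitian orthogonality by block multiplication. The paper computes $H_{(\mathbf{u},a)}G_{(\mathbf{u},a)}^\dagger$, which avoids conjugating $a^{-q}$, whereas you compute its conjugate transpose; your remark that $\mathbf{u}\notin\C$ is not needed for the dimension count $k+1$ is also right, since $a\neq 0$ alone forces $b=0$.
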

\begin{IEEEproof}
Since  $\mathbf{u}\in \mathbb{F}_{q^2}^n\setminus \C$, we can use \eqref{eq.modified} to confirm that $\C({\bf u},a)$ has parameters $[n+1,k+1]_{q^2}$ and generator matrix $G_{({\bf u},a)}$. To confirm that $H_{({\bf u},a)}$ is a parity-check matrix of $\C({\bf u},a)$, 
we use two facts, namely, $\rank(H_{({\bf u},a)})=n-k$ and
\[
\begin{pmatrix}
H & -a^{-q}H{\bf u}^\dagger
\end{pmatrix} 
\begin{pmatrix}
G & {\bf 0}\\
{\bf u} & a
\end{pmatrix}^\dagger 
= 
\begin{pmatrix}
H & -a^{-q}H{\bf u}^\dagger
\end{pmatrix}
\begin{pmatrix}
G^\dagger & {\bf u}^\dagger\\
{\bf 0} & a^q
\end{pmatrix}
=
\begin{pmatrix}
HG^\dagger & H{\bf u}^\dagger-H{\bf u}^\dagger
\end{pmatrix}
= {\bf 0}_{(n-k)\times (k+1)}.
\]
Thus, the desired conclusion is confirmed.
\end{IEEEproof}

\begin{theorem}\label{th:gec-dual-extension-universality}
Let $\overset{\rm p}{\backsimeq}$ and $\overset{\rm m}{\backsimeq}$ denote, respectively, permutation equivalence and monomial equivalence. If $\C$ is an $[n,k]_{q^2}$ linear code, then the following statements hold.
\begin{enumerate}
\item For any ${\bf u}\in\F_{q^2}^{n}$ and $a\in\F_{q^2}^{*}$, the generalized extended code $\C({\bf u},a)$ is \emph{permutation equivalent} to the Hermitian dual of a second kind of extended code of $\C^{\perp_{\rm H}}$, that is,
\[
\C({\bf u},a)
\overset{\rm p}{\backsimeq}
\left(
\overline{\C^{\perp_{\rm H}}}({\bf u}^q)
\right)^{\perp_{\rm H}}. 
\]
\item Every $[n+1,k+1]_{q^2}$ linear code $\D$ with $d(\D^{\perp_{\rm H}})>1$ is \emph{monomially equivalent} to the generalized extended code $\C({\bf u},a)$ of an $[n,k]_{q^2}$ linear code $\C$ for a fixed $a\in\F_{q^2}^{*}$ and some ${\bf u}\in\F_{q^2}^{n}$.
\end{enumerate}
\end{theorem}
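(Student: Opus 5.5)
The plan is to compare Hermitian parity-check matrices: Lemma \ref{lem.generator_matrix} gives one for $\C({\bf u},a)$, and \eqref{eq.extended-matrix} gives a generator matrix of the second kind of extended code. Let $H$ be a generator matrix of $\C^{\perp_{\rm H}}$. For ${\bf v}\in\F_{q^2}^n$, \eqref{eq.extended-matrix} applied to $\C^{\perp_{\rm H}}$ says that $\overline{\C^{\perp_{\rm H}}}({\bf v})$ is generated by $\begin{pmatrix} H & H{\bf v}^{\top}\end{pmatrix}$. Since ${\bf u}^\dagger=({\bf u}^q)^{\top}$, the choice ${\bf v}={\bf u}^q$ gives the generator matrix $\begin{pmatrix} H & H{\bf u}^\dagger\end{pmatrix}$. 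By Lemma \ref{lem.generator_matrix}, $\C({\bf u},a)^{\perp_{\rm H}}$ is generated by $\begin{pmatrix} H & -a^{-q}H{\bf u}^\dagger\end{pmatrix}$. The two matrices differ only by the factor $\lambda:=-a^{-q}$ on the last column.

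For part 1, I will first record the exact identity obtained by choosing ${\bf v}=(-a^{-1}{\bf u})^q$. Then $H{\bf v}^{\top}=-a^{-q}H{\bf u}^\dagger$, so $\C({\bf u},a)^{\perp_{\rm H}}=\overline{\C^{\perp_{\rm H}}}\bigl((-a^{-1}{\bf u})^q\bigr)$. Taking Hermitian duals gives
\[
\C({\bf u},a)=\Bigl(\overline{\C^{\perp_{\rm H}}}\bigl((-a^{-1}{\bf u})^q\bigr)\Bigr)^{\perp_{\rm H}}.
\]
To pass to the vector ${\bf u}^q$ in the statement, observe that scaling one coordinate of a code by $\lambda$ scales that coordinate of its Hermitian dual by $\lambda^{-q}$. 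So $\C({\bf u},a)$ is obtained from $\bigl(\overline{\C^{\perp_{\rm H}}}({\bf u}^q)\bigr)^{\perp_{\rm H}}$ by the diagonal monomial matrix ${\rm diag}(1,\ldots,1,\lambda^{-q})$. Here I expect a problem: this matrix is a permutation matrix only when $a=-1$, which is the standard-extension case. For general $a$, what this computation certifies is monomial equivalence, consistent with the introduction's wording, or else the exact equality above with the modified extension vector. I would therefore state the result in that form, or read the claimed "permutation equivalence" up to the rescaling ${\bf u}\mapsto -a^{-1}{\bf u}$.

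For part 2, fix $a\in\F_{q^2}^*$ and let $\D$ be an $[n+1,k+1]_{q^2}$ code with $d(\D^{\perp_{\rm H}})>1$. Then $\D^{\perp_{\rm H}}$ is an $[n+1,n-k]$ code of minimum distance greater than $1$. By Lemma \ref{lem.Sun-FFA} (with $q^2$ in place of $q$), $\D^{\perp_{\rm H}}=\overline{\C'}({\bf v})P$ for some $[n,n-k]$ code $\C'$, some ${\bf v}\in\F_{q^2}^n$, and some permutation matrix $P$. Set $\C:=\C'^{\perp_{\rm H}}$, an $[n,k]$ code with $\C^{\perp_{\rm H}}=\C'$. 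Choose ${\bf u}:=-a{\bf v}^q$; then $(-a^{-1}{\bf u})^q={\bf v}^{q^2}={\bf v}$. By the identity in part 1, $\C({\bf u},a)^{\perp_{\rm H}}=\overline{\C'}({\bf v})$. Permutation matrices are real orthogonal, so they preserve the Hermitian inner product and commute with taking Hermitian duals. Hence $\D=\bigl(\overline{\C'}({\bf v})P\bigr)^{\perp_{\rm H}}=\C({\bf u},a)P$, which is a permutation, hence monomial, equivalence.

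One point remains to check. If ${\bf u}\in\C$, then \eqref{eq.modified} still defines a code of dimension $k+1$: it equals $\C\times\{0\}$ plus the span of $(0,\ldots,0,1)$. I will verify that the parity-check description and the resulting duality still hold in this case, so that Lemma \ref{lem.generator_matrix} can be applied without excluding any ${\bf u}$. The main obstacle throughout is the bookkeeping of conjugation and the last-coordinate scaling; the equivalence arguments themselves are routine once the exact identity is in place.
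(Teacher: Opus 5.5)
Your proposal is correct and follows the paper's own argument. You compare the Hermitian parity-check matrix from Lemma~\ref{lem.generator_matrix} with the generator matrix of $\overline{\C^{\perp_{\rm H}}}({\bf u}^q)$, observe that they differ by $\operatorname{diag}(1,\ldots,1,-a)$, and for Part~2 apply Lemma~\ref{lem.Sun-FFA} to $\D^{\perp_{\rm H}}$ and dualize.

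Your exact identity $\C({\bf u},a)=\bigl(\overline{\C^{\perp_{\rm H}}}(-a^{-q}{\bf u}^q)\bigr)^{\perp_{\rm H}}$ is the paper's subsequent remark. Using it is why your Part~2 yields permutation equivalence, which is slightly sharper than the monomial equivalence the paper's proof concludes.

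Your pending check for ${\bf u}\in\C$ is immediate. Since $a\neq 0$, $\operatorname{rank}G_{({\bf u},a)}=k+1$, and the orthogonality computation in Lemma~\ref{lem.generator_matrix} never uses ${\bf u}\notin\C$.

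Your doubt about Part~1 is justified. The paper's proof also only exhibits the monomial matrix $\operatorname{diag}(1,\ldots,1,-a)$, and the paper elsewhere calls this monomial equivalence. The permutation claim is in fact false for $a\neq -1$. For example, take $\C=\{{\bf 0}\}\subseteq\F_{q^2}^1$ and ${\bf u}=(1)$. Then $\C({\bf u},a)=\langle(1,a)\rangle$, while $\bigl(\overline{\C^{\perp_{\rm H}}}({\bf u}^q)\bigr)^{\perp_{\rm H}}=\langle(1,-1)\rangle$, and these are permutation equivalent only if $a=-1$.
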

\begin{IEEEproof}
\begin{enumerate}
\item If $G$ is a generator matrix of $\C$, then $G$ is a Hermitian parity-check matrix of $\C^{\perp_{\rm H}}$. By \eqref{eq.extended-matrix}, the second kind of extended code $\overline{\C^{\perp_{\rm H}}}({\bf u}^{q})$ has a Hermitian parity-check matrix
\[
\begin{pmatrix}
G & {\bf 0}^{\top}\\
{\bf u} & -1
\end{pmatrix}
\rightsquigarrow
\begin{pmatrix}
G & {\bf 0}^{\top}\\
{\bf u} & a
\end{pmatrix}
=
\begin{pmatrix}
G & {\bf 0}^{\top}\\
{\bf u} & -1
\end{pmatrix}
\, 
\operatorname{diag}(1,1,\ldots,1,-a),
\]
where $\operatorname{diag}(1,1,\ldots,1,-a)$ is an $(n+1)\times(n+1)$
monomial matrix with the specified elements along the diagonal and $\rightsquigarrow$ denotes the corresponding monomial
transformation. Taking Lemma~\ref{lem.generator_matrix} into account, the two matrices above are generator matrices of 
$\left(\overline{\C^{\perp_{\rm H}}}({\bf u}^{q})\right)^{\perp_{\rm H}}$ and $\C({\bf u},a)$, respectively. Thus, 
$\C({\bf u},a)
\overset{\rm p}{\backsimeq} 
\left(
\overline{\C^{\perp_{\rm H}}}({\bf u}^{q})
\right)^{\perp_{\rm H}}$.

\item Applying Lemma~\ref{lem.Sun-FFA} over $\F_{q^2}$ to
$\D^{\perp_{\rm H}}$, there exist an $[n,n-k]_{q^2}$ linear code $\D'$ and
a vector ${\bf v}\in\F_{q^2}^{n}$ such that $\D^{\perp_{\rm H}}$ is
permutation equivalent to $\overline{\D'}({\bf v})$. Since taking Hermitian dual preserves permutation equivalence, $\D$ is permutation equivalent to $\left(\overline{\D'}({\bf v})\right)^{\perp_{\rm H}}$. If $\C=(\D')^{\perp_{\rm H}}$, then $\C$ is an $[n,k]_{q^2}$ linear code and
$\D'=\C^{\perp_{\rm H}}$. By Statement 1), we have
\[
\D
\overset{\rm p}{\backsimeq}
\left(\overline{\D'}({\bf v})\right)^{\perp_{\rm H}}
=
\left(\overline{\C^{\perp_{\rm H}}}({\bf v})\right)^{\perp_{\rm H}}
\overset{\rm m}{\backsimeq}
\C({\bf v}^{q},a).
\]
Thus, $\D$ is monomially equivalent to $\C({\bf u},a)$ with ${\bf u}={\bf v}^{q}$ and any fixed $a\in\F_{q^2}^{*}$. 
\end{enumerate}
\end{IEEEproof}

\begin{remark} 
Keeping the notation in Theorem \ref{th:gec-dual-extension-universality}, the monomial equivalence in Statement 1) can
be sharpened after changing the extension vector in the second kind of extended code. More precisely,
\[
\C({\bf u},a)=
\left(\overline{\C^{\perp_{\rm H}}}(-a^{-q}{\bf u}^q)\right)^{\perp_{\rm H}}.
\]
Consequently, every $[n+1,k+1]_{q^2}$ linear code $\D$ with
$d(\D^{\perp_{\rm H}})>1$ is permutation equivalent to a generalized extended
code $\C({\bf u},a)$ of some $[n,k]_{q^2}$ linear code $\C$.
\end{remark}

The next remark explains why we introduce and work with generalized extended code $\C({\bf u},a)$ instead of working directly with the second kind of extended code $\overline{\C}({\bf u})$ in constructing EAQECCs with improved parameters.

\begin{remark}\label{remark:why-gec}
The Hermitian construction in Lemma~\ref{lem.Hermitian-Construction} identifies linear codes whose Hermitian hull dimension and Hermitian dual distance can be controlled effectively as the main classical ingredient. The hull dimension determines both the number of encoded qudits and the amount of required pre-shared entanglement. The
Hermitian dual distance gives a lower bound on the minimum distance of the resulting EAQECC. The generalized extended code $\C({\bf u},a)$ is designed to provide more flexibility than the second kind of extended code $\overline{\C}({\bf u})$ in controlling these two quantities.

\begin{enumerate}
\item We start with the Hermitian hull dimension. If $\C$ has
generator matrix $G$, then, by Lemma~\ref{lem:hull-dim-gram}, the dimension is determined by the \emph{rank} of the corresponding Gram matrix. For the second kind of extended code $\overline{\C}({\bf u})$, we can use \eqref{eq.extended-matrix} to infer
\begin{align*}
\overline{G} \, \overline{G}^{\dagger}
=
G \, G^{\dagger}+ G \, {\bf u}^{\top} \, {\bf u}^q \, G^{\dagger}
=
G \, \left(I_n+{\bf u}^{\top}{\bf u}^q \right) \, G^{\dagger},
\end{align*}
Thus, the Hermitian hull dimension of $\overline{\C}({\bf u})$ is governed by a rank-one perturbation of $G \, G^\dagger$. Although this form is useful, the influence of $I_n+{\bf u}^{\top}{\bf u}^q$ on the original Gram matrix $G \, G^\dagger$ is not easy to characterize.

In contrast, for the generalized extended code $\C({\bf u},a)$, it follows from \eqref{eq.generator-parity-matrix} that 
\begin{align}\label{eq.gram-block}
G_{({\bf u},a)} \, G_{({\bf u},a)}^\dagger
=
\begin{pmatrix}
G \, G^\dagger & G \, {\bf u}^\dagger\\
{\bf u} \, G^\dagger & {\bf u} \, {\bf u}^\dagger + a^{q+1}
\end{pmatrix}.
\end{align}
This block form separates the old Gram matrix $G \, G^\dagger$, the anti-diagonal terms determined by ${\bf u}$, and the adjustable scalar ${\bf u}{\bf u}^\dagger+a^{q+1}$. Hence, once the extension vector ${\bf u}$ is fixed, the nonzero element $a$ still provides an additional degree of freedom for adjusting the Hermitian hull dimension. This explicit block structure is the main reason why we can characterize the Hermitian hull of $\C({\bf u},a)$ effectively.

\item Regarding the Hermitian dual distance, both the second kind of extended codes and our generalized extended codes provide possible ways to control it through the choice of the extension vector ${\bf u}$. An equivalent relation in Theorem~\ref{th:gec-dual-extension-universality} shows that $\C({\bf u},a)$ is closely connected with the Hermitian dual of a second kind of extended code of $\C^{\perp_{\rm H}}$. Hence, such a code can still be useful for the purpose. In the case of our generalized extended code, once ${\bf u}$ is fixed, the Hermitian dual code has the simple form described in Lemma~\ref{lem.generator_matrix}. As will be shown later in Theorem~\ref{th.distance}, this leads to a direct criterion, which is independent of the choice of $a\in\F_{q^2}^{*}$, for determining whether the Hermitian dual distance of $\C({\bf u},a)$ increases. We can thus use ${\bf u}$ to control the Hermitian dual distance, while the scalar $a$ remains available for adjusting the Hermitian hull dimension.
\end{enumerate}
\end{remark}

\section{Hermitian hulls of generalized extended codes}\label{sec.hull}

This section is devoted to the study of Hermitian hulls of generalized extended codes. 
In Subsection~\ref{Subsec.hull1}, we characterize the dimension and structure of 
$\Hull_{\rm H}(\C({\bf u},a))$. In Subsection~\ref{Subsec.hull2}, we further derive explicit conditions under which suitable choices of ${\bf u}$ and $a$ allow us to control the dimension of $\Hull_{\rm H}(\C({\bf u},a))$.  

\subsection{Characterization of the Hermitian Hulls}\label{Subsec.hull1}
With the help of Lemma \ref{lem:hull-dim-gram} and \eqref{eq.gram-block}, 
we can now characterize the dimension of $\Hull_{\rm H}(\C({\bf u},a))$. 
We begin with the column space of $G \, G^\dagger$.

\begin{lemma}\label{lem:column-space-characterization}
Let $\C$ be an $[n,k]_{q^2}$ linear code with a generator matrix $G$ and let $\mathbf{u}\in \mathbb{F}_{q^2}^n$. 
We have $G \mathbf{u}^\dagger\in \Col\!\left(G \, G^\dagger\right)$ if and only if 
$\mathbf{u}\in \C+\C^{\perp_{\rm H}}$.
\end{lemma}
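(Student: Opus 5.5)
The plan is to prove the two implications directly, writing each condition as a statement about vectors in $\F_{q^2}^k$ and $\F_{q^2}^n$. Two observations do the work. First, $\Col(G\,G^\dagger)=\{G\,{\bf y}^\dagger : {\bf y}\in \C\}$. Indeed, every ${\bf c}\in\C$ has the form ${\bf c}={\bf x}G$ for some ${\bf x}\in\F_{q^2}^k$, so ${\bf c}^\dagger = G^\dagger {\bf x}^\dagger$, and every column vector ${\bf z}\in\F_{q^2}^k$ can be written as ${\bf x}^\dagger$. Second, for ${\bf w}\in\F_{q^2}^n$ we have $G{\bf w}^\dagger={\bf 0}$ if and only if ${\bf w}$ is Hermitian orthogonal to every row of $G$. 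Since the $i$-th entry of $G{\bf w}^\dagger$ is $\langle {\bf g}_i,{\bf w}\rangle_{\rm H}$, this holds if and only if ${\bf w}\in\C^{\perp_{\rm H}}$.

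For the ``if'' direction, write ${\bf u}={\bf c}+{\bf w}$ with ${\bf c}\in\C$ and ${\bf w}\in\C^{\perp_{\rm H}}$. Conjugate transposition is additive, so $G{\bf u}^\dagger = G{\bf c}^\dagger + G{\bf w}^\dagger = G{\bf c}^\dagger$. By the first observation this vector lies in $\Col(G\,G^\dagger)$.

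For the ``only if'' direction, suppose $G{\bf u}^\dagger = G\,G^\dagger{\bf z}$ for some ${\bf z}\in\F_{q^2}^k$. Write ${\bf z}={\bf x}^\dagger$, which is possible because entrywise conjugation by the Frobenius map $\alpha\mapsto\alpha^q$ is a bijection of $\F_{q^2}$. Set ${\bf c}={\bf x}G\in\C$, so that $G^\dagger{\bf z}={\bf c}^\dagger$. Then $G({\bf u}-{\bf c})^\dagger={\bf 0}$, and the second observation gives ${\bf u}-{\bf c}\in\C^{\perp_{\rm H}}$. Hence ${\bf u}\in\C+\C^{\perp_{\rm H}}$.

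There is no serious obstacle here. The only point needing care is the semilinearity of $\dagger$: it is additive but not $\F_{q^2}$-linear, so when moving scalars through it I will track the $q$-th powers explicitly. This bookkeeping is also what justifies the first observation that $\Col(G\,G^\dagger)$ equals $\{G{\bf c}^\dagger:{\bf c}\in\C\}$.
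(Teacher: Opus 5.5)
Your proposal is correct and follows the paper's proof essentially step for step. In both arguments, one direction writes the solution vector as a conjugate transpose to produce a codeword $\mathbf{c}\in\C$ with $G(\mathbf{u}-\mathbf{c})^\dagger=\mathbf{0}$. The other direction uses $G\mathbf{e}^\dagger=\mathbf{0}$ for $\mathbf{e}\in\C^{\perp_{\rm H}}$ to reduce $G\mathbf{u}^\dagger$ to $G\,G^\dagger\mathbf{x}^\dagger$.
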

\begin{IEEEproof}
We first consider the necessary condition. Since
$G \mathbf{u}^\dagger\in \Col\!\left(G \, G^\dagger\right)$, there is a column vector $\mathbf{y}$ of length $k$ over $\F_{q^2}$ such that
$G\mathbf{u}^\dagger = G \, G^\dagger \, \mathbf{y}$. If $\mathbf{c} = \mathbf{y}^\dagger \, G\in \C$, where $\mathbf{y}^\dagger$ is a row vector of length $k$ over $\F_{q^2}$, then 
\[
G\mathbf{c}^\dagger = G \, G^\dagger\mathbf{y} = G\mathbf{u}^\dagger,
\]
which implies that $G(\mathbf{u}-\mathbf{c})^\dagger ={\bf 0}$ and, therefore, $
\mathbf{u}-\mathbf{c} \in \C^{\perp_{\rm H}}$. Thus, 
\[
\mathbf{u}=\mathbf{c}+(\mathbf{u}-\mathbf{c})\in \C+\C^{\perp_{\rm H}}.
\]

Conversely, since $\mathbf{u}\in \C+\C^{\perp_{\rm H}}$, there exist $\mathbf{c}\in \C$ and $\mathbf{e}\in \C^{\perp_{\rm H}}$ such that $\mathbf{u} = \mathbf{c}+\mathbf{e}$. Since $\mathbf{c}\in \C$, we can write $\mathbf{c} =\mathbf{x} \, G$ for some vector $\mathbf{x} \in \F_{q^2}^k$. Since $\mathbf{e}\in \C^{\perp_{\rm H}}$, 
we conclude the proof by verifying that  
\[
G\mathbf{u}^\dagger = G\mathbf{c}^\dagger+G\mathbf{e}^\dagger = G \, G^\dagger \mathbf{x}^\dagger \in \Col\left(G \, G^\dagger\right).
\]
\end{IEEEproof}

Let a linear code $\C$ with generator matrix $G$ be given. It is well-known that the corresponding Gram matrix $G \, G^\dagger$ is nonsingular if and only if 
$\dim(\Hull_{\rm H}(\C)) =0$. When $\dim(\Hull_{\rm H}(\C))>0$, the Gram matrix $G \, G^\dagger$ is singular and, hence, does not have an inverse. In this case, we need to use the $\{1\}$-inverse of $G \, G^\dagger$ to characterize the Hermitian hull of $\C({\bf u},a)$.

\begin{definition}
Let $A$ be an $m \times n$ matrix over a field $\mathbb{F}$. 
A matrix $X$ of size $n \times m$ over $\mathbb{F}$ is called a $\{1\}$-\emph{inverse} (or a \emph{generalized inverse}) 
of $A$ if it satisfies the condition
\begin{equation*}
A \, X \,  A = A.
\end{equation*}
The set of all $\{1\}$-inverses of $A$ is denoted by $A\{1\}$. A matrix $X \in A\{1\}$ is typically denoted by $A^{-}$.
\end{definition}

Every matrix over $\mathbb{F}_{q^2}$ has at least one $\{1\}$-inverse. The $\{1\}$-inverse of $A$ is uniquely determined and coincides with the inverse $A^{-1}$ if and only if $A$ is a square matrix with full rank. Although the $\{1\}$-inverse of $G \, G^\dagger$ is \emph{not} unique, it can still be used to characterize the Hermitian hull of $\C({\bf u},a)$, as stated in the following theorem. 

\begin{theorem}\label{th.hull}
Let $\C$ be an $[n,k]_{q^2}$ linear code with 
$\dim(\Hull_{\rm H}(\C))=\ell$. Let $S=\mathbf{u} G^\dagger (G \, G^\dagger)^- \, G\mathbf{u}^\dagger$, 
where $(G \, G^\dagger)^-$ denotes the $\{1\}$-inverse of $G \, G^\dagger$. For any $\mathbf{u}\in \mathbb{F}_{q^2}^n\setminus \C$ and $a\in \mathbb{F}_{q^2}^*$, the linear code $\C(\mathbf{u},a)$ has parameters $[n+1,k+1]_{q^2}$, with 
\begin{align}\label{eq.hull111}
\dim(\Hull_{\rm H}(\C(\mathbf{u},a)))
=\begin{cases} 
\ell -1 , & \mbox{if } \mathbf{u}\notin \C+\C^{\perp_{\rm H}}, \\
\ell, & \mbox{if } \mathbf{u}\in (\C+\C^{\perp_{\rm H}})\setminus \C \mbox{ and } \mathbf{u}\mathbf{u}^\dagger + a^{q+1} \neq S, \\
\ell + 1 , & \mbox{if } \mathbf{u}\in (\C+\C^{\perp_{\rm H}})\setminus \C \mbox{ and } \mathbf{u}\mathbf{u}^\dagger + a^{q+1} = S.  
\end{cases}
\end{align}
Moreover, the value of $S$ is independent of the choice of the $\{1\}$-inverse $(G \, G^\dagger)^-$.
\end{theorem}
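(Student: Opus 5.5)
The plan is to compute the rank of the Gram matrix in \eqref{eq.gram-block} and then apply Lemma~\ref{lem:hull-dim-gram}. The parameters $[n+1,k+1]_{q^2}$ come from Lemma~\ref{lem.generator_matrix}. Since $G_{({\bf u},a)}$ has $k+1$ rows, we get
\[
\dim(\Hull_{\rm H}(\C({\bf u},a)))=k+1-\rank(M), \quad
M=\begin{pmatrix} A & {\bf b}\\ {\bf b}^\dagger & s\end{pmatrix},
\]
where $A=GG^\dagger$, ${\bf b}=G{\bf u}^\dagger$ and $s={\bf u}{\bf u}^\dagger+a^{q+1}$. Note that $A$ is Hermitian and ${\bf b}^\dagger={\bf u}G^\dagger$. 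Again by Lemma~\ref{lem:hull-dim-gram}, $\rank(A)=k-\ell$. The task therefore reduces to showing that $\rank(M)-\rank(A)$ equals $2$, $1$ or $0$ in the three cases.

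First suppose ${\bf u}\notin\C+\C^{\perp_{\rm H}}$. By Lemma~\ref{lem:column-space-characterization}, ${\bf b}\notin\Col(A)$, so $\rank(A\ {\bf b})=\rank(A)+1$. Because $A$ is Hermitian, $\Col(A)^\dagger$ is the row space of $A$. Hence ${\bf b}^\dagger$ is not in the row space of $A$. I would then argue that appending the row $({\bf b}^\dagger\ s)$ to $(A\ {\bf b})$ raises the rank by one more. Suppose instead that this row were a combination ${\bf y}^\dagger(A\ {\bf b})$ of the rows of $(A\ {\bf b})$. Then ${\bf b}^\dagger={\bf y}^\dagger A$ would lie in the row space of $A$, a contradiction. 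So $\rank(M)=\rank(A)+2$, and the hull dimension is $k+1-(k-\ell+2)=\ell-1$.

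Next suppose ${\bf u}\in(\C+\C^{\perp_{\rm H}})\setminus\C$. Then ${\bf b}=A{\bf x}$ for some ${\bf x}$, and taking conjugate transposes gives ${\bf b}^\dagger={\bf x}^\dagger A$. I would perform the block elimination
\[
\begin{pmatrix} I&{\bf 0}\\ -{\bf x}^\dagger&1\end{pmatrix} M \begin{pmatrix} I&-{\bf x}\\ {\bf 0}&1\end{pmatrix}=\begin{pmatrix}A&{\bf 0}\\{\bf 0}& s-{\bf x}^\dagger A{\bf x}\end{pmatrix}.
\]
This shows $\rank(M)=\rank(A)+1$ if $s\neq{\bf x}^\dagger A{\bf x}$, and $\rank(M)=\rank(A)$ otherwise. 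The hull dimension is then $\ell$ or $\ell+1$, respectively.

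The remaining step, which I expect to be the only subtle point, is to identify ${\bf x}^\dagger A{\bf x}$ with $S$ and to show that $S$ does not depend on the chosen $\{1\}$-inverse $A^-$. Write ${\bf b}=A{\bf x}$ and ${\bf b}^\dagger={\bf x}^\dagger A$. Then
\[
S={\bf b}^\dagger A^-{\bf b}={\bf x}^\dagger A A^- A{\bf x}={\bf x}^\dagger A{\bf x},
\]
which is clearly independent of $A^-$. It is also independent of the choice of ${\bf x}$: if $A{\bf x}=A{\bf x}'$, then ${\bf x}^\dagger A{\bf x}={\bf x}^\dagger A{\bf x}'=({\bf x}'^\dagger A{\bf x})^{\dagger}$, and both expressions equal ${\bf x}'^\dagger A{\bf x}'$ by symmetry. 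Substituting $S$ into the two sub-cases gives exactly \eqref{eq.hull111}.
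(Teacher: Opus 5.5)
Your proof is correct and follows the paper's argument essentially step for step: the same Gram-matrix rank count via Lemma~\ref{lem:hull-dim-gram}, the same Hermitian-symmetry argument that the rank jumps by $2$ when ${\bf u}\notin \C+\C^{\perp_{\rm H}}$, and the same block congruence producing the corner entry ${\bf u}{\bf u}^\dagger+a^{q+1}-S$. The only cosmetic difference is that you eliminate using an explicit solution ${\bf x}$ of $GG^\dagger{\bf x}=G{\bf u}^\dagger$ instead of ${\bf u}G^\dagger(GG^\dagger)^-$, and then identify ${\bf x}^\dagger GG^\dagger{\bf x}=S$, which is exactly what the paper does with its vectors ${\bf y}_i$ when proving that $S$ does not depend on the $\{1\}$-inverse.
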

\begin{IEEEproof}
By Lemmas \ref{lem:hull-dim-gram} and \ref{lem.generator_matrix}, we have 
\begin{align*}
G_{(\mathbf{u},a)} \, G_{(\mathbf{u},a)}^\dagger &= 
\begin{pmatrix} 
G \, G^\dagger & G\mathbf{u}^\dagger \\ 
\mathbf{u}G^\dagger & \mathbf{u}\mathbf{u}^\dagger + a^{q+1} 
\end{pmatrix}
\mbox{ and}\\
\dim(\Hull_{\rm H}(\C(\mathbf{u},a))) &= k+1 - \rank \left(G_{(\mathbf{u},a)} G_{(\mathbf{u},a)}^\dagger\right).
\end{align*}
We proceed to consider the following two cases. 
\begin{enumerate}[wide=0pt, labelindent=0pt,label=\textbf{Case \arabic*}:]
\item $\mathbf{u}\notin \C+\C^{\perp_{\rm H}}$. 
By Lemma \ref{lem:column-space-characterization}, we know that $G\mathbf{u}^\dagger\notin \Col(G \, G^\dagger)$, which implies $\rank((G \, G^\dagger \, G \mathbf{u}^\dagger))=\rank(G \, G^\dagger)+1$.  
Since $\left(G_{(\mathbf{u},a)} \, G_{(\mathbf{u},a)}^\dagger\right)^\dagger = G_{(\mathbf{u},a)} \, G_{(\mathbf{u},a)}^\dagger$, if the last row is a linear combination of the first $k$ rows, then there exists a row vector $\mathbf{e}$ of length $k$ such that $\mathbf{e}G \, G^\dagger = \mathbf{u}G^\dagger$. Taking the Hermitian conjugate yields $G \, G^\dagger \mathbf{e}^\dagger = G\mathbf{u}^\dagger$, which contradicts $G\mathbf{u}^\dagger \notin \Col(G \, G^\dagger)$. Hence, $\rank \left(G_{(\mathbf{u},a)} \, G_{(\mathbf{u},a)}^\dagger\right) = \rank(G \, G^\dagger) + 2$. 
Substituting the rank value into the formula for $\dim(\Hull_{\rm H}(\C(\mathbf{u},a)))$ gives 
\[
\dim(\Hull_{\rm H}(\C(\mathbf{u},a))) = k + 1 - (\rank(G \, G^\dagger) + 2) = (k - \rank(G \, G^\dagger)) - 1 = \ell -1. \]

\item $\mathbf{u}\in \C+\C^{\perp_{\rm H}}$. We perform the following block congruence transformation 
on $G_{(\mathbf{u},a)} G_{(\mathbf{u},a)}^\dagger$. 
\[
\begin{pmatrix} 
I_k & \mathbf{0} \\ 
-\mathbf{u}G^\dagger (G \, G^\dagger)^- & 1 
\end{pmatrix} 
G_{(\mathbf{u},a)} \, G_{(\mathbf{u},a)}^\dagger 
\begin{pmatrix} 
I_k & \mathbf{0} \\ 
-\mathbf{u}G^\dagger (G \, G^\dagger)^- & 1 
\end{pmatrix}^\dagger = 
\begin{pmatrix} 
G \, G^\dagger & \mathbf{0} \\ 
\mathbf{0} & \mathbf{u}\mathbf{u}^\dagger + a^{q+1} - S
\end{pmatrix},
\]
with $S:=\mathbf{u}G^\dagger (G \, G^\dagger)^- G\mathbf{u}^\dagger $ and $(G \, G^\dagger)^-$ being the generalized inverse matrix of $G \, G^\dagger$. We divide the analysis into two. 
\begin{enumerate}[wide=0pt,labelindent=0pt,label=\textbf{Subcase \arabic*}:]
    \item If $\mathbf{u}\mathbf{u}^\dagger + a^{q+1} \neq S$, then 
    $\rank \left(G_{(\mathbf{u},a)} \, G_{(\mathbf{u},a)}^\dagger\right) = \rank(G \, G^\dagger) + 1$. Thus,
    \[
    \dim(\Hull_{\rm H}(\C(\mathbf{u},a))) = k + 1 - (\rank(G \, G^\dagger) + 1) = \ell.
    \]
   \item If $\mathbf{u}\mathbf{u}^\dagger + a^{q+1} = S$, then 
    $\rank \left(G_{(\mathbf{u},a)} \, G_{(\mathbf{u},a)}^\dagger \right) = \rank(G \, G^\dagger)$. Thus, 
    \[
    \dim(\Hull_{\rm H}(\C(\mathbf{u},a))) = k + 1 - \rank(G \, G^\dagger) = \ell + 1.
    \]
\end{enumerate}
\end{enumerate}

Now that all cases have been covered, we establish that the value of $S$ is independent of the $\{1\}$-inverse $(G \, G^\dagger)^-$. If there are two row vectors ${\bf y}_1,~{\bf y}_2\in \F_{q^2}^k$ such that 
$G \, G^\dagger {\bf y}_1^\dagger=G{\bf u}^\dagger$ and $G \, G^\dagger {\bf y}_2^\dagger=G{\bf u}^\dagger$, then
\begin{align}\label{eq.S}
    S={\bf y}_i G \, G^\dagger (G \, G^\dagger)^{-}G \, G^\dagger {\bf y}_i^\dagger={\bf y}_iG \, G^\dagger{\bf y}_i^\dagger \mbox{ for } i=1,2.
\end{align}
Observing that $G \, G^\dagger ({\bf y}_1-{\bf y}_2)^\dagger={\bf 0}$, we let ${\bf z}={\bf y}_1-{\bf y}_2$. 
We use \eqref{eq.S} to infer that 
\begin{align*}
\begin{split}
S =  {\bf y}_1G \, G^\dagger{\bf y}_1^\dagger  
=  ({\bf z}+{\bf y}_2)G \, G^\dagger({\bf z}+{\bf y}_2)^\dagger  
=  {\bf z}G \, G^\dagger{\bf z}^\dagger+{\bf z}G \, G^\dagger{\bf y}_2^\dagger+
{\bf y}_2G \, G^\dagger{\bf z}^\dagger+{\bf y}_2G \, G^\dagger{\bf y}_2^\dagger 
=  {\bf y}_2G \, G^\dagger{\bf y}_2^\dagger. 
\end{split}
\end{align*}
Thus, $S$ is independent of the choice of $(G \, G^\dagger)^-$.
\end{IEEEproof}

\begin{remark}
Given an $[n,k]_{q^2}$ linear code with generator matrix $G$ and $\ell$-dimensional Hermitian hull, two facts are well-known. 
\begin{enumerate}
\item There exists a $k\times k$ nonsingular matrix $D$ such that 
\[
    D \, G \, G^\dagger \, D^\dagger=
    \begin{pmatrix}
    I_{k-\ell} & {\bf 0} \\ 
    {\bf 0} & {\bf 0}
    \end{pmatrix}.
\]
\item For any matrices $U$, $V$, and $W$ over $\F_{q^2}$, the $\{1\}$-inverse of $G \, G^\dagger$ must be of the form 
\[
(D^{\dagger})^{-1} 
\begin{pmatrix}
    I_{k-\ell} & U \\ 
    V & W
\end{pmatrix} 
D^{-1}.
\]
\end{enumerate}
Hence, there are $q^{2\ell(2k - \ell)}$ possible $\{1\}$-inverse matrices $(G \, G^\dagger)^{-}$ of $G \, G^\dagger$. 
We have just shown in Theorem \ref{th.hull} that the value of $S$ is \emph{independent of the choice of $(G \, G^\dagger)^{-}$}.
\end{remark}

We obtain $S={\bf y}_iG \, G^\dagger{\bf y}_i^\dagger~{\rm for}~i=1,2$ in \eqref{eq.S}. Since ${\bf y}_iG$ gives a codeword ${\bf c}_i\in \C$, if we let ${\bf e}_i :={\bf u}-{\bf c}_i$, then 
\[
G{\bf e}_i^{\dagger}
=
G{\bf u}^\dagger - G{\bf c}_i^{\dagger}
=
G \, G^\dagger{\bf y}_i^\dagger - G \, G^\dagger{\bf y}_i^\dagger
=
{\bf 0}
\]
and, hence, ${\bf e}_{i}\in \C^{\perp_{\rm H}}$ for $i=1,2$. 
Moreover, since ${\bf c}_i\in\C$ and ${\bf e}_i\in\C^{\perp_{\rm H}}$, we have
${\bf c}_i{\bf e}_i^\dagger={\bf e}_i{\bf c}_i^\dagger=0$. Therefore,
\[
{\bf u}{\bf u}^\dagger-S
=
({\bf c}_i+{\bf e}_i)({\bf c}_i+{\bf e}_i)^\dagger
-
{\bf y}_iG \, G^\dagger{\bf y}_i^\dagger
=
{\bf e}_i{\bf e}_i^\dagger.
\]
Thus, in this case, ${\bf u}{\bf u}^\dagger+a^{q+1}-S
=
{\bf e}_i{\bf e}_i^\dagger+a^{q+1}$, which is independent of the chosen representative ${\bf e}_i\in\C^{\perp_{\rm H}}$ in the decomposition ${\bf u}={\bf c}_i+{\bf e}_i$.

We highlight that this reduction follows directly from the definition of generalized extended codes. For any ${\bf c}_0\in\C$ and $a\in\F_{q^2}^{*}$,
we have
\[
\begin{aligned}
\C({\bf u}+{\bf c}_0,a)
&=\{({\bf c}+b({\bf u}+{\bf c}_0),ab):{\bf c}\in\C,\ b\in\F_{q^2}\}\\
&=\{(({\bf c}+b{\bf c}_0)+b{\bf u},ab):{\bf c}\in\C,\ b\in\F_{q^2}\} =\C({\bf u},a).
\end{aligned}
\]
The generalized extended code depends on the extension vector ${\bf u}$ only through its coset modulo $\C$. In particular, if ${\bf u}\in(\C+\C^{\perp_{\rm H}})\setminus\C$, then one may replace ${\bf u}$ by any representative in $\C^{\perp_{\rm H}}\setminus\Hull_{\rm H}(\C)$ without changing the resulting generalized extended code. The following statement formalizes the reduction.

\begin{statement}\label{state.1}
To study the generalized extended code $\C({\bf u},a)$ when ${\bf u}\in(\C+\C^{\perp_{\rm H}})\setminus\C$, it
suffices to consider ${\bf u} \in \C^{\perp_{\rm H}}\setminus\Hull_{\rm H}(\C)$ instead of $
{\bf u}\in(\C+\C^{\perp_{\rm H}})\setminus\C.$
\end{statement}

The statement provides the motivation to rewrite the second and third cases of \eqref{eq.hull111} in Theorem~\ref{th.hull} by taking the representative
${\bf u}\in\C^{\perp_{\rm H}}\setminus\Hull_{\rm H}(\C)$. In the next theorem, we determine the explicit form of $\Hull_{\rm H}(\C({\bf u},a))$. To do so, we need the following lemma.

\begin{lemma}\label{lem.maximal_subcode_general}
Let $\C$ be an $[n,k]_{q^2}$ linear code and let
${\bf u}\in \F_{q^2}^n$. Let
\[
\C_{\bf u}:=\left\{{\bf x}\in \C \, : \ {\bf u}{\bf x}^\dagger=0\right\}.
\]
If there exists an ${\bf x}\in \C$ such that ${\bf u}{\bf x}^\dagger\neq 0$, then $\C_{\bf u}$ is a maximal subcode of $\C$. In particular, $\dim(\C_{\bf u})=k-1$.
\end{lemma}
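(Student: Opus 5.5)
The plan is to realize $\C_{\bf u}$ as the kernel of a nonzero linear functional on $\C$ and then apply rank--nullity. Define $f:\C\to\F_{q^2}$ by $f({\bf x})=\langle {\bf x},{\bf u}\rangle_{\rm H}={\bf x}{\bf u}^\dagger$. This map is $\F_{q^2}$-linear in ${\bf x}$. The defining condition ${\bf u}{\bf x}^\dagger=0$ is equivalent to ${\bf x}{\bf u}^\dagger=0$, because ${\bf u}{\bf x}^\dagger=\sum_i u_ix_i^q=\bigl(\sum_i x_iu_i^q\bigr)^q=({\bf x}{\bf u}^\dagger)^q$, and the Frobenius map $z\mapsto z^q$ is injective on $\F_{q^2}$. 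Hence $\C_{\bf u}=\ker(f)$. In particular, $\C_{\bf u}$ is a linear subspace, even though the map ${\bf x}\mapsto{\bf u}{\bf x}^\dagger$ itself is only semilinear. Working with ${\bf x}{\bf u}^\dagger$ rather than ${\bf u}{\bf x}^\dagger$ is the one point that needs care.

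Next, the hypothesis supplies some ${\bf x}\in\C$ with ${\bf u}{\bf x}^\dagger\neq 0$, so $f({\bf x})\neq 0$. Thus $f$ is a nonzero functional into the one-dimensional space $\F_{q^2}$, which makes $\operatorname{Im}(f)=\F_{q^2}$. Rank--nullity then gives $\dim(\C_{\bf u})=\dim(\ker f)=k-1$. It follows that $\C_{\bf u}\subsetneq\C$ has codimension $1$, i.e.\ it is a maximal subcode in the sense of Subsection~\ref{subsec.notation}.

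No step here is a genuine obstacle. The only subtlety is the linearity issue handled in the first paragraph, namely the equivalence between ${\bf u}{\bf x}^\dagger=0$ and ${\bf x}{\bf u}^\dagger=0$.
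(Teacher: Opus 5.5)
Your proof is correct and follows the same route as the paper: realize $\C_{\bf u}$ as the kernel of a nonzero functional on $\C$ and apply rank--nullity. Your version is in fact slightly more careful. The paper applies rank--nullity directly to ${\bf x}\mapsto{\bf u}{\bf x}^\dagger$ and calls this map $\F_{q^2}$-linear, although it is only conjugate-linear. Your switch to ${\bf x}\mapsto{\bf x}{\bf u}^\dagger$, justified by the identity ${\bf u}{\bf x}^\dagger=({\bf x}{\bf u}^\dagger)^q$ and the injectivity of Frobenius, makes the linearity claim genuinely true.
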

\begin{IEEEproof}
Consider the map $f_{\bf u} : \C \to \F_{q^2}$ that sends $
{\bf x} \mapsto {\bf u} \, {\bf x}^\dagger$. Under the assumption, $f_{\bf u}$ is a nonzero $\F_{q^2}$-linear map and $\C_{\bf u}\subsetneq \C$. 
Since $\dim_{\F_{q^2}}(\F_{q^2})=1$, we have $\dim(\operatorname{Im}(f_{\bf u}))=1$. 
By the rank-nullity theorem in linear algebra,
\[
\dim(\C_{\bf u})=\dim(\ker(f_{\bf u}))=\dim(\C)-\dim(\operatorname{Im}(f_{\bf u}))=k-1.
\]
Hence, $\C_{\bf u}$ has codimension one in $\C$ and, thus, is a maximal subcode of $\C$. 
\end{IEEEproof}

\begin{theorem}\label{th.hull222}
Let $\C$ be an $[n,k]_{q^2}$ linear code. 
For any $\mathbf{u}\in \mathbb{F}_{q^2}^n\setminus \C$, 
and $a\in \mathbb{F}_{q^2}^*$, the following statements hold. 
\begin{enumerate}
\item If $\dim(\Hull_{\rm H}(\C))=\ell$, then $\C(\mathbf{u},a)$ is an $[n+1,k+1]_{q^2}$ linear code with 
\begin{align}\label{eq.hull222}
\dim(\Hull_{\rm H}(\C(\mathbf{u},a)))
=
\begin{cases}
\ell-1, & \mbox{if } \mathbf{u}\notin \C+\C^{\perp_{\rm H}},\\
\ell, & \mbox{if } \mathbf{u}\in \C^{\perp_{\rm H}} \setminus \Hull_{\rm H}(\C) \mbox{ and } {\bf u}{\bf u}^\dagger+a^{q+1}\neq 0, \\
\ell+1, & \mbox{if } \mathbf{u}\in \C^{\perp_{\rm H}}\setminus \Hull_{\rm H}(\C) \mbox{ and } 
{\bf u}{\bf u}^\dagger+a^{q+1}=0.
\end{cases}
\end{align}
\item The Hermitian hull of $\C(\mathbf{u},a)$ is given by 
\begin{align}\label{eq.hull333}
\Hull_{\rm H}(\C(\mathbf{u},a)) =
\begin{cases}
\Hull_{\rm H}(\C)_{\bf u}\times \{0\}, & \mbox{if } \mathbf{u}\notin \C+\C^{\perp_{\rm H}},\\
\Hull_{\rm H}(\C)\times \{0\}, & \mbox{if } \mathbf{u}\in \C^{\perp_{\rm H}}\setminus \Hull_{\rm H}(\C) \mbox{ and } {\bf u}{\bf u}^\dagger+a^{q+1}\neq 0,\\
\Hull_{\rm H}(\C)({\bf u},a), & \mbox{if } \mathbf{u}\in \C^{\perp_{\rm H}}\setminus \Hull_{\rm H}(\C)\mbox{ and }{\bf u}{\bf u}^\dagger+a^{q+1}=0,
\end{cases}
\end{align}
where $\Hull_{\rm H}(\C)_{\bf u}:=\{{\bf c} \, : \, 
{\bf c}\in \Hull_{\rm H}(\C),\ {\bf c}{\bf u}^\dagger=0\}$ is a maximal subcode of $\Hull_{\rm H}(\C)$.
\end{enumerate}
\end{theorem}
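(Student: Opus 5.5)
Statement 1) is a rewriting of Theorem~\ref{th.hull} via the reduction preceding Statement~\ref{state.1}, and for Statement 2) I will exhibit a subspace of the right dimension inside the hull.

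\textbf{Statement 1).} The first case is taken verbatim from Theorem~\ref{th.hull}. For ${\bf u}\in\C^{\perp_{\rm H}}\setminus\Hull_{\rm H}(\C)$ we have ${\bf u}\notin\C$, since ${\bf u}\in\C\cap\C^{\perp_{\rm H}}$ would put ${\bf u}$ in the hull, so ${\bf u}\in(\C+\C^{\perp_{\rm H}})\setminus\C$ and Theorem~\ref{th.hull} applies. We may take ${\bf y}={\bf 0}$ as a solution of $G\,G^\dagger{\bf y}^\dagger=G{\bf u}^\dagger={\bf 0}$. By \eqref{eq.S} this gives $S=0$, so the conditions ${\bf u}{\bf u}^\dagger+a^{q+1}\neq S$ and ${\bf u}{\bf u}^\dagger+a^{q+1}=S$ become the stated ones. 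By Statement~\ref{state.1}, this covers every ${\bf u}\in(\C+\C^{\perp_{\rm H}})\setminus\C$ up to the choice of representative.

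\textbf{Statement 2).} In each case I exhibit a subspace of $\Hull_{\rm H}(\C({\bf u},a))$ whose dimension equals the one from Statement 1), which forces equality. A vector $({\bf x},\beta)$ lies in $\C({\bf u},a)^{\perp_{\rm H}}$ if and only if it is orthogonal to every row of $G_{({\bf u},a)}$. This means $G{\bf x}^\dagger={\bf 0}$ and ${\bf u}{\bf x}^\dagger+a\beta^q=0$.

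\emph{Case ${\bf u}\notin\C+\C^{\perp_{\rm H}}$.} For ${\bf c}\in\Hull_{\rm H}(\C)_{\bf u}$, the vector $({\bf c},0)$ lies in $\C\times\{0\}\subseteq\C({\bf u},a)$. It satisfies $G{\bf c}^\dagger={\bf 0}$ and ${\bf u}{\bf c}^\dagger=0$, so it lies in the hull. I then show that $\Hull_{\rm H}(\C)_{\bf u}$ is a maximal subcode of $\Hull_{\rm H}(\C)$ using Lemma~\ref{lem.maximal_subcode_general}. This needs some ${\bf c}\in\Hull_{\rm H}(\C)$ with ${\bf u}{\bf c}^\dagger\neq0$. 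The key fact is $(\C+\C^{\perp_{\rm H}})^{\perp_{\rm H}}=\C^{\perp_{\rm H}}\cap\C=\Hull_{\rm H}(\C)$. If ${\bf u}$ were orthogonal to all of $\Hull_{\rm H}(\C)$, then ${\bf u}\in\Hull_{\rm H}(\C)^{\perp_{\rm H}}=\C+\C^{\perp_{\rm H}}$, a contradiction. This gives dimension $\ell-1$, matching Statement 1). (When $\ell=0$ this case cannot occur, since then $\C+\C^{\perp_{\rm H}}=\F_{q^2}^n$.)

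\emph{Case ${\bf u}\in\C^{\perp_{\rm H}}\setminus\Hull_{\rm H}(\C)$ with ${\bf u}{\bf u}^\dagger+a^{q+1}\neq0$.} Every $({\bf c},0)$ with ${\bf c}\in\Hull_{\rm H}(\C)$ is in the hull, because ${\bf u}\in\C^{\perp_{\rm H}}$ gives ${\bf u}{\bf c}^\dagger=0$. This subspace has dimension $\ell$, which equals the hull dimension from Statement 1).

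\emph{Case ${\bf u}{\bf u}^\dagger+a^{q+1}=0$.} Here $\Hull_{\rm H}(\C)({\bf u},a)=\{({\bf c}+b{\bf u},ab):{\bf c}\in\Hull_{\rm H}(\C),\,b\in\F_{q^2}\}$ is contained in $\C({\bf u},a)$. To check orthogonality to the rows of $G_{({\bf u},a)}$, note first that $G({\bf c}+b{\bf u})^\dagger={\bf 0}$ because ${\bf c},{\bf u}\in\C^{\perp_{\rm H}}$. Next, ${\bf u}({\bf c}+b{\bf u})^\dagger+a(ab)^q=b^q({\bf u}{\bf u}^\dagger+a^{q+1})=0$. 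Since ${\bf u}\notin\Hull_{\rm H}(\C)$, this set has dimension $\ell+1$, which again matches. It is the generalized extended code of $\Hull_{\rm H}(\C)$ in the paper's notation.

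The only nontrivial step is the maximality claim in the first case, which rests on the duality identity $(\C+\C^{\perp_{\rm H}})^{\perp_{\rm H}}=\Hull_{\rm H}(\C)$. The rest is direct verification combined with dimension counting from Statement 1).
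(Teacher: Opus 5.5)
Your proof is correct. Statement 1) follows the paper's route (Theorem~\ref{th.hull} plus Statement~\ref{state.1}); you only add the explicit observation that $S=0$ because $G\mathbf{u}^\dagger=\mathbf{0}$ when $\mathbf{u}\in\C^{\perp_{\rm H}}$.

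For Statement 2) you argue differently. The paper takes an arbitrary codeword $(\mathbf{c}+b\mathbf{u},ab)$ and solves the two orthogonality conditions \eqref{eq:hull-structure-1} and \eqref{eq:hull-structure-2} directly. In the first case it shows that $b\neq 0$ would force $\mathbf{u}\in\C+\C^{\perp_{\rm H}}$. In the second case the conditions reduce to $\mathbf{c}\in\Hull_{\rm H}(\C)$ and $b(\mathbf{u}\mathbf{u}^\dagger+a^{q+1})=0$. This characterizes membership in the hull exactly, so both inclusions come at once and the argument is independent of Statement 1). Combined with the maximal-subcode count, it even re-derives the dimensions in \eqref{eq.hull222} without the Gram-matrix rank computation and $\{1\}$-inverses of Theorem~\ref{th.hull}.

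You instead prove only that the candidate subspace lies in $\Hull_{\rm H}(\C(\mathbf{u},a))$, and then close the gap by matching dimensions with Statement 1). This is shorter, since you never need to show that $b$ must vanish. The cost is that Statement 2) becomes logically dependent on Statement 1), and hence on the rank argument of Theorem~\ref{th.hull}.

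A small aside: the dimension $\ell+1$ of $\Hull_{\rm H}(\C)(\mathbf{u},a)$ already follows from $a\neq 0$, because the last coordinate $ab$ determines $b$. Your appeal to $\mathbf{u}\notin\Hull_{\rm H}(\C)$ there is therefore unnecessary, though not wrong.
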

\begin{IEEEproof}
Confirming the first assertion is straightforward by Theorem \ref{th.hull} and Statement \ref{state.1}. 

For the second assertion, we select any codeword 
${\bf c}_{({\bf u},a)} = ({\bf c}+b{\bf u},ab) \in \C({\bf u},a)$ with ${\bf c}\in \C$ and $b\in \F_{q^2}$. Then ${\bf c}_{({\bf u},a)}\in \Hull_{\rm H}(\C({\bf u},a))$ if and only if ${\bf c}_{({\bf u},a)}$ is Hermitian orthogonal
to every codeword of $\C({\bf u},a)$. Since $\C({\bf u},a)$ is generated by $({\bf c}',0)$, with ${\bf c}'\in \C$, and $({\bf u},a)$, we get
\begin{align}
& ({\bf c}+b{\bf u}) \, {\bf c}'^\dagger =0 \mbox{ for all }
{\bf c}'\in \C \mbox{ and} \label{eq:hull-structure-1}\\
& ({\bf c}+b{\bf u}){\bf u}^\dagger+a^{q+1}b =0. \label{eq:hull-structure-2}
\end{align}

We have two cases to consider. 
\begin{enumerate}[wide=0pt, labelindent=0pt,label=\textbf{Case \arabic*}:]
\item ${\bf u}\notin \C+\C^{\perp_{\rm H}}$. 
By \eqref{eq:hull-structure-1}, we have ${\bf c}+b{\bf u}\in \C^{\perp_{\rm H}}$. 
If $b\neq 0$, then
\[
{\bf u}=b^{-1}\bigl(({\bf c}+b{\bf u})-{\bf c}\bigr)\in \C+\C^{\perp_{\rm H}},
\]
which is a contradiction. Hence, $b=0$ and \eqref{eq:hull-structure-1} yields ${\bf c}\in \Hull_{\rm H}(\C)$. We rewrite \eqref{eq:hull-structure-2} as 
${\bf c}{\bf u}^\dagger=0$ and infer that 
\[
\Hull_{\rm H}(\C({\bf u},a))
=
\{({\bf c},0) \, : \, {\bf c}\in \Hull_{\rm H}(\C),\ {\bf c}{\bf u}^\dagger=0\} = \Hull_{\rm H}(\C)_{\bf u}\times \{0\}.
\]
To verify that $\Hull_{\rm H}(\C)_{\bf u}$ is a maximal subcode of $\Hull_{\rm H}(\C)$, 
it suffices to confirm the existence of some $\mathbf{c} \in \Hull_{\rm H}(\C)$ such that $\mathbf{c}\mathbf{u}^\dagger \neq 0$ according to Lemma \ref{lem.maximal_subcode_general}. 
If $\mathbf{c}\mathbf{u}^\dagger = 0$ for all $\mathbf{c} \in \Hull_{\rm H}(\C)$, then 
\[
\mathbf{u} \in (\Hull_{\rm H}(\C))^{\perp_{\rm H}} = (\C \cap \C^{\perp_{\rm H}})^{\perp_{\rm H}} = \C + \C^{\perp_{\rm H}},\]
which contradicts the assumption that $\mathbf{u} \notin \C + \C^{\perp_{\rm H}}$. Since such a $\mathbf{c}$ exists, we conclude that $\Hull_{\rm H}(\C)_{\mathbf{u}}$ is a maximal subcode of $\Hull_{\rm H}(\C)$. 

\item ${\bf u}\in \C^{\perp_{\rm H}}\setminus \Hull_{\rm H}(\C)$. The equality in \eqref{eq:hull-structure-1} reduces to $\ccc \ccc'^\dagger=0$, which means that ${\bf c} \in \Hull_{\rm H}(\C) \subseteq \C.$ Combining with the fact that ${\bf u}\in \C^{\perp_{\rm H}}$, we write \eqref{eq:hull-structure-2} as ${\bf c}{\bf u}^\dagger+ b\bigl({\bf u}{\bf u}^\dagger + a^{q+1}\bigr) = b\bigl({\bf u}{\bf u}^\dagger+a^{q+1}\bigr)=0$.

If ${\bf u}{\bf u}^\dagger+a^{q+1}\neq 0$, then $b=0$ and, hence,
\[
\Hull_{\rm H}(\C({\bf u},a))=\{(\ccc, 0) \, : \, \ccc\in \Hull_{\rm H}(\C)\}=\Hull_{\rm H}(\C)\times\{0\}.
\]

If ${\bf u}{\bf u}^\dagger+a^{q+1}=0$, then $b$ is arbitrary. Thus, 
\[
\Hull_{\rm H}(\C({\bf u},a))
=
\{({\bf c}+b{\bf u},ab) \, : \, 
{\bf c}\in \Hull_{\rm H}(\C),\ b\in \F_{q^2}\}
=
\Hull_{\rm H}(\C)({\bf u},a).
\]
\end{enumerate}

Having covered the two cases, the proof is now complete.
\end{IEEEproof}

\begin{remark}\label{rem:eaqecc-meaning-reduction}
The reduction in Statement~\ref{state.1}, which Figure~\ref{fig:reduced-search-space} illustrates, is useful for the construction of EAQECCs from generalized extended codes. By the Hermitian construction in Lemma~\ref{lem.Hermitian-Construction}, the
Hermitian hull dimension determines the dimension and the required number of pre-shared entanglement in the resulting EAQECCs. Hence, in selecting extension
vectors that preserve or increase the hull dimension, it is enough to work in the reduced space
$\C^{\perp_{\rm H}}\setminus\Hull_{\rm H}(\C)$ instead of the whole space $(\C+\C^{\perp_{\rm H}})\setminus\C$. This reduction eliminates equivalent choices of extension vectors that result in the same generalized extended code. It
also provides a simpler setting in which the influence of the scalar $a\in\F_{q^2}^{*}$ on the Hermitian hull dimension can be described explicitly. Consequently, Statement~\ref{state.1} simplifies the process of controlling the Hermitian hulls by reducing it to the study of representatives within $\C^{\perp_{\rm H}}\setminus\Hull_{\rm H}(\C)$.
\end{remark}

\begin{figure}[htbp]
\centering
\begin{tikzpicture}[scale=0.7, every node/.style={transform shape}]

  \begin{scope}[shift={(0,0)}]
    \node[above, font=\bfseries] at (0.75, 3)
    {Original Space: $(\C+\C^{\perp_{\rm H}})\setminus\C$ in Theorem \ref{th.hull}};
    
    \fill[blue!15] (-2.9,-2.4) rectangle (4.4,2.4);
    \fill[white] (0,0) circle (1.6);
    
    \draw[thick, dashed] (-3,-2.5) rectangle (4.5,2.5);
    \node[below right, font=\small] at (-3, 2.5)
    {$\C+\C^{\perp_{\rm H}}$};
    \draw[thick] (0,0) circle (1.6);
    \draw[thick] (1.5,0) circle (1.6);
    
    \node[font=\large] at (-0.6, 0) {$\C$};
    \node[font=\large] at (2.1, 0) {$\C^{\perp_{\rm H}}$};
    \node[font=\small] at (0.75, 0) {$\Hull_{\rm H}(\C)$};
  \end{scope}

  \begin{scope}[shift={(8.5,0)}]
    \node[above, font=\bfseries] at (0.75, 3)
    {Reduced Space: $\C^{\perp_{\rm H}}\setminus\Hull_{\rm H}(\C)$ in Theorem \ref{th.hull222}};
    
    \begin{scope}
      \clip (1.5,0) circle (1.6);
      \fill[red!20] (-3,-2.5) rectangle (4.5,2.5);
      \fill[white] (0,0) circle (1.6);
    \end{scope}
    
    \draw[thick, dashed] (-3,-2.5) rectangle (4.5,2.5);
    \node[below right, font=\small] at (-3, 2.5)
    {$\C+\C^{\perp_{\rm H}}$};
    \draw[thick] (0,0) circle (1.6);
    \draw[thick] (1.5,0) circle (1.6);
    
    \node[font=\large] at (-0.6, 0) {$\C$};
    \node[font=\large] at (2.1, 0) {$\C^{\perp_{\rm H}}$};
    \node[font=\small] at (0.75, 0) {$\Hull_{\rm H}(\C)$};
  \end{scope}
\end{tikzpicture}
\caption{Reduction of the search space for extension vectors within $(\C+\C^{\perp_{\rm H}})\setminus\C$. 
By Statement~\ref{state.1}, every vector
in the original space can be replaced, without changing the generalized
extended code, by a representative in
$\C^{\perp_{\rm H}}\setminus\Hull_{\rm H}(\C)$.}
\label{fig:reduced-search-space}
\end{figure}

\subsection{Controlling the Hermitian Hull Dimension}\label{Subsec.hull2}
Based on \eqref{eq.hull222}, there must be an ${\bf u}\in \F_{q^2}^n\setminus \C$ such that $\dim(\Hull_{\rm H}(\C(\mathbf{u}, a))) = \dim(\Hull_{\rm H}(\C)) - 1$ for any $a\in \F_{q^2}^*$, provided that $\C\oplus \C^{\perp_{\rm H}}\neq \F_{q^2}^n$. This is equivalent to saying that $\C$ is not a Hermitian LCD code. On the other hand, we can also establish necessary and sufficient conditions that prevent $\dim(\Hull_{\rm H}(\C(\mathbf{u}, a)))$ from decreasing. We reproduce a useful lemma before proving another one.

\begin{lemma}\label{lem.Hermitian}
Let $\C$ be a linear code over $\mathbb{F}_{q^2}$. 
If $\mathbf{x}\mathbf{x}^\dagger = 0$ for all $\mathbf{x} \in \C$, then $\C$ is Hermitian self-orthogonal.    
\end{lemma}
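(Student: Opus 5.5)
The plan is a polarization argument: we show that the Hermitian form $\langle\cdot,\cdot\rangle_{\rm H}$ vanishes identically on $\C\times\C$ once its diagonal values $\mathbf{x}\mathbf{x}^\dagger$ all vanish. Fix arbitrary $\mathbf{x},\mathbf{y}\in\C$ and set $\beta=\mathbf{x}\mathbf{y}^\dagger$. Then $\mathbf{y}\mathbf{x}^\dagger=\beta^q$, since $(\mathbf{x}\mathbf{y}^\dagger)^q=\sum_i x_i^q y_i^{q^2}=\sum_i y_i x_i^q$. For every $\lambda\in\F_{q^2}$ the vector $\mathbf{x}+\lambda\mathbf{y}$ lies in $\C$, so by hypothesis
\[
0=(\mathbf{x}+\lambda\mathbf{y})(\mathbf{x}+\lambda\mathbf{y})^\dagger
=\mathbf{x}\mathbf{x}^\dagger+\lambda^{q}\,\mathbf{x}\mathbf{y}^\dagger+\lambda\,\mathbf{y}\mathbf{x}^\dagger+\lambda^{q+1}\mathbf{y}\mathbf{y}^\dagger
=\lambda^q\beta+\lambda\beta^q .
\]
Here we use that the Hermitian form is linear in the first argument and $q$-semilinear in the second.

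The key step is to pick $\lambda$ so that this identity forces $\beta=0$. The trace-type map $\lambda\mapsto\lambda^q\beta+\lambda\beta^q={\rm Tr}_{\F_{q^2}/\F_q}(\lambda^q\beta)$ vanishes for all $\lambda$. Since the trace is surjective, and $\lambda\mapsto\lambda^q\beta$ is a bijection of $\F_{q^2}$ whenever $\beta\neq0$, this forces $\beta=0$. More concretely, take $\lambda=1$ to get $\beta+\beta^q=0$. Then take $\lambda=\omega$ with $\omega\in\F_{q^2}\setminus\F_q$ to get $\omega^q\beta+\omega\beta^q=0$. Substituting $\beta^q=-\beta$ gives $(\omega^q-\omega)\beta=0$, and $\omega^q\neq\omega$ because $\omega\notin\F_q$, so $\beta=0$.

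Hence $\mathbf{x}\mathbf{y}^\dagger=0$ for all $\mathbf{x},\mathbf{y}\in\C$, which means $\C\subseteq\C^{\perp_{\rm H}}$. Therefore $\Hull_{\rm H}(\C)=\C$, i.e., $\C$ is Hermitian self-orthogonal.

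The only delicate point is the existence of $\omega\in\F_{q^2}\setminus\F_q$, and this is immediate because $q^2>q$. Unlike the Euclidean case in characteristic $2$, no parity issue arises here, since the semilinearity lets two independent choices of $\lambda$ separate $\beta$ from $\beta^q$.
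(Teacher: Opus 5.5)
Your proof is correct and follows the paper's argument: both expand $(\mathbf{x}+\lambda\mathbf{y})(\mathbf{x}+\lambda\mathbf{y})^\dagger$ for $\lambda=1$ and for some $\lambda\in\F_{q^2}\setminus\F_q$, then use $\lambda^q\neq\lambda$ to conclude $\mathbf{x}\mathbf{y}^\dagger=0$. The only cosmetic difference is that you substitute $\mathbf{y}\mathbf{x}^\dagger=\beta^q$ to reduce to a single unknown, whereas the paper solves the two resulting equations in $\mathbf{x}\mathbf{y}^\dagger$ and $\mathbf{y}\mathbf{x}^\dagger$ directly.
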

\begin{IEEEproof}
For any $\mathbf{x},\mathbf{y}\in\C$, since $\C$ is linear, we have
$\mathbf{x}+\mathbf{y}\in\C$. By the assumption, we have 
$
0=(\mathbf{x}+\mathbf{y})(\mathbf{x}+\mathbf{y})^\dagger
=\mathbf{x}\mathbf{y}^\dagger+\mathbf{y}\mathbf{x}^\dagger .
$ 
Similarly, taking $\lambda\in\mathbb F_{q^2}\setminus\mathbb F_q$, from
$\mathbf{x}+\lambda\mathbf{y}\in\C$ we get
$
0=\lambda^q\mathbf{x}\mathbf{y}^\dagger
+\lambda\mathbf{y}\mathbf{x}^\dagger .
$
Since $\lambda^q\ne\lambda$, combining this two equations yields that
$\mathbf{x}\mathbf{y}^\dagger=0$ for any $\mathbf{x},\mathbf{y}\in\C$. 
It follows that $\C\subseteq\C^{\perp_H}$. This completes the proof. 
\end{IEEEproof}

\begin{lemma}\label{lem.uu}
If $\C$ is an $[n,k]_{q^2}$ linear code with $\dim(\Hull_{\rm H}(\C)) = \ell$, then the following statements hold. 
\begin{enumerate}
    \item If $\ell = n-k-1$, then $\mathbf{u}\mathbf{u}^\dagger \neq 0$ for all $\mathbf{u} \in \C^{\perp_{\rm H}} \setminus \Hull_{\rm H}(\C)$.
    \item If $\ell < n-k-1$, then there exists a vector $\mathbf{u} \in \C^{\perp_{\rm H}} \setminus \Hull_{\rm H}(\C)$ such that $\mathbf{u}\mathbf{u}^\dagger = 0$.
    \item If $\ell \leq n-k-1$, then there exists a vector $\mathbf{w} \in \C^{\perp_{\rm H}} \setminus \Hull_{\rm H}(\C)$ such that $\mathbf{w} \mathbf{w}^\dagger \neq 0$.
\end{enumerate}
\end{lemma}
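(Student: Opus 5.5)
The natural approach is to pass to the quotient $V:=\C^{\perp_{\rm H}}/\Hull_{\rm H}(\C)$, which has dimension $m:=n-k-\ell\ge 1$ whenever $\ell\le n-k-1$. Since $\Hull_{\rm H}(\C)=\C\cap\C^{\perp_{\rm H}}$ is Hermitian orthogonal to all of $\C^{\perp_{\rm H}}$, the Hermitian form $B({\bf x},{\bf y})={\bf x}{\bf y}^\dagger$ restricted to $\C^{\perp_{\rm H}}$ descends to a well-defined Hermitian form $\overline{B}$ on $V$. It is nondegenerate there: the radical of $B$ on $\C^{\perp_{\rm H}}$ is $\C^{\perp_{\rm H}}\cap(\C^{\perp_{\rm H}})^{\perp_{\rm H}}=\Hull_{\rm H}(\C)$. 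Moreover, for ${\bf u}\in\C^{\perp_{\rm H}}$ and ${\bf h}\in\Hull_{\rm H}(\C)$ we have $({\bf u}+{\bf h})({\bf u}+{\bf h})^\dagger={\bf u}{\bf u}^\dagger$, so the quantity ${\bf u}{\bf u}^\dagger$ depends only on the class of ${\bf u}$ in $V$. All three statements therefore become statements about nonzero vectors of a nondegenerate Hermitian space $(V,\overline{B})$ of dimension $m$, with $m=1$ in Statement 1) and $m\ge 2$ in Statement 2).

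For Statement 3), we argue by contradiction using Lemma~\ref{lem.Hermitian}. Suppose ${\bf w}{\bf w}^\dagger=0$ for every ${\bf w}\in\C^{\perp_{\rm H}}\setminus\Hull_{\rm H}(\C)$. Then ${\bf x}{\bf x}^\dagger=0$ for all ${\bf x}\in\C^{\perp_{\rm H}}$, since the hull vectors are self-orthogonal anyway. Lemma~\ref{lem.Hermitian} then gives $\C^{\perp_{\rm H}}\subseteq(\C^{\perp_{\rm H}})^{\perp_{\rm H}}=\C$, hence $\Hull_{\rm H}(\C)=\C^{\perp_{\rm H}}$ and $\ell=n-k$, contradicting $\ell\le n-k-1$.

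For Statement 1), $V$ is one-dimensional and spanned by the class of any ${\bf u}\in\C^{\perp_{\rm H}}\setminus\Hull_{\rm H}(\C)$. If ${\bf u}{\bf u}^\dagger=0$, then $\overline{B}$ vanishes on a spanning vector of a one-dimensional space and is therefore identically zero, contradicting nondegeneracy. Equivalently, this is Statement 3) combined with the scaling ${\bf x}=\lambda{\bf u}+{\bf h}$, which gives ${\bf x}{\bf x}^\dagger=\lambda^{q+1}{\bf u}{\bf u}^\dagger$.

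For Statement 2), $m\ge 2$, and the main obstacle is producing a nonzero isotropic vector. I would take two classes $\bar{\bf v}_1,\bar{\bf v}_2$ that are orthogonal under $\overline{B}$ and anisotropic. These can be obtained from Statement 3) by Gram--Schmidt: pick $\bar{\bf v}_1$ with $\alpha_1:=\overline{B}(\bar{\bf v}_1,\bar{\bf v}_1)\ne0$. Its orthogonal complement in $V$ is again nondegenerate, of dimension $m-1\ge1$, so by the same argument it contains $\bar{\bf v}_2$ with $\alpha_2\ne0$. Both $\alpha_i$ lie in $\F_q^*$, because $\overline{B}(\bar{\bf x},\bar{\bf x})^q=\overline{B}(\bar{\bf x},\bar{\bf x})$. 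Then
\[
\overline{B}(\bar{\bf v}_1+\lambda\bar{\bf v}_2,\,\bar{\bf v}_1+\lambda\bar{\bf v}_2)=\alpha_1+\lambda^{q+1}\alpha_2 .
\]
Since the norm map $\lambda\mapsto\lambda^{q+1}$ from $\F_{q^2}^*$ onto $\F_q^*$ is surjective, we can choose $\lambda$ with $\lambda^{q+1}=-\alpha_1/\alpha_2$. Lifting $\bar{\bf v}_1+\lambda\bar{\bf v}_2$, which is nonzero because $\bar{\bf v}_1,\bar{\bf v}_2$ are linearly independent, to ${\bf u}\in\C^{\perp_{\rm H}}$ gives ${\bf u}\notin\Hull_{\rm H}(\C)$ and ${\bf u}{\bf u}^\dagger=0$.
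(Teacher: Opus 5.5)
Your proof is correct. For Statements 1) and 3) it is essentially the paper's argument. The paper proves 3) by the same contradiction via Lemma~\ref{lem.Hermitian}. It proves 1) by writing $\C^{\perp_{\rm H}}=\Hull_{\rm H}(\C)\oplus\langle\mathbf u\rangle$ and noting that $\mathbf u\mathbf u^\dagger=0$ would make $\mathbf u$ orthogonal to all of $\C^{\perp_{\rm H}}$, hence a hull vector. That is your nondegeneracy argument on the one-dimensional quotient, stated without quotients.

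The real difference is in Statement 2). The paper takes arbitrary $\mathbf v_1,\mathbf v_2$ linearly independent modulo the hull and expands $(x\mathbf v_1+y\mathbf v_2)(x\mathbf v_1+y\mathbf v_2)^\dagger=\alpha x^{q+1}+\beta xy^q+\beta^q x^qy+\gamma y^{q+1}$. It then simply asserts that this binary Hermitian form has a nontrivial zero. It even claims a zero with both coordinates nonzero, which can fail, e.g.\ when $\alpha=\beta=0\neq\gamma$ (though then $\mathbf v_1$ itself already works). You instead pass to the nondegenerate quotient $V$ and use Statement 3) twice, on $V$ and on $\bar{\mathbf v}_1^{\perp}$, to choose an orthogonal anisotropic pair. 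This removes the cross term $\beta$ and reduces the equation to $\alpha_1+\lambda^{q+1}\alpha_2=0$, which is solved at once by surjectivity of the norm $\F_{q^2}^*\to\F_q^*$.

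Your route costs a little set-up: well-definedness and nondegeneracy of the induced form on $V$, and nondegeneracy of $\bar{\mathbf v}_1^{\perp}$. There, ``the same argument'' should be read as the polarization identity from the proof of Lemma~\ref{lem.Hermitian}, which holds for any Hermitian form. In return it actually supplies the existence step the paper leaves unverified, so your version of 2) is the more complete one.
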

\begin{IEEEproof}
We define a quantity 
\[
\Delta := \dim(\C^{\perp_{\rm H}}) - \dim(\Hull_{\rm H}(\C)) = n - k - \ell.
\]
\begin{enumerate}
\item Since $\ell = n-k-1$, we have $\Delta = 1$. Let $\mathbf{u}$ be any vector in $\C^{\perp_{\rm H}} \setminus \Hull_{\rm H}(\C)$. By definition, $\Hull_{\rm H}(\C)$ consists of vectors in $\C^{\perp_{\rm H}}$ that are Hermitian orthogonal to the entire space $\C^{\perp_{\rm H}}$. Since $\Delta=1$ and $\mathbf{u} \in \C^{\perp_{\rm H}} \setminus \Hull_{\rm H}(\C)$, we can express $\C^{\perp_{\rm H}}$ as the direct sum 
\begin{align}\label{eq.direct-sum1}
    \C^{\perp_{\rm H}} = \Hull_{\rm H}(\C) \oplus \langle \mathbf{u} \rangle,
\end{align}
where $\langle \mathbf{u} \rangle$ denotes the $1$-dimensional subspace spanned by $\mathbf{u}$. Hence, any vector $\mathbf{x} \in \C^{\perp_{\rm H}}$ can be uniquely decomposed as $\mathbf{x} = c\mathbf{u} + \mathbf{h}$ for some element $c \in \mathbb{F}_{q^2}$ 
and a vector $\mathbf{h} \in \Hull_{\rm H}(\C)$. For a contradiction, let 
$\mathbf{u}\mathbf{u}^\dagger = 0$. Since ${\bf u}\in \C^{\perp_{\rm H}}$ and ${\bf h}\in \Hull_{\rm H}(\C)$, it is clear that ${\bf u}{\bf h}^\dagger=0$. 
Therefore, $\mathbf{u}(c\mathbf{u} + \mathbf{h})^\dagger=0$ for any $c \in \mathbb{F}_{q^2}$, which implies that $\mathbf{u}$ is Hermitian orthogonal to every vector $\mathbf{x} \in \C^{\perp_{\rm H}}$. Since $\mathbf{u} \in \Hull_{\rm H}(\C)$ obviously contradicts $\mathbf{u} \in \C^{\perp_{\rm H}} \setminus \Hull_{\rm H}(\C)$, we conclude that $\mathbf{u}\mathbf{u}^\dagger \neq 0$.

\item Since $\ell < n-k-1$, we have $\Delta = n - k - \ell \geq 2$. 
Hence, there exist at least two vectors 
$\mathbf{v}_1, \mathbf{v}_2 \in \C^{\perp_{\rm H}} \setminus \Hull_{\rm H}(\C)$ such that $\mathbf{v}_1, \mathbf{v}_2$, and $\Hull_{\rm H}(\C)$ are linearly independent. We consider a linear combination $\mathbf{u} = x\mathbf{v}_1 + y\mathbf{v}_2$, with $x, y \in \mathbb{F}_{q^2}$, and confirm that 
\begin{align}\label{eq.uu111}
\mathbf{u}\mathbf{u}^\dagger 
&= (x\mathbf{v}_1 + y\mathbf{v}_2)(x\mathbf{v}_1 + y\mathbf{v}_2)^\dagger 
= x^{q+1} \, \mathbf{v}_1\mathbf{v}_1^\dagger + x \, y^q \, \mathbf{v}_1\mathbf{v}_2^\dagger + x^q \, y \,  \mathbf{v}_2\mathbf{v}_1^\dagger + y^{q+1} \, \mathbf{v}_2\mathbf{v}_2^\dagger.
\end{align}
Let $\alpha = \mathbf{v}_1\mathbf{v}_1^\dagger$, 
$\beta = \mathbf{v}_1\mathbf{v}_2^\dagger$, and 
$\gamma = \mathbf{v}_2\mathbf{v}_2^\dagger$. Hence, $\alpha, \gamma \in \mathbb{F}_q$ and $\beta \in \mathbb{F}_{q^2}$. Since $\mathbf{v}_2\mathbf{v}_1^\dagger = \beta^q$, we can use \eqref{eq.uu111} to infer that 
$\mathbf{u}\mathbf{u}^\dagger = \alpha \, x^{q+1} + \beta \,  x \, y^q + \beta^q \, x^q \, y + \gamma \, y^{q+1}$. 
To find a vector satisfying $\mathbf{u}\mathbf{u}^\dagger = 0$, it suffices to 
solve the following homogeneous polynomial equation of degree $q+1$.
\begin{align}\label{eq.homogeneous}
\alpha \, x^{q+1} + \beta \, x \, y^q + \beta^q \, x^q \, y + \gamma \, y^{q+1} = 0
\end{align}
in variables $x$ and $y$ over $\mathbb{F}_{q^2}$, with the evaluation being restricted to the subfield $\mathbb{F}_q$. We verify that \eqref{eq.homogeneous} always admits a non-trivial solution 
$(x_0, y_0) \in (\F_{q^2}^*)^2$. By setting $(x, y)$ to be this non-trivial solution $(x_0, y_0)$, we immediately obtain a vector $\mathbf{u} = x_0 \mathbf{v}_1 + y_0 \mathbf{v}_2$ that satisfies $\mathbf{u}\mathbf{u}^\dagger = 0$. Furthermore, since $\mathbf{v}_1$ and $\mathbf{v}_2$ are linearly independent modulo $\Hull_{\rm H}(\C)$, 
their non-trivial linear combination cannot belong to $\Hull_{\rm H}(\C)$. This ensures that $\mathbf{u} \in \C^{\perp_{\rm H}} \setminus \Hull_{\rm H}(\C)$. 
\item By extending the direct sum decomposition in \eqref{eq.direct-sum1} to 
$\C^{\perp_{\rm H}} = \Hull_{\rm H}(\C) \oplus \text{span}\{\mathbf{v}_{\ell+1}, \mathbf{v}_{\ell+2}, \dots, \mathbf{v}_{n-k}\}$, 
the third assertion can be proved similarly by contradiction, where the 
$n-k-\ell$ vectors $\mathbf{v}_{\ell+1}, \mathbf{v}_{\ell+2}, \dots, \mathbf{v}_{n-k} \in \C^{\perp_{\rm H}} \setminus \Hull_{\rm H}(\C)$ 
are linearly independent. In short, if $\mathbf{w}\mathbf{w}^\dagger = 0$ holds for all $\mathbf{w} \in \C^{\perp_{\rm H}} \setminus \Hull_{\rm H}(\C)$, 
then $\mathbf{x} \mathbf{x}^\dagger = 0$ for all $\mathbf{x} \in \C^{\perp_{\rm H}}$. By Lemma \ref{lem.Hermitian}, the code $\C^{\perp_{\rm H}}$ is Hermitian self-orthogonal, that is, $\C^{\perp_{\rm H}} \subseteq \C$ and, hence, $\ell = n-k$, contradicting the hypothesis $\ell \leq n-k-1$. 
\end{enumerate}
The proof is now complete.
\end{IEEEproof}

Using Lemma \ref{lem.uu}, we can establish necessary and sufficient condition for the existence of extension vectors that either preserve or increase the Hermitian hull dimension.

\begin{theorem}\label{th:necessary_condition_hull}
If $\C$ is an $[n,k]_{q^2}$ linear code with $\dim(\Hull_{\rm H}(\C)) = \ell$, then the following assertions hold. 
\begin{enumerate}
\item There exists a vector $\mathbf{u}_1 \in \mathbb{F}_{q^2}^n \setminus \C$ 
and an element $a_1 \in \mathbb{F}_{q^2}^*$ such that $\dim(\Hull_{\rm H}(\C(\mathbf{u}_1, a_1))) = \ell$ 
if and only if $\ell < n-k$ and $(q, \ell)\neq (2,n-k-1)$.

\item There exists a vector $\mathbf{u}_2 \in \mathbb{F}_{q^2}^n \setminus \C$ 
and an element $a_2 \in \mathbb{F}_{q^2}^*$ such that $\dim(\Hull_{\rm H}(\C(\mathbf{u}_2, a_2))) = \ell+1$ 
if and only if $\ell < n-k$.  
\end{enumerate}
\end{theorem}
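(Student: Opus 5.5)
The proof reduces both assertions to Theorem~\ref{th.hull222}, Statement~\ref{state.1} and Lemma~\ref{lem.uu}. By Theorem~\ref{th.hull222}, any ${\bf u}\notin\C+\C^{\perp_{\rm H}}$ gives hull dimension $\ell-1$. Hence dimension $\ell$ or $\ell+1$ can only occur for ${\bf u}\in(\C+\C^{\perp_{\rm H}})\setminus\C$. By Statement~\ref{state.1}, we may take such a ${\bf u}$ to lie in $\C^{\perp_{\rm H}}\setminus\Hull_{\rm H}(\C)$. This set is nonempty if and only if $\Hull_{\rm H}(\C)\neq\C^{\perp_{\rm H}}$, i.e.\ if and only if $\ell<n-k$. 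So $\ell<n-k$ is necessary in both assertions, and under it the question becomes whether we can realize ${\bf u}{\bf u}^\dagger+a^{q+1}\neq 0$ (for dimension $\ell$) or ${\bf u}{\bf u}^\dagger+a^{q+1}=0$ (for dimension $\ell+1$) with some ${\bf u}\in\C^{\perp_{\rm H}}\setminus\Hull_{\rm H}(\C)$ and $a\in\F_{q^2}^*$. We use two standard facts. First, ${\bf u}{\bf u}^\dagger\in\F_q$. Second, as $a$ ranges over $\F_{q^2}^*$, the norm $a^{q+1}$ ranges over all of $\F_q^*$.

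For assertion 2), take any ${\bf w}\in\C^{\perp_{\rm H}}\setminus\Hull_{\rm H}(\C)$ with ${\bf w}{\bf w}^\dagger\neq0$, which exists by Lemma~\ref{lem.uu}\,3). Since $-{\bf w}{\bf w}^\dagger\in\F_q^*$ and the norm is surjective onto $\F_q^*$, we can choose $a$ with $a^{q+1}=-{\bf w}{\bf w}^\dagger$. Then Theorem~\ref{th.hull222} gives dimension $\ell+1$. This settles the sufficiency, and the necessity is the reduction above.

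For assertion 1), the condition needed is the \emph{existence} of ${\bf u}$ and $a$ with ${\bf u}{\bf u}^\dagger\neq -a^{q+1}$. We split into cases.
\begin{itemize}
\item If $\ell<n-k-1$, Lemma~\ref{lem.uu}\,2) gives ${\bf u}$ with ${\bf u}{\bf u}^\dagger=0$. Then ${\bf u}{\bf u}^\dagger+a^{q+1}=a^{q+1}\neq0$ for every $a$.
\item If $\ell=n-k-1$ and $q>2$, take any ${\bf u}$ in the set; its value ${\bf u}{\bf u}^\dagger$ is a nonzero element of $\F_q$ by Lemma~\ref{lem.uu}\,1). Because $|\F_q^*|\ge2$, some $a$ has $a^{q+1}\neq -{\bf u}{\bf u}^\dagger$.
\item If $\ell=n-k-1$ and $q=2$, every ${\bf u}\in\C^{\perp_{\rm H}}\setminus\Hull_{\rm H}(\C)$ satisfies ${\bf u}{\bf u}^\dagger\neq0$ by Lemma~\ref{lem.uu}\,1), so ${\bf u}{\bf u}^\dagger=1$. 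Also $a^{q+1}=1$ for all $a\in\F_4^*$. Hence ${\bf u}{\bf u}^\dagger+a^{q+1}=1+1=0$ in characteristic $2$, and the dimension is always $\ell+1$, never $\ell$.
\end{itemize}
This case is the only real obstacle and the source of the exclusion $(q,\ell)\neq(2,n-k-1)$. Combined with the necessity of $\ell<n-k$, it completes the equivalence in 1).

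One point needs care: the reduction in Statement~\ref{state.1} must be legitimate both ways, so that no ${\bf u}\in(\C+\C^{\perp_{\rm H}})\setminus\C$ escapes the analysis. This holds because $\C({\bf u},a)$ depends only on ${\bf u}+\C$, and every coset in $(\C+\C^{\perp_{\rm H}})\setminus\C$ has a representative in $\C^{\perp_{\rm H}}\setminus\Hull_{\rm H}(\C)$.
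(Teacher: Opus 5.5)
Your proof is correct. For the two sufficiency directions and for the exclusion of $(q,\ell)=(2,n-k-1)$, it is the paper's argument: Lemma~\ref{lem.uu} together with surjectivity of $a\mapsto a^{q+1}$ onto $\F_q^*$. The only organizational difference there is that you split the sufficiency of 1) by $\ell<n-k-1$ versus $\ell=n-k-1$, whereas the paper splits by $q>2$ versus $q=2$.

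The genuine difference is in the necessity of $\ell<n-k$. You derive it from the trichotomy. If $\ell=n-k$, then $\C^{\perp_{\rm H}}\subseteq\C$, so $\C+\C^{\perp_{\rm H}}=\C$. Every admissible ${\bf u}\notin\C$ then lies outside $\C+\C^{\perp_{\rm H}}$, and Theorem~\ref{th.hull222} forces hull dimension $\ell-1$.

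The paper does not use the trichotomy for this step. It argues as follows:
\begin{enumerate}
\item Assume $\ell=n-k$ and hull dimension at least $n-k$. A dimension count gives $\C({\bf u},a)^{\perp_{\rm H}}\subseteq\C({\bf u},a)$.
\item Write a dual codeword $({\bf x},-a^{-q}{\bf x}{\bf u}^\dagger)$, with ${\bf x}\in\C^{\perp_{\rm H}}\subseteq\C$, as $({\bf c}+b{\bf u},ab)$.
\item Since ${\bf u}\notin\C$, this forces $b=0$, hence ${\bf x}{\bf u}^\dagger=0$ for all ${\bf x}\in\C^{\perp_{\rm H}}$.
\item So ${\bf u}\in\C$, a contradiction.
\end{enumerate}

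Your route is shorter and reuses what is already proved. It also makes the obstruction transparent: the only region that can yield dimension $\ell$ or $\ell+1$, namely $(\C+\C^{\perp_{\rm H}})\setminus\C$, is empty. The paper's route is a self-contained codeword argument. It relies only on the parity-check description in Lemma~\ref{lem.generator_matrix}, not on the Gram-matrix rank computation behind Theorem~\ref{th.hull}.
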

\begin{IEEEproof}
We prove the necessity and then the sufficiency, each time for both assertions.

For the \emph{necessary conditions}, let us assume the existence of $\mathbf{u}_i \in \mathbb{F}_{q^2}^n \setminus \C$ and $a_i \in \mathbb{F}_{q^2}^*$ 
such that $\dim(\Hull_{\rm H}(\C(\mathbf{u}_i, a_i))) \in \{\ell, \ell+1\}$ for $i \in \{1, 2\}$. In both assertions, the new Hermitian hull dimension is $\geq \ell$. For a contradiction, suppose that $\ell = n-k$, which implies that $\C^{\perp_{\rm H}} \subseteq \C$. For either statement under consideration, $\dim(\Hull_{\rm H}(\C(\mathbf{u}_i, a_i))) \geq \ell = n-k$. Since $\dim(\C(\mathbf{u}_i, a_i)^{\perp_{\rm H}})=n-k$ and $\Hull_{\rm H}(\C(\mathbf{u}_i, a_i)) \subseteq \C(\mathbf{u}_i, a_i)^{\perp_{\rm H}}$, then $\C(\mathbf{u}_i, a_i)^{\perp_{\rm H}} \subseteq \C(\mathbf{u}_i, a_i)$. We recall that any codeword in $\C(\mathbf{u}_i, a_i)^{\perp_{\rm H}}$ takes the form $(\mathbf{x}, \, -a_i^{-q} \, \mathbf{x} \, \mathbf{u}_i^\dagger)$, with $\mathbf{x} \in \C^{\perp_{\rm H}}$. Since $\C(\mathbf{u}_i, a_i)^{\perp_{\rm H}} \subseteq \C(\mathbf{u}_i, a_i)$, the vector $(\mathbf{x}, \, -a_i^{-q}\mathbf{x}\mathbf{u}_i^\dagger)$ can also be generated by $\C(\mathbf{u}_i, a_i)$. In other words, there exist $\mathbf{c} \in \C$ and $b \in \mathbb{F}_{q^2}$ such that 
\[
\mathbf{x} = \mathbf{c} + b \, \mathbf{u}_i \mbox{ and } 
-a_i^{-q} \, \mathbf{x} \, \mathbf{u}_i^\dagger = a_i \, b.
\]
Since $\mathbf{x} \in \C^{\perp_{\rm H}} \subseteq \C$ and $\mathbf{c} \in \C$, we infer that $b \, \mathbf{u}_i = \mathbf{x} - \mathbf{c} \in \C$. However, $\mathbf{u}_i \in \mathbb{F}_{q^2}^n \setminus \C$ implies $b = 0$. Consequently, the last coordinate is always $-a_i^{-q} \, \mathbf{x} \, \mathbf{u}_i^\dagger=0 \, a_i = 0$, leading to $\mathbf{x} \, \mathbf{u}_i^\dagger = 0$ for all $\mathbf{x} \in \C^{\perp_{\rm H}}$. Hence, $\mathbf{u}_i \in (\C^{\perp_{\rm H}})^{\perp_{\rm H}} = \C$, which is a contradiction. Thus, $\ell \leq n-k-1$, which means that $\ell < n-k$. We are done with the necessary condition of $\ell < n-k$ for both statements.

To establish the necessity of the condition $(q, \ell) \neq (2, n-k-1)$ in the first assertion, we assume towards a contradiction that $\ell < n-k$, $q = 2$, and $\ell = n-k-1$. Let there be $\mathbf{u}_1 \in \mathbb{F}_{q^2}^n \setminus \C$ and $a_1 \in \mathbb{F}_{q^2}^*$ such that $\dim(\Hull_{\rm H}(\C(\mathbf{u}_1, a_1))) = \ell$. Based on Statement \ref{state.1} and Theorem \ref{th.hull222}, we can further assume, without loss of generality, that 
$\mathbf{u}_1 \in \C^{\perp_{\rm H}} \setminus \Hull_{\rm H}(\C)$. 
Since $\ell = n-k-1$, we have $\dim(\C^{\perp_{\rm H}}) - \dim(\Hull_{\rm H}(\C)) = 1$. It then follows from Lemma \ref{lem.uu} Item 1) that $0\neq \mathbf{u}_1\mathbf{u}_1^\dagger \in \F_2^*$ for any ${\bf u}_1\in \C^{\perp_{\rm H}} \setminus \Hull_{\rm H}(\C)$. 
This forces $\mathbf{u}_1\mathbf{u}_1^\dagger = 1$. For any $a_1 \in \mathbb{F}_4^*$, however, its $(q+1)^{\rm th}$ power is uniquely determined as $a_1^{2+1} = a_1^3 = 1$. Hence, $\mathbf{u}_1\mathbf{u}_1^\dagger + a_1^{q+1} = 1 + 1 = 0$ always holds over $\mathbb{F}_2$. By Theorem \ref{th.hull222}, we get $\dim(\Hull_{\rm H}(\C(\mathbf{u}_1, a_1))) = \ell+1$ for any ${\bf u}_1\in \C^{\perp_{\rm H}}\setminus \Hull_{\rm H}(\C)$ and $a_1\in \F_4^*$, 
which is a contradiction. This completes the justification that the condition $(q, \ell) \neq (2, n-k-1)$ is necessary.

On the \emph{sufficient conditions}, we first work on the condition for the second assertion before moving on to the first assertion. For the second assertion, since $\ell < n-k$, we have $\dim(\C^{\perp_{\rm H}}) - \dim(\Hull_{\rm H}(\C)) = n - k - \ell \geq 1$. By Lemma \ref{lem.uu} Item 3), there must exist a vector $\mathbf{u}_2 \in \C^{\perp_{\rm H}} \setminus \Hull_{\rm H}(\C)\subseteq \F_{q^2}^n\setminus \C$ 
such that $\mathbf{u}_2\mathbf{u}_2^\dagger \neq 0$, which implies $-\mathbf{u}_2\mathbf{u}_2^\dagger \in \mathbb{F}_q^*$. For any finite field $\mathbb{F}_{q^2}$, the $(q+1)^{\rm th}$ power map from $\mathbb{F}_{q^2}^*$ to $\mathbb{F}_q^*$ is surjective. Hence, there exists at least one element $a_2 \in \mathbb{F}_{q^2}^*$ such that $(a_2)^{q+1} = -\mathbf{u}_2 \mathbf{u}_2^\dagger$, which means that $\mathbf{u}_2\mathbf{u}_2^\dagger + (a_2)^{q+1} = 0$. Based on Theorem \ref{th.hull222}, we get $\dim(\Hull_{\rm H}(\C(\mathbf{u}_2,a_2))) = \ell+1$.

For the first assertion, we assume $\ell < n-k$ and $(q, \ell) \neq (2, n-k-1)$. We select $\mathbf{u}_1 \in (\C^{\perp_{\rm H}} \setminus \Hull_{\rm H}(\C)) \subseteq (\F_{q^2}^n\setminus \C)$ and $a_1 \in \mathbb{F}_{q^2}^*$ such that $\mathbf{u}_1\mathbf{u}_1^\dagger + (a_1)^{q+1} \neq 0$. We continue our analysis by considering two cases, namely when $q>2$ and $q=2$.

\begin{enumerate}[wide=0pt, labelindent=0pt,label=\textbf{Case \arabic*}:]
\item $q > 2$. We choose the same vector $\mathbf{u}_1 = \mathbf{u}_2 \in \C^{\perp_{\rm H}} \setminus \Hull_{\rm H}(\C)$ as in the proof of the second assertion earlier, where $\mathbf{u}_1\mathbf{u}_1^\dagger \neq 0$. Since the subfield $\mathbb{F}_q^*$ contains $q-1 \geq 2$ nonzero elements, and the $(q+1)^{\rm th}$ power map is surjective, there exists at least one element $a_1 \in \mathbb{F}_{q^2}^*$ such that $(a_1)^{q+1} \neq -\mathbf{u}_1 \mathbf{u}_1^\dagger$, ensuring $\mathbf{u}_1\mathbf{u}_1^\dagger + a_1^{q+1} \neq 0$.

\item $q = 2$ and $n - k - \ell \geq 2$. Since $\dim(\C^{\perp_{\rm H}}) - \dim(\Hull_{\rm H}(\C)) \geq 2$, we know from Lemma \ref{lem.uu} Item 2) that 
there exists a vector $\mathbf{u}_1 \in \C^{\perp_{\rm H}}\setminus \Hull_{\rm H}(\C)$ such that $\mathbf{u}_1\mathbf{u}_1^\dagger = 0$. For any $a_1 \in \mathbb{F}_4^*$, its $(q+1)^{\rm th}$ power is uniquely determined as $(a_1)^{2+1} = (a_1)^3 = 1$. Consequently, we obtain $\mathbf{u}_1\mathbf{u}_1^\dagger + (a_1)^{q+1} = 0 + 1 = 1 \neq 0$.
\end{enumerate}
In both cases, we have successfully found $\mathbf{u}_1 \in \C^{\perp_{\rm H}} \setminus \Hull_{\rm H}(\C)$ and $a_1 \in \mathbb{F}_{q^2}^*$ satisfying $\mathbf{u}_1\mathbf{u}_1^\dagger + (a_1)^{q+1} \neq 0$. Thus, by Theorem \ref{th.hull222}, we conclude that $\dim(\Hull_{\rm H}(\C(\mathbf{u}_1,a_1))) = \ell$.
\end{IEEEproof}

Applying the sequence of extensions $\C_i = \C_{i-1}({\bf u}_i, a_i)$ with $i\geq 0$  from Theorem \ref{th.hull222} Item 1) does \emph{not} yield a Hermitian self-orthogonal code for any choice of ${\bf u}_i$ and $a_i$, unless the initial code $\C_0=\C$ with $\dim(\Hull_{\rm H}(\C))=\ell$ is already Hermitian self-orthogonal. Indeed, in each iterative step, the code dimension always increases by $1$ while the Hermitian hull dimension increases by at most $1$. Thus, the dimensional gap 
\begin{align}\label{eq.gap}
    \Delta_i = \dim(\C_i) - \dim(\Hull_{\rm H}(\C_i))
\end{align}
is a \emph{non-decreasing sequence} throughout the process. If $\Delta_0 > 0$ initially, then after $m$ iterations we have $\Delta_m \geq \Delta_0 > 0$. 
Hence, it is impossible to achieve the Hermitian self-orthogonal condition $\Delta_m = 0$ through this iterative construction. 

Conversely, by meticulously controlling the gap sequence $\Delta_i$ in \eqref{eq.gap}, there are precise iterative pathways to construct $[n+m, k+m]_{q^2}$ linear codes with a $(k+m-1)$-dimensional Hermitian hull.  
Such codes have the second largest possible dimension of the Hermitian hull and some references, \textit{e.g.}, \cite{PZYG2023,Chen2024,Cao2021,LZM2023}, call them {\em codes with large Hermitian hulls} or {\em almost Hermitian self-orthogonal codes}. 

We formalize these iterative structural conditions as a corollary. 

\begin{corollary}\label{coro.so}
Let $\C_0$ be an $[n,k]_{q^2}$ code. If $\C_i = \C_{i-1}({\bf u}_i, a_i)$ for $1 \leq i \leq m$, where ${\bf u}_i \in \mathbb{F}_{q^2}^{n+i-1} \setminus \C_{i-1}$ and $a_i \in \mathbb{F}_{q^2}^*$, form a sequence of iteratively generalized extended codes, then $\C_m$ is an $[n+m, k+m]_{q^2}$ code and the following statements hold.
\begin{enumerate}
\item The code $\C_m$ is Hermitian self-orthogonal if and only if $\C_0$ is Hermitian self-orthogonal \emph{and} the vector ${\bf u}_i \in \C_{i-1}^{\perp_{\rm H}} \setminus \C_{i-1}$ is such that $(a_i)^{q+1} = -{\bf u}_i{\bf u}_i^\dagger$ for all $1 \leq i \leq m$.
        
\item The code $\C_m$ has a $(k+m-1)$-dimensional Hermitian hull if and only if ${\bf u}_i \in \C_{i-1}^{\perp_{\rm H}} \setminus \Hull_{\rm H}(\C_{i-1})$ for all $1 \leq i \leq m$  and \emph{exactly one} of the following conditions holds.
\begin{itemize}
\item [\rmnum{1})] The code $\C_0$ is Hermitian self-orthogonal and there exists exactly one index $j \in \{1, 2, \dots, m\}$ such that $(a_j)^{q+1} \neq -{\bf u}_j{\bf u}_j^\dagger$ while $(a_i)^{q+1} = -{\bf u}_i{\bf u}_i^\dagger$ for all $i \neq j$.
\item [\rmnum{2})] The code $\C_0$ is such that $\dim(\Hull_{\rm H}(\C_0)) = k-1$ and $(a_i)^{q+1} = -{\bf u}_i{\bf u}_i^\dagger$ for all $1 \leq i \leq m$.
\end{itemize}
\end{enumerate}
\end{corollary}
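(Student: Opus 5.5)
The proof tracks the gap sequence $\Delta_i=\dim(\C_i)-\dim(\Hull_{\rm H}(\C_i))$ from \eqref{eq.gap}. Write $\ell_i=\dim(\Hull_{\rm H}(\C_i))$. Since $\C_{i-1}$ is an $[n+i-1,k+i-1]_{q^2}$ code and ${\bf u}_i\notin\C_{i-1}$, Lemma~\ref{lem.generator_matrix} gives by induction that $\C_m$ is $[n+m,k+m]_{q^2}$. Theorem~\ref{th.hull222} Item 1), applied at each step, gives exactly three possible increments. If ${\bf u}_i\notin\C_{i-1}+\C_{i-1}^{\perp_{\rm H}}$, then $\Delta_i=\Delta_{i-1}+2$. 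If ${\bf u}_i$ is reduced to $\C_{i-1}^{\perp_{\rm H}}\setminus\Hull_{\rm H}(\C_{i-1})$ (Statement~\ref{state.1}) and $a_i^{q+1}\neq-{\bf u}_i{\bf u}_i^\dagger$, then $\Delta_i=\Delta_{i-1}+1$. If ${\bf u}_i$ is so reduced and $a_i^{q+1}=-{\bf u}_i{\bf u}_i^\dagger$, then $\Delta_i=\Delta_{i-1}$. Hence $\Delta_m=\Delta_0+\sum_i\delta_i$ with each $\delta_i\in\{0,1,2\}$.

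For Item 1), $\C_m$ is Hermitian self-orthogonal iff $\Delta_m=0$. This holds iff $\Delta_0=0$, i.e.\ $\C_0$ is self-orthogonal, and every $\delta_i=0$. By induction each $\C_{i-1}$ is then self-orthogonal, so $\Hull_{\rm H}(\C_{i-1})=\C_{i-1}$. The condition "${\bf u}_i\in\C_{i-1}^{\perp_{\rm H}}\setminus\Hull_{\rm H}(\C_{i-1})$" therefore reads ${\bf u}_i\in\C_{i-1}^{\perp_{\rm H}}\setminus\C_{i-1}$. In this case $\C_{i-1}+\C_{i-1}^{\perp_{\rm H}}=\C_{i-1}^{\perp_{\rm H}}$, so no reduction of ${\bf u}_i$ is needed, and the condition $\delta_i=0$ is exactly $a_i^{q+1}=-{\bf u}_i{\bf u}_i^\dagger$. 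Both directions follow.

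For Item 2), the target is $\Delta_m=1$. This happens iff either $\Delta_0=0$ and exactly one $\delta_j=1$ with all others zero, or $\Delta_0=1$ and all $\delta_i=0$. These two alternatives are mutually exclusive, which accounts for the "exactly one" clause. In both, no step has $\delta_i=2$, so every ${\bf u}_i$ lies in $\C_{i-1}+\C_{i-1}^{\perp_{\rm H}}$. I would state the corollary's membership condition for the reduced representatives given by Statement~\ref{state.1}. Under this convention, $\delta_i=1$ corresponds to $a_i^{q+1}\neq-{\bf u}_i{\bf u}_i^\dagger$ and $\delta_i=0$ to equality, which gives conditions i) and ii).

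The main obstacle is this bookkeeping of representatives. The quantity ${\bf u}_i{\bf u}_i^\dagger$ matches ${\bf e}{\bf e}^\dagger$ (equivalently ${\bf u}{\bf u}^\dagger-S$ from Theorem~\ref{th.hull}) only for the reduced representative, so I must check that the corollary's hypotheses force that choice. In Item 1) there is no ambiguity, because self-orthogonality propagates and $\C_{i-1}+\C_{i-1}^{\perp_{\rm H}}=\C_{i-1}^{\perp_{\rm H}}$. In Item 2) I would argue the same way once the membership ${\bf u}_i\in\C_{i-1}^{\perp_{\rm H}}$ is imposed, noting that the coset of ${\bf u}_i$ modulo $\C_{i-1}$ determines $\C_i$.
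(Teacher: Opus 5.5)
Your proposal is correct and follows essentially the same route as the paper: track the gap $\Delta_i$, classify each step's increment as $0$, $1$, or $2$ via Theorem~\ref{th.hull222} and Statement~\ref{state.1}, and split on $\Delta_0\in\{0,1\}$. Your reading of Item 2) as a statement about the reduced representatives in $\C_{i-1}^{\perp_{\rm H}}\setminus\Hull_{\rm H}(\C_{i-1})$ is the same convention the paper's proof adopts; this convention is genuinely needed, since the literal ${\bf u}_i$ need not lie in $\C_{i-1}^{\perp_{\rm H}}$ when $\C_{i-1}$ is not self-orthogonal. Your exact increment of $2$ for ${\bf u}_i\notin\C_{i-1}+\C_{i-1}^{\perp_{\rm H}}$ is also stated more precisely than the paper's Rule 3.
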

\begin{IEEEproof}
Let $\Delta_i = \dim(\C_i) - \dim(\Hull_{\rm H}(\C_i))$ denote the dimensional gap at the $i^{\rm th}$ step. Since $\dim(\C_i) = \dim(\C_{i-1}) + 1$, the increment $\Delta_i - \Delta_{i-1}$ depends only on the variation of the Hermitian hull dimension. By Statement~\ref{state.1}, whenever an extension vector lies in $(\C_{i-1}+\C_{i-1}^{\perp_{\rm H}})\setminus\C_{i-1}$, we may replace it by a representative in $\C_{i-1}^{\perp_{\rm H}}\setminus\Hull_{\rm H}(\C_{i-1})$ without changing the resulting generalized extended code. Three rules of increment can therefore be stated for these representatives.
\begin{enumerate}[wide=0pt, labelindent=0pt,label=Rule \arabic*:]
\item $\Delta_i - \Delta_{i-1} = 0$ if and only if ${\bf u}_i \in \C_{i-1}^{\perp_{\rm H}} \setminus \Hull_{\rm H}(\C_{i-1})$ and $(a_i)^{q+1} = -{\bf u}_i{\bf u}_i^\dagger$.
\item $\Delta_i - \Delta_{i-1} = 1$ if and only if ${\bf u}_i \in \C_{i-1}^{\perp_{\rm H}} \setminus \Hull_{\rm H}(\C_{i-1})$ and $(a_i)^{q+1} \neq -{\bf u}_i{\bf u}_i^\dagger$.
\item For any other possible choices of ${\bf u}_i\in \mathbb{F}_{q^2}^{n+i-1} \setminus \C_{i-1}$, the dimension $\dim(\Hull_{\rm H}(\C_i))$ either remains unchanged or decreases, leading to $\Delta_i - \Delta_{i-1} \geq 2$.
\end{enumerate} 
Thus, the sequence $(\Delta_0, \Delta_1, \dots, \Delta_m)$ is non-decreasing. We now proceed to justify the statements listed above. 
\begin{enumerate}
    \item The code $\C_m$ is Hermitian self-orthogonal if and only if $\Delta_m = 0$. Since $\Delta_i$ is a non-decreasing sequence of non-negative integers, the condition $\Delta_m = 0$ is equivalent to $\Delta_0 = 0$ and $\Delta_i -\Delta_{i-1} = 0$ for all $1 \leq i \leq m$. On the one hand, the initial condition $\Delta_0 = 0$ holds if and only if $\C_0$ is Hermitian self-orthogonal. On the other hand, since $\C_{i-1}$ inherits the Hermitian self-orthogonality, the condition $\Delta_i - \Delta_{i-1} = 0$ requires ${\bf u}_i \in \C_{i-1}^{\perp_{\rm H}} \setminus \Hull_{\rm H}(\C_{i-1})= \C_{i-1}^{\perp_{\rm H}} \setminus \C_{i-1}$ and $(a_i)^{q+1} = -{\bf u}_i{\bf u}_i^\dagger$ for all $1 \leq i \leq m$.
    \item The code $\C_m$ has a $(k+m-1)$-dimensional Hermitian hull if and only if $\Delta_m = 1$, which is further equivalent to exactly one of the following two cases.
    \begin{enumerate}[wide=0pt, labelindent=0pt,label=\textbf{Case \arabic*}:]
    \item $\Delta_0 = 0$. The code $\C_0$ is Hermitian self-orthogonal and $\Delta_j - \Delta_{j-1} = 1$ for exactly one index $j$, with $\Delta_i - \Delta_{i-1} = 0$ for all $i \neq j$. By above increment rules, this claim holds if and only if $(a_j)^{q+1} \neq -{\bf u}_j{\bf u}_j^\dagger$ for exactly one index $j\in \{1,2,\ldots,m\}$ and $(a_i)^{q+1} = -{\bf u}_i{\bf u}_i^\dagger$ for all $1\leq i \leq m$ such that $i \neq j$. 
    \item $\Delta_0 = 1$. Hence, $\dim(\Hull_{\rm H}(\C_0)) = k-1$ and $\Delta_i - \Delta_{i-1} = 0$ for all $1 \leq i \leq m$. This holds if and only if $(a_i)^{q+1} = -{\bf u}_i{\bf u}_i^\dagger$ for all $1 \leq i \leq m$. 
    \end{enumerate}
    In addition, to ensure that $\Delta_i - \Delta_{i-1} \leq 1$ for all $1 \leq i \leq m$, the added vectors satisfy ${\bf u}_i \in \C_{i-1}^{\perp_{\rm H}} \setminus \Hull_{\rm H}(\C_{i-1})$ in both cases. The second statement is, thus, justified.
\end{enumerate}
\end{IEEEproof}

\section{Hermitian dual distances of generalized extended codes}\label{sec.improvment}

This section studies the Hermitian dual distances of generalized extended codes
$\C({\bf u},a)$. In Subsection~\ref{subsec.distance1}, we characterize the
Hermitian dual distance of $\C({\bf u},a)$ and give explicit conditions under
which it increases. Some of these conditions are necessary and sufficient.
We derive conditions that increase the Hermitian dual distance and the Hermitian hull dimension simultaneously in Subsection~\ref{subsec.distance3}.

\subsection{Characterization and Control of the Hermitian Dual Distances}\label{subsec.distance1}

We keep the notation $\mathbf{W}_{\min}(\C^{\perp_{\rm H}})=\{{\bf e}\in \C^{\perp_{\rm H}}:~\wt({\bf e})=d^{\perp_{\rm H}}\}$ and give an easy characterization on the minimum distance of $\C({\bf u},a)^{\perp_{\rm H}}$ 
based on the minimum distance of $\C^{\perp_{\rm H}}$. 

\begin{theorem}\label{th.distance}
Let $\C$ be an $[n,k,d]_{q^2}$ linear code with Hermitian dual distance $d^{\perp_{\rm H}}$.  
For any $\mathbf{u}\in \mathbb{F}_{q^2}^n\setminus \C$ and $a\in \mathbb{F}_{q^2}^*$, we have 
\begin{align*}
d\bigl(\C({\bf u},a)^{\perp_{\rm H}}\bigr)=
\begin{cases}
d^{\perp_{\rm H}}, &  \mbox{if }{\bf e}{\bf u}^\dagger=0~\text{for~some~} {\bf e}\in \mathbf{W}_{\min}(\C^{\perp_{\rm H}}), \\
d^{\perp_{\rm H}}+1, & \mbox{if } {\bf e}{\bf u}^\dagger\neq 0~\text{for~any~} {\bf e}\in \mathbf{W}_{\min}(\C^{\perp_{\rm H}}).
\end{cases}
\end{align*}  
\end{theorem}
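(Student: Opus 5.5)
We will work directly from the Hermitian parity-check matrix in Lemma~\ref{lem.generator_matrix}. Since $H_{({\bf u},a)}=\begin{pmatrix}H & -a^{-q}H{\bf u}^\dagger\end{pmatrix}$ generates $\C({\bf u},a)^{\perp_{\rm H}}$, every codeword of $\C({\bf u},a)^{\perp_{\rm H}}$ has the form
\[
\widehat{\bf x}:=\bigl({\bf x},\,-a^{-q}\,{\bf x}{\bf u}^\dagger\bigr),\qquad {\bf x}\in\C^{\perp_{\rm H}},
\]
where we write ${\bf x}={\bf y}H$, so that the last coordinate equals $-a^{-q}{\bf y}H{\bf u}^\dagger=-a^{-q}{\bf x}{\bf u}^\dagger$. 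The map ${\bf x}\mapsto\widehat{\bf x}$ is an injective linear bijection from $\C^{\perp_{\rm H}}$ onto $\C({\bf u},a)^{\perp_{\rm H}}$. The weight formula we will use throughout is
\[
\wt(\widehat{\bf x})=\wt({\bf x})+\begin{cases}0,& {\bf x}{\bf u}^\dagger=0,\\ 1,& {\bf x}{\bf u}^\dagger\neq 0,\end{cases}
\]
which follows because $a\neq 0$. In particular, the case distinction does not depend on $a$.

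From this formula we first obtain the two-sided bound $d^{\perp_{\rm H}}\le d(\C({\bf u},a)^{\perp_{\rm H}})\le d^{\perp_{\rm H}}+1$. The lower bound holds because every nonzero $\widehat{\bf x}$ comes from a nonzero ${\bf x}\in\C^{\perp_{\rm H}}$ with $\wt({\bf x})\ge d^{\perp_{\rm H}}$. The upper bound holds because we may take any ${\bf e}\in\mathbf{W}_{\min}(\C^{\perp_{\rm H}})$, which is nonempty since $\C^{\perp_{\rm H}}\neq\{{\bf 0}\}$, and then $\wt(\widehat{\bf e})\le d^{\perp_{\rm H}}+1$. Here we note that $\C^{\perp_{\rm H}}\neq\{{\bf 0}\}$ follows from ${\bf u}\notin\C$, which forces $\C\neq\F_{q^2}^n$.

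We then split into the two cases of the theorem. If some ${\bf e}\in\mathbf{W}_{\min}(\C^{\perp_{\rm H}})$ satisfies ${\bf e}{\bf u}^\dagger=0$, then $\wt(\widehat{\bf e})=d^{\perp_{\rm H}}$, and with the lower bound this gives equality. Otherwise, take any nonzero $\widehat{\bf x}$. If $\wt({\bf x})\ge d^{\perp_{\rm H}}+1$, we are done immediately. If $\wt({\bf x})=d^{\perp_{\rm H}}$, then ${\bf x}\in\mathbf{W}_{\min}(\C^{\perp_{\rm H}})$, so ${\bf x}{\bf u}^\dagger\neq 0$ and the last coordinate contributes $1$. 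In either subcase every nonzero codeword has weight at least $d^{\perp_{\rm H}}+1$, and together with the upper bound this yields exactly $d^{\perp_{\rm H}}+1$.

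There is no serious obstacle. The only points needing care are verifying that the description of $\C({\bf u},a)^{\perp_{\rm H}}$ as $\{\widehat{\bf x}\}$ is exhaustive, which Lemma~\ref{lem.generator_matrix} gives since $H$ has rank $n-k$, and ensuring $\mathbf{W}_{\min}(\C^{\perp_{\rm H}})$ is nonempty.
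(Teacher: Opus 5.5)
Your proof is correct and follows the paper's route: you describe $\C({\bf u},a)^{\perp_{\rm H}}$ as $\{({\bf x},-a^{-q}{\bf x}{\bf u}^\dagger):{\bf x}\in\C^{\perp_{\rm H}}\}$ via the parity-check matrix of Lemma~\ref{lem.generator_matrix}, observe that the last coordinate adds $1$ to the weight exactly when ${\bf x}{\bf u}^\dagger\neq 0$, and read off the distance. Your explicit treatment of the two-sided bound, of codewords of weight exceeding $d^{\perp_{\rm H}}$, and of the nonemptiness of $\mathbf{W}_{\min}(\C^{\perp_{\rm H}})$ (via ${\bf u}\notin\C$) fills in details that the paper leaves implicit.
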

\begin{IEEEproof}
We use \eqref{eq.generator-parity-matrix} to derive 
\[
\C({\bf u},a)^{\perp_{\rm H}}=\{{\bf e}_{({\bf u},a)}=({\bf e}, -a^{-q}{\bf e}{\bf u}^\dagger):~{\bf e}\in \C^{\perp_{\rm H}} \},  
\]
which implies that 
\begin{align*}
\wt({\bf e}_{({\bf u},a)}) &=
\begin{cases}
\wt({\bf e}), & \mbox{if } {\bf e}{\bf u}^\dagger=0, \\
\wt({\bf e})+1, & \mbox{if } {\bf e}{\bf u}^\dagger \neq 0,
\end{cases}\mbox{ and}\\
d\bigl(\C({\bf u},a)^{\perp_{\rm H}}\bigr) &=
\begin{cases}
d^{\perp_{\rm H}}, & \mbox{if there exists } {\bf e}\in \mathbf{W}_{\min}(\C^{\perp_{\rm H}}) \mbox{ such that } {\bf e}{\bf u}^\dagger=0, \\
d^{\perp_{\rm H}}+1 & \mbox{for all } {\bf e}\in \mathbf{W}_{\min}(\C^{\perp_{\rm H}}) \mbox{ for which } {\bf e}{\bf u}^\dagger\neq 0.
\end{cases}
\end{align*}  
This completes the proof. 
\end{IEEEproof}

\begin{remark}\label{rem.distance}
Before Statement~\ref{state.1}, we have written ${\bf u}={\bf c}_i+{\bf e}_i$ with ${\bf c}_i\in \C$ and ${\bf e}_i\in \C^{\perp_{\rm H}}$ 
when ${\bf u}\in\C+\C^{\perp_{\rm H}}$. For any ${\bf e}\in \mathbf{W}_{\min}(\C^{\perp_{\rm H}})$, we have
\[
{\bf e}{\bf u}^\dagger
=
{\bf e}({\bf c}_i+{\bf e}_i)^\dagger
=
{\bf e}{\bf c}_i^\dagger+{\bf e}{\bf e}_i^\dagger
=
{\bf e}{\bf e}_i^\dagger.
\]
Hence, the condition ${\bf e}{\bf u}^\dagger=0$ in Theorem~\ref{th.distance} 
depends only on the component ${\bf e}_i \in \C^{\perp_{\rm H}}$ of ${\bf u}$. 
Furthermore, if ${\bf e}\in \Hull_{\rm H}(\C)$, then ${\bf e}\in \C$ and, therefore, the choice of ${\bf e}$ does not produce a new nontrivial representative modulo $\C$. This fact reinforces our motivation for simplifying ${\bf u}\in\C+\C^{\perp_{\rm H}}$ to 
${\bf u}\in\C^{\perp_{\rm H}}\setminus\Hull_{\rm H}(\C)$ in Statement~\ref{state.1}.
\end{remark}

The next two theorems give necessary and sufficient conditions for the
Hermitian dual distance of $\C({\bf u},a)$ to increase by $1$ compared with
that of $\C$. Combined with Statement \ref{state.1} and Remark \ref{rem.distance}, these conditions are stated with respect to the two search spaces $\F_{q^2}^n\setminus\C$ and
$\C^{\perp_{\rm H}}\setminus\Hull_{\rm H}(\C)$, respectively. By
Theorem~\ref{th.distance}, the Hermitian dual distance of $\C({\bf u},a)$
is either $d^{\perp_{\rm H}}$ or $d^{\perp_{\rm H}}+1$. Hence, by taking the
negations of the following two criteria, we immediately obtain necessary and
sufficient conditions for $d\bigl((\C({\bf u},a))^{\perp_{\rm H}}\bigr) =d^{\perp_{\rm H}}$.

\begin{theorem}\label{th.subcode_characterization}
Let $\C$ be an $[n,k]_{q^2}$ linear code with Hermitian dual distance
$d^{\perp_{\rm H}}$. For any $a\in \F_{q^2}^*$, 
there is a vector $\mathbf{u}\in \F_{q^2}^n\setminus \C$ such that 
$d\bigl((\C(\mathbf{u},a))^{\perp_{\rm H}}\bigr)=d^{\perp_{\rm H}}+1$ 
if and only if 
there exists a maximal subcode $S\subsetneq \C^{\perp_{\rm H}}$
such that
\[
S\cap \mathbf{W}_{\min}(\C^{\perp_{\rm H}})=\emptyset.
\]
\end{theorem}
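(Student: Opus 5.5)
The plan is to reduce everything to Theorem~\ref{th.distance} and then identify maximal subcodes of $\C^{\perp_{\rm H}}$ with kernels of nonzero linear functionals induced by extension vectors. By Theorem~\ref{th.distance}, for any fixed $a\in\F_{q^2}^*$, the condition $d\bigl((\C(\mathbf{u},a))^{\perp_{\rm H}}\bigr)=d^{\perp_{\rm H}}+1$ is equivalent to ${\bf e}{\bf u}^\dagger\neq 0$ for all ${\bf e}\in\mathbf{W}_{\min}(\C^{\perp_{\rm H}})$. This criterion does not involve $a$, so the quantifier ``for any $a$'' is harmless. Writing
\[
(\C^{\perp_{\rm H}})_{\bf u}:=\{{\bf x}\in\C^{\perp_{\rm H}}: {\bf x}{\bf u}^\dagger=0\},
\]
the condition becomes $(\C^{\perp_{\rm H}})_{\bf u}\cap \mathbf{W}_{\min}(\C^{\perp_{\rm H}})=\emptyset$.

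For the forward direction, take ${\bf u}\in\F_{q^2}^n\setminus\C$. Since $\C=(\C^{\perp_{\rm H}})^{\perp_{\rm H}}$, the vector ${\bf u}$ is not Hermitian orthogonal to all of $\C^{\perp_{\rm H}}$. Hence some ${\bf x}\in\C^{\perp_{\rm H}}$ has ${\bf x}{\bf u}^\dagger\neq0$. Taking conjugates, ${\bf u}{\bf x}^\dagger\neq 0$, and the zero sets of ${\bf x}\mapsto{\bf x}{\bf u}^\dagger$ and ${\bf x}\mapsto{\bf u}{\bf x}^\dagger$ coincide. By Lemma~\ref{lem.maximal_subcode_general} applied to the code $\C^{\perp_{\rm H}}$, the subcode $S:=(\C^{\perp_{\rm H}})_{\bf u}$ is therefore a maximal subcode. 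Combined with the reformulated criterion, $S$ meets $\mathbf{W}_{\min}(\C^{\perp_{\rm H}})$ in the empty set.

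For the converse, we must realize an arbitrary maximal subcode $S\subsetneq\C^{\perp_{\rm H}}$ as $(\C^{\perp_{\rm H}})_{\bf u}$ for some ${\bf u}\notin\C$; this is the main step. We use the canonical map $\phi:\F_{q^2}^n\to\Hom(\C^{\perp_{\rm H}},\F_{q^2})$, $\phi({\bf u})({\bf x})={\bf x}{\bf u}^\dagger$. The map $\phi$ is $\F_{q^2}$-semilinear in ${\bf u}$, because scalars get conjugated, but its image is still an $\F_{q^2}$-subspace. Its kernel is exactly $(\C^{\perp_{\rm H}})^{\perp_{\rm H}}=\C$, so the image has $\F_q$-dimension $2(n-k)$. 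This equals the $\F_q$-dimension of $\Hom(\C^{\perp_{\rm H}},\F_{q^2})$, hence $\phi$ is surjective.

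Alternatively, surjectivity can be seen concretely. Choose a basis ${\bf x}_1,\ldots,{\bf x}_{n-k}$ of $\C^{\perp_{\rm H}}$, i.e.\ the rows of $H$; since $H$ has full row rank, $H{\bf u}^\dagger={\bf t}$ is solvable for any target vector ${\bf t}$.

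Now let $f$ be a nonzero functional with $\ker f=S$, which exists because $S$ has codimension one. Pick ${\bf u}$ with $\phi({\bf u})=f$. Then ${\bf u}\notin\C$ since $f\neq0$, and $(\C^{\perp_{\rm H}})_{\bf u}=S$. The hypothesis $S\cap\mathbf{W}_{\min}(\C^{\perp_{\rm H}})=\emptyset$ means ${\bf e}{\bf u}^\dagger\neq0$ for every minimum-weight codeword ${\bf e}$. By Theorem~\ref{th.distance}, this gives $d\bigl((\C(\mathbf{u},a))^{\perp_{\rm H}}\bigr)=d^{\perp_{\rm H}}+1$ for every $a\in\F_{q^2}^*$.

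The only delicate point is the semilinearity and conjugation bookkeeping in the surjectivity of $\phi$. The explicit solvability of $H{\bf u}^\dagger={\bf t}$ settles it cleanly.
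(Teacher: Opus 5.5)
Your proposal is correct and follows essentially the same route as the paper. You reduce via Theorem~\ref{th.distance} to the subcode $\{{\bf x}\in\C^{\perp_{\rm H}}:{\bf x}{\bf u}^\dagger=0\}$, use Lemma~\ref{lem.maximal_subcode_general} for the forward direction, and for the converse realize an arbitrary maximal subcode as such a kernel through the surjectivity of ${\bf u}\mapsto({\bf x}\mapsto{\bf x}{\bf u}^\dagger)$, whose kernel is $\C$. Your handling is slightly more careful than the paper's: the paper calls this map $\F_{q^2}$-linear even though it is conjugate-linear, and your $\F_q$-dimension count (or the explicit solvability of $H{\bf u}^\dagger={\bf t}$) fills that small gap.
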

\begin{IEEEproof} 
By Theorem~\ref{th.distance}, the condition
$
d\bigl((\C(\mathbf{u},a))^{\perp_{\rm H}}\bigr)=d^{\perp_{\rm H}}+1
$
is equivalent to
$
\mathbf{e}\mathbf{u}^\dagger\neq 0
$
for all $\mathbf{e}\in \mathbf{W}_{\min}(\C^{\perp_{\rm H}})$. We define the set 
\[
S_{\mathbf{u}}
:=
\{\mathbf{x}\in \C^{\perp_{\rm H}}:~\mathbf{x}\mathbf{u}^\dagger=0\}.
\]
Since $
\mathbf{u}\notin \C=(\C^{\perp_{\rm H}})^{\perp_{\rm H}}$, there exists ${\bf x}\in \C^{\perp_{\rm H}}$ such that $\mathbf{x}\mathbf{u}^\dagger\neq 0$. 
Applying Lemma~\ref{lem.maximal_subcode_general} to the code $\C^{\perp_{\rm H}}$, we confirm that $S_{\bf u}$ is a maximal subcode of $\C^{\perp_{\rm H}}$. The above non-orthogonality condition $\mathbf{e}\mathbf{u}^\dagger\neq 0$ is equivalent to $S_{\mathbf{u}}\cap \mathbf{W}_{\min}(\C^{\perp_{\rm H}})=\emptyset$. 

Conversely, let us assume that there exists a maximal subcode $ S\subsetneq \C^{\perp_{\rm H}}$ such that $S \cap \mathbf{W}_{\min}(\C^{\perp_{\rm H}})=\emptyset$. Since $S$ has codimension $1$ in $\C^{\perp_{\rm H}}$, there exists a nonzero linear functional $f$ from $\C^{\perp_{\rm H}}$ to $\F_{q^2}$ with $\ker(f)=S$. Now, for each $\mathbf{u}\in \F_{q^2}^n$, we define the map
\[
\phi:\C^{\perp_{\rm H}} \to \F_{q^2}\mbox{, sending } \mathbf{x} \mapsto \mathbf{x} \mathbf{u}^\dagger.
\]
It is straightforward to verify that $\phi$ is an $\F_{q^2}$-linear functional on $\C^{\perp_{\rm H}}$. Hence, we obtain a linear map
\[
\Phi:~\F_{q^2}^n \to \Hom(\C^{\perp_{\rm H}},\F_{q^2}) \mbox{, sending }
\mathbf{u} \mapsto \phi.
\]
Since the Hermitian inner product is nondegenerate, 
\[
\ker(\Phi)
=
\{\mathbf{u}\in \F_{q^2}^n:~\phi=0\}
=
(\C^{\perp_{\rm H}})^{\perp_{\rm H}}
=
\C.
\]
By the first isomorphism theorem, we get $ \F_{q^2}^n/\C \cong \operatorname{Im}(\Phi)$. Due to the fact that  
\[
\dim(\F_{q^2}^n/\C)=n-k
=\dim(\C^{\perp_{\rm H}})=
\dim\bigl(\Hom(\C^{\perp_{\rm H}},\F_{q^2})\bigr),
\]
we have $\operatorname{Im}(\Phi) = \Hom(\C^{\perp_{\rm H}},\F_{q^2})$ and, hence, $\F_{q^2}^n/\C \cong \Hom(\C^{\perp_{\rm H}},\F_{q^2})$. We can also verify the existence of $\mathbf{u}\in \F_{q^2}^n$ such that
$f=\phi$, that is,
\[
f(\mathbf{x})=\phi({\bf x})=\mathbf{x}\mathbf{u}^\dagger \mbox{ for every }
\mathbf{x}\in \C^{\perp_{\rm H}}.
\]
Since $f \neq 0$, we have $\phi\neq 0$ and $\mathbf{u}\notin \ker(\Phi)=\C$.
Thus, we have found a vector ${\bf u}\in \F_{q^2}^n\setminus \C$ such that 
\begin{align*}
S=\ker(f)=\{\mathbf{x}\in \C^{\perp_{\rm H}} \, : \, \mathbf{x}\mathbf{u}^\dagger=0\}.    
\end{align*}
By the assumption that $S\cap \mathbf{W}_{\min}(\C^{\perp_{\rm H}}) = \emptyset$, we get 
$\mathbf{e}\mathbf{u}^\dagger\neq 0$ for each $\mathbf{e}\in \mathbf{W}_{\min}(\C^{\perp_{\rm H}})$. To complete the proof, we use Theorem~\ref{th.distance} to establish the desired conclusion that $d\bigl((\C(\mathbf{u},a))^{\perp_{\rm H}}\bigr) = d^{\perp_{\rm H}}+1$ for every $a\in \F_{q^2}^*$. 
\end{IEEEproof}

\begin{theorem}\label{th:dualspace_characterization}
Let $\C$ be an $[n,k]_{q^2}$ linear code with Hermitian dual distance
$d^{\perp_{\rm H}}$. For any $a\in \F_{q^2}^*$, there is a vector $\mathbf{u}\in \C^{\perp_{\rm H}}\setminus \Hull_{\rm H}(\C)$ such that 
$d\bigl((\C(\mathbf{u},a))^{\perp_{\rm H}}\bigr)=d^{\perp_{\rm H}}+1$ 
if and only if there exists a maximal subcode $S\subsetneq \C^{\perp_{\rm H}}$
such that
\[
\Hull_{\rm H}(\C)\subseteq S
\mbox{ and } S\cap \mathbf{W}_{\min}(\C^{\perp_{\rm H}})=\emptyset.
\]
\end{theorem}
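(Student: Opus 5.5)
We follow the template of the proof of Theorem~\ref{th.subcode_characterization} and add one observation: for ${\bf u}\in\C^{\perp_{\rm H}}$, the kernel of ${\bf x}\mapsto{\bf x}{\bf u}^\dagger$ on $\C^{\perp_{\rm H}}$ automatically contains $\Hull_{\rm H}(\C)$. Conversely, a functional on $\C^{\perp_{\rm H}}$ vanishing on $\Hull_{\rm H}(\C)$ can be represented by a vector of $\C^{\perp_{\rm H}}$ itself.

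\emph{Necessity.} Let ${\bf u}\in\C^{\perp_{\rm H}}\setminus\Hull_{\rm H}(\C)$ with $d\bigl((\C({\bf u},a))^{\perp_{\rm H}}\bigr)=d^{\perp_{\rm H}}+1$. Put $S_{\bf u}=\{{\bf x}\in\C^{\perp_{\rm H}}:{\bf x}{\bf u}^\dagger=0\}$.
\begin{itemize}
\item Since ${\bf u}\notin\C$ (otherwise ${\bf u}\in\C\cap\C^{\perp_{\rm H}}$), Lemma~\ref{lem.maximal_subcode_general} shows $S_{\bf u}$ is a maximal subcode of $\C^{\perp_{\rm H}}$.
\item By Theorem~\ref{th.distance}, $S_{\bf u}\cap\mathbf W_{\min}(\C^{\perp_{\rm H}})=\emptyset$.
\item For ${\bf h}\in\Hull_{\rm H}(\C)\subseteq\C$ and ${\bf u}\in\C^{\perp_{\rm H}}$ we have ${\bf h}{\bf u}^\dagger=0$, so $\Hull_{\rm H}(\C)\subseteq S_{\bf u}$.
\end{itemize}

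\emph{Sufficiency.} Let $S$ be a maximal subcode with $\Hull_{\rm H}(\C)\subseteq S$ and $S\cap\mathbf W_{\min}(\C^{\perp_{\rm H}})=\emptyset$, and let $f$ be a nonzero functional on $\C^{\perp_{\rm H}}$ with $\ker f=S$. We need ${\bf u}\in\C^{\perp_{\rm H}}$ representing $f$.

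Consider the restricted map $\Psi:\C^{\perp_{\rm H}}\to\Hom(\C^{\perp_{\rm H}},\F_{q^2})$, ${\bf u}\mapsto({\bf x}\mapsto{\bf x}{\bf u}^\dagger)$. It is $\F_{q^2}$-linear, since the ($q$-semilinear) conjugation is applied twice. Its kernel is $\C^{\perp_{\rm H}}\cap\C=\Hull_{\rm H}(\C)$. Hence
\[
\dim\operatorname{Im}\Psi=(n-k)-\ell .
\]
Every functional in $\operatorname{Im}\Psi$ vanishes on $\Hull_{\rm H}(\C)$. The functionals on $\C^{\perp_{\rm H}}$ vanishing on $\Hull_{\rm H}(\C)$ form a space isomorphic to $\Hom(\C^{\perp_{\rm H}}/\Hull_{\rm H}(\C),\F_{q^2})$, of dimension $n-k-\ell$. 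Comparing dimensions, $\operatorname{Im}\Psi$ is exactly this space.

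Since $\Hull_{\rm H}(\C)\subseteq S=\ker f$, $f$ lies in this space, so $f=\Psi({\bf u})$ for some ${\bf u}\in\C^{\perp_{\rm H}}$. As $f\neq0$, ${\bf u}\notin\Hull_{\rm H}(\C)$, and in particular ${\bf u}\notin\C$, so $\C({\bf u},a)$ is defined. Then $S=S_{\bf u}$ misses $\mathbf W_{\min}(\C^{\perp_{\rm H}})$, and Theorem~\ref{th.distance} gives dual distance $d^{\perp_{\rm H}}+1$ for every $a\in\F_{q^2}^*$.

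The one step needing care is the dimension count identifying $\operatorname{Im}\Psi$ with the annihilator of $\Hull_{\rm H}(\C)$, including checking that $\Psi$ is genuinely $\F_{q^2}$-linear in ${\bf u}$. Everything else repeats the argument of Theorem~\ref{th.subcode_characterization}.
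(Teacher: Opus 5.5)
Your argument is correct in substance, and the sufficiency direction takes a genuinely different route from the paper's. However, the step you flagged is justified incorrectly: $\Psi$ is \emph{not} $\F_{q^2}$-linear. We have ${\bf x}(\lambda{\bf u})^\dagger=\lambda^q\,{\bf x}{\bf u}^\dagger$, so conjugation is applied only once, and $\Psi(\lambda{\bf u})=\lambda^q\Psi({\bf u})$. In other words, $\Psi$ is conjugate-linear. The paper's map $\Phi$ in Theorem~\ref{th.subcode_characterization} behaves the same way, although it is called linear there.

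This does not break your proof. $\Psi$ is additive with kernel $\Hull_{\rm H}(\C)$, so $|\operatorname{Im}\Psi|=q^{2(n-k)}/q^{2\ell}=q^{2(n-k-\ell)}$. That is exactly the number of functionals in $\Hom(\C^{\perp_{\rm H}},\F_{q^2})$ vanishing on $\Hull_{\rm H}(\C)$. Since $\operatorname{Im}\Psi$ is contained in that annihilator, the two sets coincide. Equivalently, the kernel and image of a semilinear map are subspaces, and rank--nullity still holds because $\lambda\mapsto\lambda^q$ is bijective. Replace your linearity sentence with either justification and the sufficiency argument is complete. Your necessity direction is the same as the paper's.

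The paper proves sufficiency constructively instead. It takes a basis ${\bf h}_1,\ldots,{\bf h}_\ell$ of $\Hull_{\rm H}(\C)$ and a complementary basis ${\bf v}_{\ell+1},\ldots,{\bf v}_{n-k}$ with ${\bf v}_i{\bf v}_j^\dagger=\delta_{ij}$. It then sets $b_j=f({\bf v}_j)$, defines ${\bf u}=\sum_j b_j^q{\bf v}_j$ explicitly, and checks that ${\bf x}{\bf u}^\dagger=f({\bf x})$. The exponent $q$ on $b_j$ is precisely what compensates for the conjugate-linearity above.

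What each route buys:
\begin{itemize}
\item The paper's route gives an explicit formula for ${\bf u}$. It relies, however, on the existence of an orthonormal basis for the nondegenerate Hermitian form restricted to a complement of the hull. The paper asserts this without proof; it does hold over finite fields, via anisotropic vectors and surjectivity of the norm.
\item Your counting route is nonconstructive but needs only $(\C^{\perp_{\rm H}})^{\perp_{\rm H}}=\C$. It also makes the structure transparent: the result is just the argument of Theorem~\ref{th.subcode_characterization} restricted to $\C^{\perp_{\rm H}}$, where the image becomes the annihilator of the hull.
\end{itemize}
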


\begin{IEEEproof}
We assume that there exists
$\mathbf{u}\in \C^{\perp_{\rm H}}\setminus \Hull_{\rm H}(\C)$
such that $d\bigl((\C(\mathbf{u},a))^{\perp_{\rm H}}\bigr)=d^{\perp_{\rm H}}+1$. Let  
\[
S_{\mathbf{u}}
:=
\{\mathbf{x}\in \C^{\perp_{\rm H}} \, : \, 
\mathbf{x}\mathbf{u}^\dagger=0\}.
\]
Since $\mathbf{u}\notin \Hull_{\rm H}(\C)\subseteq \C$, it is clear that 
$\mathbf{u}\notin \C$. Hence, there exists ${\bf x}\in \C^{\perp_{\rm H}}$ such that $\mathbf{x}\mathbf{u}^\dagger\neq 0$. By Lemma~\ref{lem.maximal_subcode_general}, we know that $S_{\bf u}$ is a maximal subcode of $\C^{\perp_{\rm H}}$ and $S_{\mathbf{u}}\cap \mathbf{W}_{\min}(\C^{\perp_{\rm H}})=\emptyset$. For any $\mathbf{h}\in \Hull_{\rm H}(\C)$, since $\mathbf{h}\in \Hull_{\rm H}(\C)\subseteq \C$ and $\mathbf{u} \in \C^{\perp_{\rm H}}$, we have $\mathbf{h}\mathbf{u}^\dagger=0$, which implies that $\mathbf{h}\in S_{\mathbf{u}}$. Therefore, $\Hull_{\rm H}(\C)\subseteq S_{\mathbf{u}}$

Conversely, we start with a maximal subcode
$S\subsetneq \C^{\perp_{\rm H}}$ such that $\Hull_{\rm H}(\C)\subseteq S$ and 
$S\cap \mathbf{W}_{\min}(\C^{\perp_{\rm H}})=\emptyset$. Similar to the second part of the proof of Theorem~\ref{th.subcode_characterization}, it suffices to find a linear functional $f\in \Hom(\C^{\perp_{\rm H}}, \F_{q^2})$ and 
a vector ${\bf u}\in \C^{\perp_{\rm H}}\setminus \Hull_{\rm H}(\C)$ such that 
\begin{align*}
   S=\ker(f)=\{\mathbf{x}\in \C^{\perp_{\rm H}} \, : \, \mathbf{x}\mathbf{u}^\dagger=0\}.
\end{align*}
Since $\Hull_{\rm H}(\C) = \Hull_{\rm H}(\C^{\perp_{\rm H}}) = \C^{\perp_{\rm H}}\cap (\C^{\perp_{\rm H}})^{\perp_{\rm H}}$, we can choose a basis
$\{\mathbf{h}_1,\mathbf{h}_2,\ldots,\mathbf{h}_\ell,\mathbf{v}_{\ell+1},\mathbf{v}_{\ell+2},\ldots,\mathbf{v}_{n-k}\}$ of $\C^{\perp_{\rm H}}$ such that 
\begin{align}\label{eq.hull_depo111}
\C^{\perp_{\rm H}}=\Hull_{\rm H}(\C^{\perp_{\rm H}})\oplus {\rm span}\{\mathbf{v}_{\ell+1},\mathbf{v}_{\ell+2},\ldots,\mathbf{v}_{n-k}\} 
= \Hull_{\rm H}(\C)\oplus {\rm span}\{\mathbf{v}_{\ell+1},\mathbf{v}_{\ell+2},\ldots,\mathbf{v}_{n-k}\}    
\end{align}
and, for all admissible indices $i$ and $j$,
\begin{align}\label{eq.hull_depo222}
  \mathbf{h}_i\mathbf{h}_j^\dagger=0, ~
\mathbf{h}_i\mathbf{v}_j^\dagger=0 \mbox{, and } 
\mathbf{v}_i\mathbf{v}_j^\dagger = \delta_{ij},  
\end{align}
where $0\leq \ell=\dim(\Hull_{\rm H}(\C))\leq \min\{k,n-k\}$ and $\delta_{ij}$ is the Kronecker delta. Since $S$ is a maximal subcode of $\C^{\perp_{\rm H}}$ containing $\Hull_{\rm H}(\C)$, there is a nonzero linear functional
$f\in \Hom(\C^{\perp_{\rm H}},\F_{q^2})$ such that $S=\ker(f)$. Since $\Hull_{\rm H}(\C)\subseteq S=\ker(f)$, we have
\begin{align}\label{eq.hull_depo333}
  f(\mathbf{h}_i)=0 \mbox{ for } 1\leq i\leq \ell.  
\end{align}
We write
\begin{align}\label{eq.hull_depo444}
f(\mathbf{v}_j)=b_j \mbox{ for } \ell+1\leq j\leq n-k.
\end{align}
Since $f\neq 0$, not all $b_j$ are zero. Letting
\begin{align}\label{eq.uuu}
    \mathbf{u}=\sum_{j=\ell+1}^{n-k} b_j^{q} \, \mathbf{v}_j\in 
{\rm span}\{\mathbf{v}_{\ell+1},\mathbf{v}_{\ell+2},\ldots,\mathbf{v}_{n-k}\}\setminus \{{\bf 0}\} 
\subseteq \C^{\perp_{\rm H}}\setminus \{{\bf 0}\},
\end{align}
we infer that ${\bf u}\notin \Hull_{\rm H}(\C)$ 
and, hence, ${\bf u} \in \C^{\perp_{\rm H}}\setminus \Hull_{\rm H}(\C)$. 
Otherwise,  
\[
{\bf u}\in \Hull_{\rm H}(\C) \, \cap \,  {\rm span}\{\mathbf{v}_{\ell+1},\mathbf{v}_{\ell+2},\ldots,\mathbf{v}_{n-k}\}.
\]
By \eqref{eq.hull_depo111}, we get ${\bf u}={\bf 0}$, which is a contradiction. Moreover, for any
\[
\mathbf{x}=\sum_{i=1}^{\ell}x_i\mathbf{h}_i+\sum_{j=\ell+1}^{n-k}x_j\mathbf{v}_j
\in \C^{\perp_{\rm H}},
\]
it follows from \eqref{eq.hull_depo222}, \eqref{eq.hull_depo333}, and \eqref{eq.hull_depo444} that 
\[
\mathbf{x}\mathbf{u}^\dagger
=
\sum_{j=\ell+1}^{n-k} x_j \, b_j
=\sum_{i=1}^{\ell} x_i \, f(\mathbf{h}_i) + \sum_{j=\ell+1}^{n-k} x_j \, f(\mathbf{v}_j)= f(\mathbf{x}).
\]
Thus, $S=\ker(f) =\{\mathbf{x}\in \C^{\perp_{\rm H}} \, : \, \mathbf{x}\mathbf{u}^\dagger=0\}$ for the vector $\mathbf{u}\in \C^{\perp_{\rm H}}\setminus \Hull_{\rm H}(\C)$ given in \eqref{eq.uuu}.
\end{IEEEproof}

The following is a direct corollary to Theorem \ref{th:dualspace_characterization}. 

\begin{corollary}\label{cor:degeneration_hull}
Let $\C$ be an $[n,k]_{q^2}$ linear code with Hermitian dual distance
$d^{\perp_{\rm H}}$. For any $a\in \F_{q^2}^*$, 
if there is a vector $\mathbf{u}\in \C^{\perp_{\rm H}}\setminus \Hull_{\rm H}(\C)$ such that 
$d\bigl((\C(\mathbf{u},a))^{\perp_{\rm H}}\bigr)=d^{\perp_{\rm H}}+1$, 
then $\Hull_{\rm H}(\C)\cap
\mathbf{W}_{\min}(\C^{\perp_{\rm H}}) = \emptyset$.
\end{corollary}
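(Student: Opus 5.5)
The corollary should follow immediately from the forward direction of Theorem~\ref{th:dualspace_characterization}, with a one-line set-theoretic argument added. We fix $a\in\F_{q^2}^*$ and suppose $\mathbf{u}\in \C^{\perp_{\rm H}}\setminus\Hull_{\rm H}(\C)$ satisfies $d\bigl((\C(\mathbf{u},a))^{\perp_{\rm H}}\bigr)=d^{\perp_{\rm H}}+1$.

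First, we consider the maximal subcode
\[
S_{\mathbf{u}}=\{\mathbf{x}\in\C^{\perp_{\rm H}}:\ \mathbf{x}\mathbf{u}^\dagger=0\}.
\]
It is maximal by Lemma~\ref{lem.maximal_subcode_general}, because $\mathbf{u}\notin\C$. The forward direction of the proof of Theorem~\ref{th:dualspace_characterization} gives two facts about it: $\Hull_{\rm H}(\C)\subseteq S_{\mathbf{u}}$, and $S_{\mathbf{u}}\cap\mathbf{W}_{\min}(\C^{\perp_{\rm H}})=\emptyset$. The second fact comes from Theorem~\ref{th.distance}, since the distance increment forces $\mathbf{e}\mathbf{u}^\dagger\neq0$ for every minimum-weight $\mathbf{e}$.

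Next, we intersect. We have
\[
\Hull_{\rm H}(\C)\cap\mathbf{W}_{\min}(\C^{\perp_{\rm H}})\subseteq S_{\mathbf{u}}\cap\mathbf{W}_{\min}(\C^{\perp_{\rm H}})=\emptyset,
\]
which is the claimed conclusion.

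As a sanity check we can also argue directly. Any $\mathbf{e}\in\Hull_{\rm H}(\C)$ lies in $\C$, so $\mathbf{e}\mathbf{u}^\dagger=0$ because $\mathbf{u}\in\C^{\perp_{\rm H}}$. If such an $\mathbf{e}$ had weight $d^{\perp_{\rm H}}$, Theorem~\ref{th.distance} would give dual distance $d^{\perp_{\rm H}}$, which is a contradiction. There is no real obstacle here; the only care needed is to note that the hypothesis $\mathbf{u}\notin\Hull_{\rm H}(\C)$ guarantees $\mathbf{u}\notin\C$, so that $\C(\mathbf{u},a)$ is well defined and $S_{\mathbf{u}}$ is indeed maximal.
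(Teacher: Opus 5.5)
Your proposal is correct and follows the paper's intended route: the paper gives no separate proof and presents the result as a direct consequence of Theorem~\ref{th:dualspace_characterization}, using exactly your inclusion $\Hull_{\rm H}(\C)\cap\mathbf{W}_{\min}(\C^{\perp_{\rm H}})\subseteq S\cap\mathbf{W}_{\min}(\C^{\perp_{\rm H}})=\emptyset$. Your direct check ($\mathbf{e}\mathbf{u}^\dagger=0$ for $\mathbf{e}\in\Hull_{\rm H}(\C)\subseteq\C$, then Theorem~\ref{th.distance}) is also valid.
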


The next theorem gives a uniform sufficient condition for the existence of an
extension vector that increases the Hermitian dual distance. Since our proof uses a projective-geometric covering argument, we state it for $\dim(\C^{\perp_{\rm H}})\geq 3$. 

\begin{theorem}\label{th.sufficient_increment}
Let $\C$ be an $[n,k]_{q^2}$ linear code with Hermitian dual distance
$d^{\perp_{\rm H}}$ and $\dim(\C^{\perp_{\rm H}})\geq 3.$
If the two conditions, namely,  
\begin{enumerate}
    \item [\rmnum{1})] $\mathbf{W}_{\min}(\C^{\perp_{\rm H}})\cup\{\mathbf{0}\}$ contains no $2$-dimensional subspace and  
    \item [\rmnum{2})] $A_{d^{\perp_{\rm H}}}\leq (q^2-1)(q^2+q)$
\end{enumerate}  
are both met, then there is a vector $\mathbf{u}\in \F_{q^2}^n\setminus \C$ such that $d\bigl((\C(\mathbf{u},a))^{\perp_{\rm H}}\bigr)=d^{\perp_{\rm H}} +1$ for every $a\in \F_{q^2}^*$.
\end{theorem}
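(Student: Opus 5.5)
The plan is to reduce the statement, via Theorem~\ref{th.subcode_characterization}, to a blocking-set question in finite projective geometry. By that theorem it suffices to exhibit a maximal subcode $S\subsetneq \C^{\perp_{\rm H}}$ with $S\cap \mathbf{W}_{\min}(\C^{\perp_{\rm H}})=\emptyset$. Such an $S$ gives a vector $\mathbf{u}\in\F_{q^2}^n\setminus\C$ whose orthogonal subcode $S_{\mathbf{u}}$ equals $S$. Theorem~\ref{th.distance} then shows that the increase holds for every $a\in\F_{q^2}^*$, since the criterion does not involve $a$.

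\textbf{Translation to geometry.} Put $Q=q^2$, $m=\dim(\C^{\perp_{\rm H}})\geq 3$, and view $\C^{\perp_{\rm H}}$ as the ambient space of $\mathrm{PG}(m-1,Q)$. Maximal subcodes of $\C^{\perp_{\rm H}}$ are exactly the hyperplanes of this projective space. The set $\mathbf{W}_{\min}(\C^{\perp_{\rm H}})$ is closed under multiplication by $\F_{Q}^*$, so it is a union of punctured one-dimensional subspaces. It therefore defines a point set $\mathcal P$ with
\[
|\mathcal P| = \frac{A_{d^{\perp_{\rm H}}}}{Q-1}\leq Q+q = Q+\sqrt{Q},
\]
using condition ii). A hyperplane avoids $\mathbf{W}_{\min}(\C^{\perp_{\rm H}})$ exactly when it contains no point of $\mathcal P$. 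Condition i) says precisely that $\mathcal P$ contains no full projective line, because a line contained in $\mathcal P$ is a $2$-dimensional subspace inside $\mathbf{W}_{\min}(\C^{\perp_{\rm H}})\cup\{\mathbf{0}\}$.

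\textbf{Key step.} We must show that $\mathcal P$ is not a blocking set with respect to hyperplanes. Naive double counting of point–hyperplane incidences does not suffice. Already for $m=3$ the union bound gives $(Q+q)(Q+1)>Q^2+Q+1$, so the hypothesis i) must be used in an essential way. I would invoke Blokhuis' theorem: a blocking set in $\mathrm{PG}(2,Q)$ that contains no line has at least $Q+\sqrt{Q}+1$ points, with equality for Baer subplanes. I would use its higher-dimensional version (Bose--Burton/Heim type) for blocking sets with respect to hyperplanes in $\mathrm{PG}(m-1,Q)$: any such set of size less than $Q+\sqrt{Q}+1$ contains a line. Since $|\mathcal P|\leq Q+q<Q+q+1$ and $\mathcal P$ contains no line, some hyperplane misses $\mathcal P$.

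If the higher-dimensional citation is unavailable, the fallback is a reduction to the plane. Choose a subspace $W$ of $\C^{\perp_{\rm H}}$ of codimension $3$ that meets no line joining two points of $\mathcal P$, and no point of $\mathcal P$; a counting argument should give this, since $|\mathcal P|$ is small. Then $\mathcal P$ maps injectively into the plane $\mathrm{PG}(\C^{\perp_{\rm H}}/W)$, and hyperplanes of the quotient pull back to hyperplanes of $\C^{\perp_{\rm H}}$ containing $W$. I would then argue that the image of $\mathcal P$ contains no full line, possibly after a further generic choice of $W$, and apply Blokhuis' planar bound.

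\textbf{Main obstacle.} The hardest step is this last one: passing from the planar Blokhuis bound to general $m$ while preserving the no-line condition i). The lifting argument is the delicate point, because a full line in the image need not come from a line inside $\mathcal P$. The rest is bookkeeping: take the resulting hyperplane $S$ and apply Theorem~\ref{th.subcode_characterization}.
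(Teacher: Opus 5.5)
Your proposal is correct and follows the paper's proof. The paper also reduces, via Theorem~\ref{th.subcode_characterization}, to finding a hyperplane of $\mathrm{PG}(\dim(\C^{\perp_{\rm H}})-1,q^2)$ that misses the point set of minimum-weight codewords, which has at most $q^2+q$ points and contains no line. It then invokes the lower bound $q^2+q+1$ for line-free blocking sets with respect to hyperplanes, citing the planar case and its higher-dimensional generalization, so your planar-reduction fallback is not needed.

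One correction on attribution: the planar bound $Q+\sqrt{Q}+1$ for blocking sets containing no line is Bruen's theorem, which is what the paper cites. Blokhuis's theorem concerns $\mathrm{PG}(2,p)$ with $p$ prime and gives the bound $3(p+1)/2$.
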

\begin{IEEEproof}
By Theorem~\ref{th.subcode_characterization}, it suffices to show that there
exists a maximal subcode $S\subsetneq \C^{\perp_{\rm H}}$ 
such that $S\cap \mathbf{W}_{\min}(\C^{\perp_{\rm H}})=\emptyset$. 
Assume, to the contrary, that every maximal subcode of $\C^{\perp_{\rm H}}$
contains at least one minimum weight codeword of $\C^{\perp_{\rm H}}$.
Let $\mathcal{P}_{\min} := \{\langle \mathbf{e}\rangle:~\mathbf{e}\in \mathbf{W}_{\min}(\C^{\perp_{\rm H}})\}$, 
where $\langle \mathbf{e}\rangle$ denotes the $1$-dimensional subspace generated by $\mathbf{e}$. Since scalar multiplication does not change the Hamming weight, each element of $\mathcal{P}_{\min}$ contributes exactly $q^2-1$ nonzero minimum weight codewords. Therefore,
\begin{align}\label{eq.Pmin}
|\mathcal{P}_{\min}|=\frac{A_{d^{\perp_{\rm H}}}}{q^2-1}\leq q^2+q.    
\end{align}

The $1$-dimensional subspaces of $\C^{\perp_{\rm H}}$ are precisely the points of the projective space $PG(\dim(\C^{\perp_{\rm H}})-1,q^2)=PG(n-k-1,q^2)$ with $n-k-1\geq 2$. The maximal subcodes of $\C^{\perp_{\rm H}}$ are precisely its hyperplanes. By the assumption, every maximal subcode of $\C^{\perp_{\rm H}}$ contains at least one minimum weight codeword and, therefore, it contains at least one point of $\mathcal{P}_{\min}$. This is equivalent to stating that $\mathcal{P}_{\min}$ meets every hyperplane of $PG(n-k-1,q^2)$. Since a projective line in $PG(n-k-1,q^2)$ corresponds to the set of all $1$-dimensional subspaces which are contained in a $2$-dimensional subspace of $\C^{\perp_{\rm H}}$ and since $\mathbf{W}_{\min}(\C^{\perp_{\rm H}})\cup\{\mathbf{0}\}$ contains no $2$-dimensional subspace, the set $\mathcal{P}_{\min}$ does not contain all points of any projective line.

In summary, we regard $\mathcal{P}_{\min}$ as a point set in $PG(\dim(\C^{\perp_{\rm H}})-1,q^2)$ that meets every hyperplane but does not contain all points of any projective line. By Bruen Theorem for projective planes \cite{Bruen1971} and its generalization in higher-dimensional spaces \cite{Ebert1978}, we can conclude that $|\mathcal{P}_{\min}|\geq q^2+q+1$, which contradicts \eqref{eq.Pmin}. This completes the proof.
\end{IEEEproof}

\begin{remark}\label{rem:feasibility_conditions}
At first glance, the structural and numerical prerequisites in Theorem \ref{th.sufficient_increment} seem restrictive. A deeper algebraic and combinatorial analysis, however, reveals that these conditions are in fact remarkably mild and naturally met by a vast majority of linear codes. 
We highlight this feasibility by stating two theoretical insights.
\begin{enumerate}
\item The first condition is equivalent to saying that $\C^{\perp_{\rm H}}$ contains no $2$-dimensional constant-weight subcode whose nonzero codewords all have weight $d^{\perp_{\rm H}}$. We know from \cite{B1984} that every $q^2$-ary constant-weight linear code is, up to monomial equivalence, a repetition of a simplex code, possibly with additional zero coordinates. Furthermore, every $2$-dimensional $q^2$-ary constant-weight code has nonzero weight divisible by $q^2$. Thus, the first condition in the theorem is automatically satisfied if $d^{\perp_{\rm H}}\not\equiv 0 \pmod{q^2}$.

\item If $\mathbf{W}_{\min}(\C^{\perp_{\rm H}})\cup\{\mathbf{0}\}$ contains at least one $2$-dimensional subspace, then there are at least $q^4-1$ nonzero vectors with Hamming weights $d^{\perp_{\rm H}}$ in $\mathbf{W}_{\min}(\C^{\perp_{\rm H}})$. Hence, $A_{d^{\perp_{\rm H}}}\geq q^4-1$. Since $(q^2-1)(q^2+q)>q^4-1$ for any $q \geq 2$, both conditions that Theorem~\ref{th.sufficient_increment} requires are met if $A_{d^{\perp_{\rm H}}}< q^4-1$. 
\end{enumerate}
\end{remark}

\begin{remark}\label{rem:sufficient_increment}
The assumption $\dim(\C^{\perp_{\rm H}})\geq 3$ in Theorem~\ref{th.sufficient_increment} excludes only low-dimensional
dual spaces. When $\dim(\C^{\perp_{\rm H}})=1$, the available nonzero
directions are too limited for the above projective covering argument. When $\dim(\C^{\perp_{\rm H}})=2$, the behavior depends on the concrete
distribution of the minimum weight codewords in $\C^{\perp_{\rm H}}$. These
exceptional cases can be treated separately, checked directly, and are not needed for the uniform criterion in Theorem~\ref{th.sufficient_increment}.

On the other hand, Theorem~\ref{th.sufficient_increment} gives a uniform sufficient condition valid for all
$\dim(\C^{\perp_{\rm H}})\geq 3$.
The above bound is sharp in the plane case $\dim(\C^{\perp_{\rm H}})=3$ \cite{Bruen1971},
where the associated projective space is $PG(2,q^2)$. 
When $\dim(\C^{\perp_{\rm H}})>3$, stronger lower bounds for $\mathcal{P}_{\min}$ are available \cite{Ebert1978}. Consequently, the numerical condition $A_{d^{\perp_{\rm H}}}\leq (q^2-1)(q^2+q)$ in Theorem~\ref{th.sufficient_increment} can be replaced by weaker bounds that lead to sharper sufficient criteria. We keep the present formulation because it is dimension-free and easy to apply.
\end{remark}

\subsection{Increasing Hermitian Dual Distances and Hull Dimensions Simultaneously}\label{subsec.distance3}

To design EAQECCs with better parameters, we focus on the case where 
\[
d(\C({\bf u},a)^{\perp_{\rm H}})=d(\C^{\perp_{\rm H}})+1 \mbox{ and }
\dim(\Hull_{\rm H}(\C(\mathbf{u}, a)))=\dim(\Hull_{\rm H}(\C))+1.
\]
This allows us to obtain higher error handling capabilities while requiring fewer number of entangled pairs. Based on Statement \ref{state.1}, we only need to select ${\bf u}$ from $\C^{\perp_{\rm H}} \setminus \Hull_{\rm H}(\C)$ properly. To do this, we use Theorem \ref{th.hull222} Part 1), Theorem \ref{th:dualspace_characterization}, and two subsets of $\C^{\perp_{\rm H}}$, namely,  
\begin{itemize}
    \item the set $\mathcal{Z}_{\rm H}=\{\mathbf x\in \C^{\perp_{\rm H}} \, : \, \mathbf x\mathbf x^\dagger=0\}\subseteq \C^{\perp_{\rm H}}$ 
    of self-orthogonal codewords in $\C^{\perp_{\rm H}}$ and
    \item the maximal subcode $S_{{\bf e}}=\{\mathbf x\in \C^{\perp_{\rm H}} \, : \, \mathbf x{\bf e}^\dagger=0\}$ of $\C^{\perp_{\rm H}}$ which is orthogonal to ${\bf e}\in \mathbf W_{\min}(\C^{\perp_{\rm H}})$.
\end{itemize} 

\begin{theorem}\label{th.increasing_both}
Let $\C$ be an $[n,k]_{q^2}$ linear code with $\dim(\Hull_{\rm H}(\C))=\ell$ 
and Hermitian dual distance $d^{\perp_{\rm H}}$. For a fixed $\mathbf{u}\in \C^{\perp_{\rm H}}\setminus \Hull_{\rm H}(\C)$, there exists $a\in \F_{q^2}^*$ such that $\dim(\Hull_{\rm H}(\C(\mathbf{u}, a)))=\ell+1$ and $d(\C(\mathbf{u},a)^{\perp_{\rm H}})=d^{\perp_{\rm H}}+1$ if and only if 
\begin{align}\label{eq.u_space}
    {\bf u}\notin \mathcal{Z}_{\rm H} \bigcup 
    \left(\bigcup_{{\bf e}\in \mathbf W_{\min}(\C^{\perp_{\rm H}})} 
    S_{{\bf e}}\right). 
\end{align}
In particular, the existence of such a vector ${\bf u}$ implies 
\begin{equation}\label{eq:vectoru}
\C^{\perp_{\rm H}}\not\subseteq \mathcal Z_{\rm H} \bigcup 
\left(\bigcup_{{\bf e}\in \mathbf W_{\min}(\C^{\perp_{\rm H}})} 
S_{{\bf e}}\right).
\end{equation}
\end{theorem}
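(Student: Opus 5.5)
Since $\mathbf{u}\in\C^{\perp_{\rm H}}\setminus\Hull_{\rm H}(\C)$ is fixed, the plan is to treat the two requirements separately: the hull condition involves only $a$ (through $a^{q+1}$), and the distance condition involves only $\mathbf{u}$. Then we combine them.

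First, the hull. By Theorem~\ref{th.hull222} Part 1), $\dim(\Hull_{\rm H}(\C(\mathbf{u},a)))=\ell+1$ holds if and only if $\mathbf{u}\mathbf{u}^\dagger+a^{q+1}=0$. Because $\mathbf{u}\mathbf{u}^\dagger\in\F_q$ and $a^{q+1}\in\F_q^*$ for $a\in\F_{q^2}^*$, such an $a$ can exist only when $\mathbf{u}\mathbf{u}^\dagger\neq 0$, that is, only when $\mathbf{u}\notin\mathcal{Z}_{\rm H}$. Conversely, if $\mathbf{u}\notin\mathcal{Z}_{\rm H}$, then $-\mathbf{u}\mathbf{u}^\dagger\in\F_q^*$. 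The norm map $a\mapsto a^{q+1}$ from $\F_{q^2}^*$ onto $\F_q^*$ is surjective, as already used in the proof of Theorem~\ref{th:necessary_condition_hull}, so some $a$ satisfies $a^{q+1}=-\mathbf{u}\mathbf{u}^\dagger$.

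Second, the distance. By Theorem~\ref{th.distance}, $d(\C(\mathbf{u},a)^{\perp_{\rm H}})=d^{\perp_{\rm H}}+1$ holds if and only if $\mathbf{e}\mathbf{u}^\dagger\neq 0$ for every $\mathbf{e}\in\mathbf{W}_{\min}(\C^{\perp_{\rm H}})$, and this criterion does not depend on $a$. The condition $\mathbf{e}\mathbf{u}^\dagger=0$ is equivalent to $\mathbf{u}\mathbf{e}^\dagger=(\mathbf{e}\mathbf{u}^\dagger)^q=0$, which in turn means $\mathbf{u}\in S_{\mathbf{e}}$ since $\mathbf{u}\in\C^{\perp_{\rm H}}$. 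So the distance increases if and only if $\mathbf{u}\notin\bigcup_{\mathbf{e}}S_{\mathbf{e}}$.

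Combining the two: when \eqref{eq.u_space} holds, choose $a$ from the norm step; the distance condition holds for that $a$ because it is independent of $a$. Conversely, if some $a$ achieves both properties, the two necessity arguments above give \eqref{eq.u_space}. The "in particular" statement \eqref{eq:vectoru} follows at once, because $\mathbf{u}$ itself lies in $\C^{\perp_{\rm H}}$ and outside the union.

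The only delicate point is the conjugation symmetry, namely that $\mathbf{e}\mathbf{u}^\dagger=0$ if and only if $\mathbf{u}\mathbf{e}^\dagger=0$, which is needed to match the definition of $S_{\mathbf{e}}$. Everything else is a direct appeal to earlier results.
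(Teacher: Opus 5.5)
Your proof is correct and follows essentially the same route as the paper. Theorem~\ref{th.hull222} together with surjectivity of the norm map handles the hull condition through $\mathbf{u}\notin\mathcal{Z}_{\rm H}$, and Theorem~\ref{th.distance}, which does not depend on $a$, handles the distance condition through $\mathbf{u}\notin S_{\mathbf{e}}$. You also make explicit two points the paper uses silently: that $a^{q+1}\in\F_q^*$ rules out $\mathbf{u}\mathbf{u}^\dagger=0$, and that $\mathbf{e}\mathbf{u}^\dagger=0$ if and only if $\mathbf{u}\mathbf{e}^\dagger=0$.
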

\begin{IEEEproof} 
Let ${\bf u}\in \C^{\perp_{\rm H}}\setminus \Hull_{\rm H}(\C)$ and $a\in \F_{q^2}^*$ such that $\dim(\Hull_{\rm H}(\C({\bf u},a))) = \ell+1$ and $d(\C({\bf u},a)^{\perp_{\rm H}})=d^{\perp_{\rm H}}+1$ be given.
By Theorem~\ref{th.distance}, Statement~\ref{state.1}, and Theorem~\ref{th.hull222}, we have ${\bf u}{\bf e}^\dagger\neq 0$ 
for all ${\bf e}\in \mathbf W_{\min}(\C^{\perp_{\rm H}})$ and ${\bf u}{\bf u}^\dagger+a^{q+1}=0$. Hence, ${\bf u}\notin S_{{\bf e}}$ for all ${\bf e}\in \mathbf W_{\min}(\C^{\perp_{\rm H}})$ and ${\bf u}\notin \mathcal{Z}_{\rm H}$, yielding \eqref{eq.u_space}. Moreover, since $\mathbf{u}\in \C^{\perp_{\rm H}}$, we arrive at \eqref{eq:vectoru}.

Conversely, let $\mathbf{u}\in \C^{\perp_{\rm H}}\setminus \Hull_{\rm H}(\C)$ be such that ${\bf u}\notin \mathcal{Z}_{\rm H}$ and 
${\bf u}\notin S_{{\bf e}}$ for all ${\bf e}\in \mathbf W_{\min}(\C^{\perp_{\rm H}})$. Since ${\bf u}\notin S_{{\bf e}}$, we know that ${\bf u}{\bf e}^\dagger\neq 0$ for all ${\bf e}\in \mathbf W_{\min}(\C^{\perp_{\rm H}})$. Applying Theorem~\ref{th.distance} results in $d(\C({\bf u},a)^{\perp_{\rm H}})=d^{\perp_{\rm H}}+1$.
In addition, ${\bf u}\notin \mathcal{Z}_{\rm H}$ implies that ${\bf u}{\bf u}^\dagger\in \F_q^*$. We now define the norm map 
\[
{\rm Norm}:~\F_{q^2}^* \to \F_q^* \mbox{, sending }
a \mapsto a^{q+1}.
\]
Since ${\rm Norm}(\cdot)$ is surjective, there is always an element $a\in \F_{q^2}^*$ such that 
$a^{q+1}=-{\bf u}{\bf u}^\dagger$. By Theorem~\ref{th.hull222}, we conclude that 
\[
\dim(\Hull_{\rm H}(\C({\bf u},a)))=\ell+1.
\]
\end{IEEEproof}

\begin{corollary}\label{cor:constructive_choice_u}
Let $\C$ be an $[n,k]_{q^2}$ linear code with $\dim(\Hull_{\rm H}(\C))=\ell$ 
and Hermitian dual distance $d^{\perp_{\rm H}}$. 
For any vector 
\begin{align}\label{eq.u_space222}
{\bf u}\in
\C^{\perp_{\rm H}}
\setminus
\left(
\mathcal Z_{\rm H}
\bigcup
 \left(\bigcup_{{\bf e}\in \mathbf W_{\min}(\C^{\perp_{\rm H}})} S_{\bf e}
\right) \right) 
\subseteq \C^{\perp_{\rm H}}\setminus \Hull_{\rm H}(\C),
\end{align}
there exists an element $a\in \F_{q^2}^*$ such that
$\dim(\Hull_{\rm H}(\C({\bf u},a)))=\ell+1$ and $d(\C({\bf u},a)^{\perp_{\rm H}})=d^{\perp_{\rm H}}+1$.
\end{corollary}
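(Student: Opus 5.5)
The plan is to obtain the corollary as the sufficiency direction of Theorem~\ref{th.increasing_both}. The only extra work is to justify the inclusion in \eqref{eq.u_space222}, so that any admissible ${\bf u}$ lies in $\C^{\perp_{\rm H}}\setminus \Hull_{\rm H}(\C)$ and the theorem applies.

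\textbf{Step 1: the inclusion.} Take ${\bf u}\in\C^{\perp_{\rm H}}\setminus \mathcal Z_{\rm H}$. Every ${\bf h}\in\Hull_{\rm H}(\C)$ lies in both $\C$ and $\C^{\perp_{\rm H}}$, so ${\bf h}{\bf h}^\dagger=0$. Hence $\Hull_{\rm H}(\C)\subseteq\mathcal Z_{\rm H}$. It follows that ${\bf u}\notin\Hull_{\rm H}(\C)$, and in particular ${\bf u}\notin\C$.

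Removing the sets $S_{\bf e}$ as well only shrinks the set further. So the left-hand side of \eqref{eq.u_space222} is contained in $\C^{\perp_{\rm H}}\setminus\Hull_{\rm H}(\C)$. No use of $\mathbf W_{\min}$ is needed for this inclusion.

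\textbf{Step 2: invoking the theorem.} For such ${\bf u}$, condition \eqref{eq.u_space} holds by hypothesis. The converse part of Theorem~\ref{th.increasing_both} then produces the required $a\in\F_{q^2}^*$. Concretely, the argument runs as follows.

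First, ${\bf u}\notin S_{\bf e}$ means ${\bf u}{\bf e}^\dagger\neq0$ for every ${\bf e}\in\mathbf W_{\min}(\C^{\perp_{\rm H}})$. Since ${\bf e}{\bf u}^\dagger=({\bf u}{\bf e}^\dagger)^q$, we also have ${\bf e}{\bf u}^\dagger\neq 0$. Theorem~\ref{th.distance} then gives the dual distance $d^{\perp_{\rm H}}+1$, independently of $a$.

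Second, ${\bf u}\notin\mathcal Z_{\rm H}$ gives ${\bf u}{\bf u}^\dagger\in\F_q^*$. By surjectivity of the norm $\F_{q^2}^*\to\F_q^*$, choose $a$ with $a^{q+1}=-{\bf u}{\bf u}^\dagger$. Theorem~\ref{th.hull222} then yields hull dimension $\ell+1$.

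The only point that needs care is Step 1, namely that $\Hull_{\rm H}(\C)\subseteq\mathcal Z_{\rm H}$. This ensures the hypothesis ${\bf u}\in\C^{\perp_{\rm H}}\setminus\Hull_{\rm H}(\C)$ of Theorem~\ref{th.increasing_both} is met. Everything else is a direct citation.
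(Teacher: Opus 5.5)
Your proposal is correct and follows the paper's proof: show $\Hull_{\rm H}(\C)\subseteq\mathcal Z_{\rm H}$ to get the inclusion into $\C^{\perp_{\rm H}}\setminus\Hull_{\rm H}(\C)$, then apply the sufficiency direction of Theorem~\ref{th.increasing_both}. Your unpacking of that direction is accurate, including the step ${\bf e}{\bf u}^\dagger=({\bf u}{\bf e}^\dagger)^q$, which the paper leaves implicit, and the choice of $a$ with $a^{q+1}=-{\bf u}{\bf u}^\dagger$; the citation alone would also suffice.
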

\begin{IEEEproof}
Since every codeword in $\Hull_{\rm H}(\C)$ is self-orthogonal, $\Hull_{\rm H}(\C)\subseteq \mathcal{Z}_{\rm H}$, which implies that 
\[
\C^{\perp_{\rm H}} \setminus
\left(\mathcal Z_{\rm H} \bigcup
\left(\bigcup_{{\bf e} \in \mathbf W_{\min}(\C^{\perp_{\rm H}})} S_{\bf e}
\right) \right) 
\subseteq \C^{\perp_{\rm H}}\setminus \Hull_{\rm H}(\C).
\]
By \eqref{eq.u_space222}, we get  
\[
{\bf u}\in \C^{\perp_{\rm H}}\setminus \Hull_{\rm H}(\C)~{\rm and}~
{\bf u}\notin \mathcal Z_{\rm H} \bigcup 
\left(\bigcup_{{\bf e}\in \mathbf W_{\min}(\C^{\perp_{\rm H}})} S_{\bf e}\right).
\]
The desired result then follows from Theorem \ref{th.increasing_both}. 
\end{IEEEproof}

\begin{remark} 
According to Corollary \ref{cor:degeneration_hull} and Theorem \ref{th:necessary_condition_hull}, 
it is possible to find an appropriate vector ${\bf u}\in \C^{\perp_{\rm H}}\setminus \Hull_{\rm H}(\C)$ and 
an element $a\in \F_{q^2}^*$ such that 
$d(\C({\bf u},a)^{\perp_{\rm H}})=d(\C^{\perp_{\rm H}})+1$ and 
$\dim(\Hull_{\rm H}(\C(\mathbf{u}, a)))\in \{\dim(\Hull_{\rm H}(\C)),~ \dim(\Hull_{\rm H}(\C))+1\}$ 
only if 
\begin{align}\label{eq.necessary_conditions}
    \mathbf W_{\min}(\C^{\perp_{\rm H}})\cap \Hull_{\rm H}(\C)=\emptyset~{\rm and}~\dim(\Hull_{\rm H}(\C))\leq n-k-1.
\end{align}
The two conditions in \eqref{eq.necessary_conditions} are satisfied 
if \eqref{eq.u_space} holds. This is why we omit them from Theorem \ref{th.increasing_both}. 
\end{remark}

We end this section with an example illustrating the operational meaning of
Theorem~\ref{th.increasing_both} in the classical setting. Even when the original code is not optimal, the generalized extension can produce
optimal linear codes with larger Hermitian hulls.

\begin{example}\label{ex.classical}
Let $\mathbb{F}_4 = \{0, 1, \omega, \omega^2\}$ with $\omega^2 + \omega + 1 = 0$. Let $\C$ be the $[6, 3, 3]_4$ NMDS code with generator matrix 
\begin{equation*}
G = \begin{pmatrix} 
1 & 0 & 0 & \omega^2 & \omega & 0 \\ 
0 & 1 & 0 & \omega^2 & \omega^2 & 1 \\ 
0 & 0 & 1 & \omega^2 & 1 & \omega^2 
\end{pmatrix}.
\end{equation*}
The code has $1$-dimensional Hermitian hull. Since there is a $[6,3,4]_4$ MDS code on record \cite{G2026-linear}, the $[6, 3, 3]_4$ NMDS code $\C^{\perp_{\rm H}}$ is \emph{not} optimal. Let $\C({\bf u}, 1)$ be the generalized extended code of $\C$. Fixing $a=1$, we have Table \ref{tab:extension_results}. 

\begin{table}[htbp]
\caption{Vectors that increase the Hermitian hull dimension or Hermitian dual distance of the $[6,3,3]_{4}$ NMDS code in Example \ref{ex.classical}}
\label{tab:extension_results}
\centering
\begin{threeparttable}
\renewcommand{\arraystretch}{1.1}
\resizebox{\linewidth}{!}{
\begin{tabular}{l | c|c|c|c|c|c}
\toprule
Increase in & $\mathbf{u}$ & $\dim(\Hull_{\rm H}(\C({\bf u}, 1)))$ & $d^{\perp_{\rm H}}(\C({\bf u}, 1))$ & ${\bf u}\notin \mathcal{Z}_H$ & ${\bf u}\notin \bigcup_{{\bf e}\in \mathbf W_{\min}(C^{\perp_H})} S_{{\bf e}}$ & Optimal\tnote{1} \\
\midrule
Hull & $(1, 1, 1, 0, \omega, \omega)$ & 2 & 3 & Yes & No & No\\
Distance & $(1, 0, \omega^2, 0, \omega, \omega^2)$ & 1 & 4 & No & Yes & Yes \\
Both & $(0, 1, \omega, 0, 0, 1)$ & 2 & 4 & Yes & Yes & Yes \\
\bottomrule
\end{tabular}
}
\begin{tablenotes}
\footnotesize
\item[1] The answer to whether the Hermitian dual code
$(\C({\bf u},1))^{\perp_{\rm H}}$ with parameters $[7,3]_4$ has the optimal minimum distance $4$. 
\end{tablenotes}
\end{threeparttable}
\end{table}

The optimal $[7,3,4]_4$ linear code documented in Grassl's table \cite{G2026-linear} happens to be Hermitian LCD. Therefore, we start with a non-optimal code $\C$ and finally obtain two optimal linear codes  
$\C({\bf u}, 1)^{\perp_{\rm H}}$ with larger Hermitian hulls. 
In addition, although the first generalized extended code 
$(\C((1, 1, 1, 0, \omega, \omega),1))^{\perp_{\rm H}}$ in Table \ref{tab:extension_results} is not optimal compared to \cite{G2026-linear}, 
it has a larger Hermitian hull dimension than that of $\C$. 
\end{example}

\section{Applications in EAQECCs}\label{sec.application}

Luo, Ezerman, Grassl, and Ling gave three propagation rules in \cite{LEGL2023} and applied them to optimal and best-known linear codes in \cite{G2026-linear}.
Together with the EAQECCs obtained in \cite{G2021-PRA}, Grassl's online
tables \cite{G-binary-2025,G-ternary-2025} collect best-known qubit and
qutrit codes of lengths up to $64$ and $36$, respectively. Motivated by
these tables, several sporadic improvements have been obtained by different
methods in, {\it e.g.}, \cite{LLS2025,K2023,FCLLC2025,LLZS2024,LZ2024-DM,CLZ2025,LZ2024-JAMC}.

In this section, we use the results in Sections \ref{sec.hull} and
\ref{sec.improvment} to construct EAQECCs from generalized extended codes.
{\em The goal is to obtain quantum codes whose parameters improve on those in \cite{G-binary-2025,G-ternary-2025,LLS2025,K2023,FCLLC2025,LLZS2024,LZ2024-DM,CLZ2025,LZ2024-JAMC}.} 

Given an $[n,k]_{q^2}$ linear code $\C$, Lemma
\ref{lem.Hermitian-Construction} yields an EAQECC $\mathcal{Q}(\C)$ with parameters 
\begin{align}\label{eq.EAQECC_parameters}
[[n,\kappa:=n-k-\dim(\Hull_{\rm H}(\C)),\delta\geq d(\C^{\perp_{\rm H}});
c:=k-\dim(\Hull_{\rm H}(\C))]]_q.
\end{align}
We call $\mathcal{Q}(\C)$ {\em pure} if $\delta=d(\C^{\perp_{\rm H}})$ 
and {\em impure} if $\delta>d(\C^{\perp_{\rm H}})$. Since $(\C^{\perp_{\rm H}})^{\perp_{\rm H}}=\C$ and
$\Hull_{\rm H}(\C^{\perp_{\rm H}})=\Hull_{\rm H}(\C)$, one can also 
obtain a $\mathcal{Q}(\C^{\perp_{\rm H}})$ by applying Lemma \ref{lem.Hermitian-Construction} to $\C^{\perp_{\rm H}}$. 
Regarding $\C^{\perp_{\rm H}}$ as a new linear code, we can apply the results in Sections \ref{sec.hull} and \ref{sec.improvment} to $\C^{\perp_{\rm H}}$ and then apply Lemma \ref{lem.Hermitian-Construction} to the resulting codes. 
Here, we keep our focus on the quantum codes obtained from the generalized extended code $\C({\bf u},a)$.

\begin{theorem}\label{th:application-search-principle}
Let $\C$ be an $[n,k]_{q^2}$ linear code with
$\ell=\dim(\Hull_{\rm H}(\C))$ and
$d^{\perp_{\rm H}}=d(\C^{\perp_{\rm H}})$. For
${\bf u}\in\F_{q^2}^n\setminus \C$ and $a\in\F_{q^2}^*$, the generalized
extended code $\C({\bf u},a)$ is an $[n+1,k+1]_{q^2}$ linear code and
\[
d(\C({\bf u},a)^{\perp_{\rm H}})
=
\begin{cases}
d^{\perp_{\rm H}},&
\text{if }{\bf e}{\bf u}^{\dagger}=0
\text{ for some }
{\bf e}\in\mathbf W_{\min}(\C^{\perp_{\rm H}}),\\
d^{\perp_{\rm H}}+1,&
\text{otherwise}.
\end{cases}
\]
The value of $\dim(\Hull_{\rm H}(\C({\bf u},a)))$, which can be $\ell-1$, $\ell$, or $\ell+1$, is determined according to Theorems \ref{th.hull} and \ref{th.hull222}. Lemma \ref{lem.Hermitian-Construction} gives an EAQECC with parameters
\[
[[n+1,n-k-\dim(\Hull_{\rm H}(\C({\bf u},a))),\delta\geq d(\C({\bf u},a)^{\perp_{\rm H}});
k+1-\dim(\Hull_{\rm H}(\C({\bf u},a)))]]_q.
\]
\end{theorem}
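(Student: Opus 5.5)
This theorem is essentially an assembly of earlier results, so the proof will consist of invoking them in order and checking that the parameter bookkeeping matches Lemma~\ref{lem.Hermitian-Construction}.

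First, since ${\bf u}\in\F_{q^2}^n\setminus\C$ and $a\in\F_{q^2}^*$, Lemma~\ref{lem.generator_matrix} gives directly that $\C({\bf u},a)$ is an $[n+1,k+1]_{q^2}$ code. It has generator matrix $G_{({\bf u},a)}$ and Hermitian parity-check matrix $H_{({\bf u},a)}$.

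Second, I will quote Theorem~\ref{th.distance} for the Hermitian dual distance. Every codeword of $\C({\bf u},a)^{\perp_{\rm H}}$ has the form $({\bf e},-a^{-q}{\bf e}{\bf u}^\dagger)$ with ${\bf e}\in\C^{\perp_{\rm H}}$. Consequently the minimum weight is $d^{\perp_{\rm H}}$ exactly when some minimum-weight ${\bf e}$ satisfies ${\bf e}{\bf u}^\dagger=0$. Otherwise every minimum-weight ${\bf e}$ gains one nonzero coordinate. Codewords with $\wt({\bf e})\ge d^{\perp_{\rm H}}+1$ already have weight at least $d^{\perp_{\rm H}}+1$, so the minimum becomes $d^{\perp_{\rm H}}+1$. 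Note that the condition does not depend on $a$.

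Third, for the hull dimension I will invoke Theorem~\ref{th.hull}. It gives $\ell-1$ when ${\bf u}\notin\C+\C^{\perp_{\rm H}}$, and $\ell$ or $\ell+1$ otherwise according to whether ${\bf u}{\bf u}^\dagger+a^{q+1}\neq S$. Via Statement~\ref{state.1}, this is equivalent to the reformulation in Theorem~\ref{th.hull222} with a representative ${\bf u}\in\C^{\perp_{\rm H}}\setminus\Hull_{\rm H}(\C)$. Hence the hull dimension lies in $\{\ell-1,\ell,\ell+1\}$.

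Finally, I will apply Lemma~\ref{lem.Hermitian-Construction} to the $[n+1,k+1]_{q^2}$ code $\C({\bf u},a)$. Writing $h:=\dim(\Hull_{\rm H}(\C({\bf u},a)))$, the lemma gives $c=(k+1)-h$ and $\kappa=(n+1)-2(k+1)+c=n-k-h$. It also gives $\delta\ge d(\C({\bf u},a)^{\perp_{\rm H}})$, with equality when the code is Hermitian dual-containing, and the displayed parameters follow.

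The only point needing care is the arithmetic for $\kappa$, where the $+1$ in length and dimension must cancel correctly. There is no real obstacle beyond confirming that the hypotheses of each cited result, namely ${\bf u}\notin\C$ and $a\neq0$, are exactly those assumed here.
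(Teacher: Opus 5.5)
Your proposal is correct and follows the paper's proof: it invokes Lemma~\ref{lem.generator_matrix} for the length and dimension, Theorem~\ref{th.distance} for the Hermitian dual distance, Theorems~\ref{th.hull} and~\ref{th.hull222} for the hull dimension, and Lemma~\ref{lem.Hermitian-Construction} for the final parameters. Your check $\kappa=(n+1)-2(k+1)+\bigl((k+1)-h\bigr)=n-k-h$ is the only computation required.
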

\begin{IEEEproof}
The length and dimension of $\C({\bf u},a)$ follow from Lemma
\ref{lem.generator_matrix}. The Hermitian dual distance is
from Theorem \ref{th.distance}. The statement on the Hermitian hull dimension
follows from Theorems \ref{th.hull} and \ref{th.hull222}. The final parameter
conversion is done according to Lemma \ref{lem.Hermitian-Construction}.
\end{IEEEproof}

We now record the corresponding parameter changes. 
Suppose that the $[n,k]_{q^2}$ initial code $\C$ has $\ell$-dimensional Hermitian hull 
and Hermitian dual distance $d^{\perp_{\rm H}}$. Let 
\[
\Delta_{\Hull_{\rm H}}
:=\dim(\Hull_{\rm H}(\C({\bf u},a)))-\ell \mbox{ and } 
\Delta_{d^{\perp_{\rm H}}}
:= d(\C({\bf u},a)^{\perp_{\rm H}})-d^{\perp_{\rm H}} .
\]
Let $\mathsf H({\bf u})$ stands for the proposition: 
${\bf e}{\bf u}^{\dagger}\neq0$ for all
${\bf e}\in\mathbf W_{\min}(\C^{\perp_{\rm H}})$. By Theorem
\ref{th.distance}, $\mathsf H({\bf u})$ is necessary and sufficient for
$d(\C({\bf u},a)^{\perp_{\rm H}})=d^{\perp_{\rm H}}+1$. Otherwise, 
$d(\C({\bf u},a)^{\perp_{\rm H}})=d^{\perp_{\rm H}}$. Using Theorems \ref{th.hull} and \ref{th.hull222} as well as Statement \ref{state.1}, we deduce that 
\[
\Delta_{\Hull_{\rm H}}=
\begin{cases}
-1, & \mbox{if }{\bf u}\notin\C+\C^{\perp_{\rm H}},\\
0,& \mbox{if }{\bf u}\in\C^{\perp_{\rm H}}\setminus\Hull_{\rm H}(\C) \mbox{ such that } {\bf u}{\bf u}^{\dagger}+a^{q+1}\neq0,\\
1,& \mbox{if } {\bf u}\in\C^{\perp_{\rm H}}\setminus\Hull_{\rm H}(\C) \mbox{ such that } {\bf u}{\bf u}^{\dagger}+a^{q+1}=0.
\end{cases}
\]
Applying  Lemma \ref{lem.Hermitian-Construction} to the results in Sections \ref{sec.hull} and \ref{sec.improvment}, we immediately get six families of EAQECCs from $\C({\bf u},a)$. Table \ref{tab:parameter-transformations} lists their parameters, with {\em the initial $[[n,\kappa,\delta;c]]_q$ EAQECCs obtained from $\C$ are assumed to be pure, for simplicity}. In the rare occasions that the initial EAQECCs are impure, one only needs to replace
$\delta$ by $d(\C^{\perp_{\rm H}})$ in the displayed lower bounds accordingly.

\begin{table}[htbp]
\centering
\caption{\protect\mbox{Six EAQECC families from an $[n+1,k+1]_{q^2}$ generalized extended code $\C({\bf u},a)$}}
\label{tab:parameter-transformations}
\renewcommand{\arraystretch}{1.1}
\resizebox{\linewidth}{!}{%
\begin{tabular}{c|l|c|l|l}
\toprule
Family
& Condition on $({\bf u},a)$
& $(\Delta_{\Hull_{\rm H}},\Delta_{d^{\perp_{\rm H}}})$
& Parameters of $\mathcal{Q}(\C({\bf u},a))$
& References\\
\midrule
$1$
& ${\bf u}\notin \left(\C+\C^{\perp_{\rm H}}\right)$ and $\neg\mathsf H({\bf u})$
& $(-1,0)$
& $[[n+1,\kappa+1,\ge \delta;c+2]]_q$
& Thms. \ref{th.hull} and \ref{th.distance}\\
$2$
& ${\bf u}\notin \left(\C+\C^{\perp_{\rm H}}\right)$ and $\mathsf H({\bf u})$
& $(-1,1)$
& $[[n+1,\kappa+1,\ge \delta+1;c+2]]_q$
& Thms. \ref{th.hull} and \ref{th.distance}\\
$3$
& ${\bf u}\in\C^{\perp_{\rm H}}\setminus\Hull_{\rm H}(\C)$, ${\bf u}{\bf u}^{\dagger}+a^{q+1}\neq0$, $\neg\mathsf H({\bf u})$
& $(0,0)$
& $[[n+1,\kappa,\ge \delta;c+1]]_q$
& Statement \ref{state.1}, Thms. \ref{th.hull222} and \ref{th.distance}\\
$4$
& ${\bf u}\in\C^{\perp_{\rm H}}\setminus\Hull_{\rm H}(\C)$, ${\bf u}{\bf u}^{\dagger}+a^{q+1}\neq0$, $\mathsf H({\bf u})$
& $(0,1)$
& $[[n+1,\kappa,\ge \delta+1;c+1]]_q$
& Statement \ref{state.1}, Thms. \ref{th.hull222} and \ref{th.distance}\\
$5$
& ${\bf u}\in\C^{\perp_{\rm H}}\setminus\Hull_{\rm H}(\C)$, ${\bf u}{\bf u}^{\dagger}+a^{q+1}=0$, $\neg\mathsf H({\bf u})$
& $(1,0)$
& $[[n+1,\kappa-1,\ge \delta;c]]_q$
& Statement \ref{state.1}, Thms. \ref{th.hull222} and \ref{th.distance}\\
$6$
& ${\bf u}\in\C^{\perp_{\rm H}}\setminus\Hull_{\rm H}(\C)$, ${\bf u}{\bf u}^{\dagger}+a^{q+1}=0$, $\mathsf H({\bf u})$
& $(1,1)$
& $[[n+1,\kappa-1,\ge \delta+1;c]]_q$
& Statement \ref{state.1}, Thms. \ref{th.hull222} and \ref{th.distance}\\
\bottomrule
\end{tabular}%
}
\parbox{\linewidth}{\small
\smallskip
{\em Note:} For simplicity, the initial $[[n,\kappa,\delta;c]]_q$ EAQECC $\mathcal{Q}(\C)$ is assumed to be \emph{pure}, that is, $\kappa = n-k-\ell$, $c=k-\ell$, and $\delta=d(\C^{\perp_{\rm H}})$. If $\mathcal{Q}(C)$ is impure, then we can replace $\delta$ in the table by $d(\C^{\perp_{\rm H}})$, while keeping the dimension and the number of entangled pairs unchanged.}
\end{table}

\begin{remark}
    The role of the preceding theory is summarized in Table
\ref{tab:parameter-transformations}. In each listed search region, the
conditions are necessary and sufficient for the corresponding changes
$\Delta_{\Hull_{\rm H}}$ and $\Delta_{d^{\perp_{\rm H}}}$. 
Note that the other results presented in Sections \ref{sec.hull} and \ref{sec.improvment} can further narrow down the search space for 
or confirm the existence of ${\bf u}$ and $a$. For example, Theorem \ref{th:necessary_condition_hull} and Corollary \ref{cor:degeneration_hull} 
can be used to exclude the cases where the Hermitian hull dimension or the Hermitian dual distance cannot be increased, respectively. 
In particular, Corollary \ref{cor:constructive_choice_u} provides 
a constructive sufficient condition for increasing the Hermitian hull dimension and the Hermitian dual distance and, hence, yields the sixth family of EAQECCs in Table \ref{tab:parameter-transformations}. 
We present more implementation details of these results in Table \ref{tab:criteria-for-search}.
\end{remark}

\begin{table}[htbp]
\caption{Criteria that control parameter changes. NS, N, and S stand for ``necessary and sufficient condition'', ``necessary condition'', and ``sufficient condition'', respectively}
\label{tab:criteria-for-search}
\centering
\begin{threeparttable}
\renewcommand{\arraystretch}{1.1}
\resizebox{\linewidth}{!}{
\begin{tabular}{l|c|c|l}
\hline
Tool & Controlled Quantity & Condition
& \begin{tabular}[c]{@{}l@{}} Operational Meaning for Table 
\ref{tab:parameter-transformations}\end{tabular}\\
\hline
Lemma \ref{lem:column-space-characterization}
& $\Delta_{\Hull_{\rm H}}$
& NS
& Tests whether ${\bf u}$ lies in the sum space.\\

Lemma \ref{lem.maximal_subcode_general}
& $\Delta_{d^{\perp_{\rm H}}}$ 
& S
& Produces maximal subcodes used in dual-distance tests.\\

Lemma \ref{lem.uu}
& $\Delta_{\Hull_{\rm H}}$
& NS
& Helps choose $a$ for Families $3$ to $6$.\\

Theorem \ref{th:necessary_condition_hull}
& $\Delta_{\Hull_{\rm H}}$
& NS
& Decides whether Families $3$ to $6$ occur.\\

Corollary \ref{coro.so}
& $\Delta_{\Hull_{\rm H}}$
& NS
& Controls repeated use of Families $3$ to $6$.\\

Theorem \ref{th.subcode_characterization}
& $\Delta_{d^{\perp_{\rm H}}}$
& NS
& Determines the existence of families with
$\Delta_{d^{\perp_{\rm H}}}=1$ in the full space.\\

Theorem \ref{th:dualspace_characterization}
& $\Delta_{d^{\perp_{\rm H}}}$
& NS
& Determines existence of Families $4$ to $6$ in the reduced space.\\

Corollary \ref{cor:degeneration_hull}
& $\Delta_{d^{\perp_{\rm H}}}$
& N
& Excludes Families $4$ and $6$ when it fails.\\

Theorem \ref{th.sufficient_increment}
& $\Delta_{d^{\perp_{\rm H}}}$
& S
& Guarantees some family with
$\Delta_{d^{\perp_{\rm H}}}=1$.\\

Theorem \ref{th.increasing_both}
& $(\Delta_{\Hull_{\rm H}},\Delta_{d^{\perp_{\rm H}}})$
& NS
& Characterizes Family $6$, namely
$(\Delta_{\Hull_{\rm H}},\Delta_{d^{\perp_{\rm H}}})=(1,1)$.\\

Corollary \ref{cor:constructive_choice_u}
& $(\Delta_{\Hull_{\rm H}},\Delta_{d^{\perp_{\rm H}}})$
& S
& Gives an explicit search criterion for Family $6$.\\
\hline
\end{tabular}
}
\end{threeparttable}
\end{table}

Next, we apply Tables \ref{tab:parameter-transformations} and \ref{tab:criteria-for-search} 
to construct improved EA qubit and qutrit codes. 
We start two examples that illustrate how to apply the results 
in Sections \ref{sec.hull} and \ref{sec.improvment} to obtain improved EAQECCs.

\begin{example}\label{ex:binary-MI-example}
Let $\F_4=\{0,1,\omega,\omega^2\}$ with $\omega^2=\omega+1$. 
Let $\C_1$ be the $[9,7]_4$ linear code generated by
\[
G=
\begin{pmatrix}
1 & 0 & 0 & 0 & 0 & 0 & 0 & \omega^2 & \omega\\
0 & 1 & 0 & 0 & 0 & 0 & 0 & \omega^2 & \omega^2\\
0 & 0 & 1 & 0 & 0 & 0 & 0 & \omega^2 & \omega^2\\
0 & 0 & 0 & 1 & 0 & 0 & 0 & \omega^2 & \omega\\
0 & 0 & 0 & 0 & 1 & 0 & 0 & 0 & \omega\\
0 & 0 & 0 & 0 & 0 & 1 & 0 & 1 & 0\\
0 & 0 & 0 & 0 & 0 & 0 & 1 & \omega^2 & 1
\end{pmatrix}.
\]
We quickly confirm that $\dim(\Hull_{\rm H}(\C_1))=1$ and $d(\C_1^{\perp_{\rm H}})=7$. Upon selecting ${\bf u}_1=(\omega^2,0,\omega^2,\omega^2,1,1,\omega,1,1)$ and $a=1$, a direct computation yields $\dim(\Hull_{\rm H}(\C_1({\bf u}_1,1)))=0$ and $d(\C_1({\bf u}_1,1)^{\perp_{\rm H}})=7$. Thus, 
$(\Delta_{\Hull_{\rm H}},\Delta_{d^{\perp_{\rm H}}})=(-1,0)$, which corresponds
to Family $1$ in Table~\ref{tab:parameter-transformations}. 

Applying Lemma \ref{lem.Hermitian-Construction} to $\C_1({\bf u}_1,1)$ and computing the actual minimum distance yield $\mathcal{Q}(\C_1({\bf u}_1,1))$ with parameters $[[10,2,7;8]]_2$. Compared with the best-known $[[12,2, 7;9]]_2$ code in \cite{G-binary-2025}, our $\mathcal{Q}(\C_1({\bf u}_1,1))$ maintains the same dimension and minimum distance. It features a \emph{shorter length} and requires \emph{fewer pre-shared entangled pairs}. 
\end{example}

\begin{example}\label{ex:improved-eaqecc-small}
Let $\F_9=\{0,1,\omega,\omega^2,\omega^3,\omega^4,\omega^5,\omega^6,\omega^7\}$ 
with $\omega^2=\omega+1$. 
Let $\C_2$ be the $[7,6]_9$ linear code generated by
\[
G=
\begin{pmatrix}
1 & 0 & 0 & 0 & 0 & 0 & 1\\
0 & 1 & 0 & 0 & 0 & 0 & \omega^2\\
0 & 0 & 1 & 0 & 0 & 0 & \omega\\
0 & 0 & 0 & 1 & 0 & 0 & \omega^5\\
0 & 0 & 0 & 0 & 1 & 0 & \omega^2\\
0 & 0 & 0 & 0 & 0 & 1 & \omega
\end{pmatrix}.
\]
The code has $\dim(\Hull_{\rm H}(\C_2))=0$ and $d(\C_2^{\perp_{\rm H}})=7$. Choosing ${\bf u}_2=(\omega^7,\omega^5,\omega^2,\omega^6,\omega^5,\omega^2,\omega^3)$ and $a=1$ yields $\dim(\Hull_{\rm H}(\C_2({\bf u}_2,1)))=1$ and $d(\C_2({\bf u}_2,1)^{\perp_{\rm H}})=8$. Thus, 
$(\Delta_{\Hull_{\rm H}},\Delta_{d^{\perp_{\rm H}}})=(1,1)$, which corresponds
to Family $6$ in Table \ref{tab:parameter-transformations}.  

Applying Lemma \ref{lem.Hermitian-Construction} to $\C_2({\bf u}_2,1)$ produces $\mathcal{Q}(\C_2({\bf u}_2,1))$ with parameters $[[8,0,8;6]]_3$. Here, $\mathcal{Q}(\C_2({\bf u}_2,1))$ is pure since it has zero dimension. Compared with the best-known $[[9,0,8;6]]_3$ code in \cite{G-ternary-2025}, our $\mathcal{Q}(\C_2({\bf u}_2,1))$ has the same dimension, minimum distance, and number of entangled pairs, but has a \emph{shorter length}. 
\end{example}

In manners similar to Examples \ref{ex:binary-MI-example} and \ref{ex:improved-eaqecc-small}, we obtain numerous improved and new EAQECCs by applying the results in Sections \ref{sec.hull} and \ref{sec.improvment}. 
\begin{enumerate}
\item After comparing with the known parameters of qubit and qutrit codes in \cite{G-binary-2025,G-ternary-2025} and 
the latest records in \cite{LLS2025,K2023,FCLLC2025,LLZS2024,LZ2024-DM,CLZ2025,LZ2024-JAMC}, 
we present $281$ qubit and qutrit codes in the specified ranges of lengths, 
including $267$ qubit parameters for lengths up to $40$ and $14$ qutrit parameters for lengths up to $25$. Among them, $244$ parameter sets improve on the best-known records. The remaining $37$ are new in the sense that, to the best of our knowledge, they have not been reported before.

\item Tables \ref{tab:breakthroughs2140} and \ref{tab:breakthroughs0118} list our new and improved parameters, obtained from $\C({\bf u},a)$. The following abbreviations in the tables signify the exact parameter(s) in which our quantum codes are better than the best-known ones in \cite{G-binary-2025,G-ternary-2025,LLS2025,K2023,FCLLC2025,LLZS2024,LZ2024-DM,CLZ2025,LZ2024-JAMC}. The codes of type SL have \emph{shorter length}. Those of Type HD have \emph{higher dimension}. The Type LD is for codes with \emph{larger minimum distance}. If the codes require \emph{less number of pre-shared entangled pairs}, then they are of Type LE. The codes of Type MI have \emph{multiple simultaneous improvements}, having two or more of the above four improvement types. We use NP to indicate that the parameters are \emph{new}.

\item The ``Source'' column records the source of the corresponding known qubit or qutrit entanglement-assisted codes. The ``$\Delta_{\Hull_{\rm H}}$'' and ``$\Delta_{d^{\perp_{\rm H}}}$'' columns record, respectively, the changes in the Hermitian hull dimension and the Hermitian dual distance from $\C$ to the generalized extended code $\C({\bf u},a)$. To save space, we do not list the detailed information about the initial linear code $\C$, extension vector ${\bf u}$, 
and scalar $a$. Interested readers are invited to inspect our online database \cite{Github} for the information.
\end{enumerate}

\newcounter{ternarycount}
\begin{center}
\setcounter{ternarycount}{0}
\setlength{\tabcolsep}{6pt}
\begin{longtable}{@{}c@{}}
\caption{New and improved EA qubit codes of lengths $7 \leq n \leq 40$}
\label{tab:breakthroughs2140}\\
\endfirsthead

\caption{\mbox{New and improved EA qubit codes of lengths $7 \leq n\leq 40$ (cont.)}}\\
\endhead

\resizebox{\linewidth}{!}{%
\setlength{\tabcolsep}{3pt}%
\begin{tabular}{c|llc|cc|c||c|llc|cc|c}
\toprule
\mbox{No.} & \mbox{Our Qubit Code} & \mbox{Known Code} & \mbox{Source} & \mbox{$\Delta_{\Hull_{\rm H}}$} & \mbox{$\Delta_{d^{\perp_{\rm H}}}$} & \mbox{Type} & \mbox{No.} & \mbox{Our Qubit Code} & \mbox{Known Code} & \mbox{Source} & \mbox{$\Delta_{\Hull_{\rm H}}$} & \mbox{$\Delta_{d^{\perp_{\rm H}}}$} & \mbox{Type} \\
\midrule
\stepcounter{ternarycount}\theternarycount & $[[7,2,5;5]]_2$ & $[[8,1,5;5]]_2$ & \cite{LSL2023} & 0 & 1 & MI & \stepcounter{ternarycount}\theternarycount & $[[8,2,6;6]]_2$ & $[[10,2,6;6]]_2$ & \cite{LSL2023} & 0 & 1 & SL \\
\stepcounter{ternarycount}\theternarycount & $[[9,3,5;4]]_2$ & $[[9,2,5;4]]_2$ & \cite{G-binary-2025} & 0 & 1 & HD & \stepcounter{ternarycount}\theternarycount & $[[10,2,7;8]]_2$ & $[[12,2,7;9]]_2$ & \cite{G-binary-2025} & -1 & 0 & MI \\
\stepcounter{ternarycount}\theternarycount & $[[11,2,7;7]]_2$ & $[[12,2,7;9]]_2$ & \cite{G-binary-2025} & 1 & 1 & MI & \stepcounter{ternarycount}\theternarycount & $[[11,2,8;9]]_2$ & $[[12,2,7;9]]_2$ & \cite{G-binary-2025} & 0 & 1 & MI \\
\stepcounter{ternarycount}\theternarycount & $[[11,3,7;8]]_2$ & $[[12,3,7;9]]_2$ & \cite{G-binary-2025} & 0 & 1 & MI & \stepcounter{ternarycount}\theternarycount & $[[12,2,9;10]]_2$ & $-$ & $-$ & -1 & 1 & NP \\
\stepcounter{ternarycount}\theternarycount & $[[13,2,10;11]]_2$ & $[[14,2,10;12]]_2$ & \cite{FCLLC2025} & -1 & 1 & MI & \stepcounter{ternarycount}\theternarycount & $[[13,6,6;7]]_2$ & $[[13,3,6;8]]_2$ & \cite{K2023} & -1 & 0 & MI \\
\stepcounter{ternarycount}\theternarycount & $[[13,7,5;6]]_2$ & $[[15,7,5;6]]_2$ & \cite{G-binary-2025} & -1 & 1 & SL & \stepcounter{ternarycount}\theternarycount & $[[14,5,7;9]]_2$ & $[[15,5,7;10]]_2$ & \cite{G-binary-2025} & -1 & 1 & MI \\
\stepcounter{ternarycount}\theternarycount & $[[15,2,11;13]]_2$ & $-$ & $-$ & 0 & 1 & NP & \stepcounter{ternarycount}\theternarycount & $[[15,3,10;12]]_2$ & $-$ & $-$ & 0 & 1 & NP \\
\stepcounter{ternarycount}\theternarycount & $[[15,7,6;8]]_2$ & $[[15,6,6;9]]_2$ & \cite{G-binary-2025} & 0 & 0 & MI & \stepcounter{ternarycount}\theternarycount & $[[15,12,3;3]]_2$ & $[[16,12,3;3]]_2$ & \cite{G-binary-2025} & -1 & 1 & SL \\
\stepcounter{ternarycount}\theternarycount & $[[16,11,4;5]]_2$ & $[[36,11,4;25]]_2$ & \cite{ZK2026} & 0 & 1 & MI & \stepcounter{ternarycount}\theternarycount & $[[17,2,13;15]]_2$ & $-$ & $-$ & 0 & 1 & NP \\
\stepcounter{ternarycount}\theternarycount & $[[17,4,7;5]]_2$ & $[[17,4,7;8]]_2$ & \cite{G-binary-2025} & 1 & 1 & LE & \stepcounter{ternarycount}\theternarycount & $[[17,4,10;13]]_2$ & $-$ & $-$ & 0 & 1 & NP \\
\stepcounter{ternarycount}\theternarycount & $[[17,5,7;6]]_2$ & $[[17,5,7;8]]_2$ & \cite{G-binary-2025} & 0 & 1 & LE & \stepcounter{ternarycount}\theternarycount & $[[17,6,7;7]]_2$ & $[[17,5,7;8]]_2$ & \cite{G-binary-2025} & -1 & 1 & MI \\
\stepcounter{ternarycount}\theternarycount & $[[17,12,3;1]]_2$ & $[[17,12,3;2]]_2$ & \cite{G-binary-2025} & 0 & 1 & LE & \stepcounter{ternarycount}\theternarycount & $[[17,14,3;3]]_2$ & $[[18,14,3;3]]_2$ & \cite{G-binary-2025} & 0 & 1 & SL \\
\stepcounter{ternarycount}\theternarycount & $[[18,5,8;9]]_2$ & $[[20,5,8;9]]_2$ & \cite{G-binary-2025} & 1 & 1 & SL & \stepcounter{ternarycount}\theternarycount & $[[18,8,7;10]]_2$ & $[[36,8,4;28]]_2$ & \cite{ZK2026} & -1 & 1 & MI \\
\stepcounter{ternarycount}\theternarycount & $[[18,11,5;7]]_2$ & $[[36,11,4;25]]_2$ & \cite{ZK2026} & -1 & 1 & MI & \stepcounter{ternarycount}\theternarycount & $[[18,13,4;5]]_2$ & $[[36,12,4;24]]_2$ & \cite{ZK2026} & -1 & 1 & MI \\
\stepcounter{ternarycount}\theternarycount & $[[19,3,13;16]]_2$ & $-$ & $-$ & 0 & 1 & NP & \stepcounter{ternarycount}\theternarycount & $[[19,12,5;7]]_2$ & $[[36,12,4;24]]_2$ & \cite{ZK2026} & 0 & 1 & MI \\
\stepcounter{ternarycount}\theternarycount & $[[20,2,15;18]]_2$ & $[[21,2,15;19]]_2$ & \cite{FCLLC2025} & 0 & 1 & MI & \stepcounter{ternarycount}\theternarycount & $[[20,5,11;15]]_2$ & $-$ & $-$ & 0 & 1 & NP \\
\stepcounter{ternarycount}\theternarycount & $[[20,6,10;14]]_2$ & $[[36,6,6;30]]_2$ & \cite{ZK2026} & 0 & 1 & MI & \stepcounter{ternarycount}\theternarycount & $[[20,11,6;9]]_2$ & $[[36,11,4;25]]_2$ & \cite{ZK2026} & 0 & 1 & MI \\
\stepcounter{ternarycount}\theternarycount & $[[21,2,16;19]]_2$ & $[[21,2,15;19]]_2$ & \cite{FCLLC2025} & 0 & 1 & LD & \stepcounter{ternarycount}\theternarycount & $[[21,3,14;18]]_2$ & $-$ & $-$ & 0 & 1 & NP \\
\stepcounter{ternarycount}\theternarycount & $[[21,10,8;11]]_2$ & $[[26,10,8;16]]_2$ & \cite{G-binary-2025} & -1 & 1 & MI & \stepcounter{ternarycount}\theternarycount & $[[21,11,6;8]]_2$ & $[[36,11,4;25]]_2$ & \cite{ZK2026} & -1 & 1 & MI \\
\stepcounter{ternarycount}\theternarycount & $[[21,12,6;9]]_2$ & $[[36,12,4;24]]_2$ & \cite{ZK2026} & 0 & 1 & MI & \stepcounter{ternarycount}\theternarycount & $[[22,2,17;20]]_2$ & $-$ & $-$ & 0 & 1 & NP \\
\stepcounter{ternarycount}\theternarycount & $[[22,3,15;19]]_2$ & $-$ & $-$ & 0 & 1 & NP & \stepcounter{ternarycount}\theternarycount & $[[22,6,11;16]]_2$ & $[[29,6,11;16]]_2$ & \cite{G-binary-2025} & 0 & 1 & SL \\
\stepcounter{ternarycount}\theternarycount & $[[22,7,10;15]]_2$ & $[[28,7,10;16]]_2$ & \cite{G-binary-2025} & 0 & 1 & MI & \stepcounter{ternarycount}\theternarycount & $[[22,8,9;14]]_2$ & $[[27,8,9;16]]_2$ & \cite{G-binary-2025} & 0 & 1 & MI \\
\stepcounter{ternarycount}\theternarycount & $[[22,13,6;9]]_2$ & $[[36,12,4;24]]_2$ & \cite{ZK2026} & 0 & 1 & MI & \stepcounter{ternarycount}\theternarycount & $[[23,2,18;21]]_2$ & $[[24,2,18;22]]_2$ & \cite{FCLLC2025} & 0 & 1 & MI \\
\stepcounter{ternarycount}\theternarycount & $[[23,3,16;20]]_2$ & $-$ & $-$ & 0 & 1 & NP & \stepcounter{ternarycount}\theternarycount & $[[23,5,12;16]]_2$ & $[[30,5,12;16]]_2$ & \cite{G-binary-2025} & 0 & 1 & SL \\
\stepcounter{ternarycount}\theternarycount & $[[23,5,13;18]]_2$ & $-$ & $-$ & 0 & 1 & NP & \stepcounter{ternarycount}\theternarycount & $[[23,6,8;7]]_2$ & $[[23,6,8;9]]_2$ & \cite{G-binary-2025} & 0 & 0 & LE \\
\stepcounter{ternarycount}\theternarycount & $[[23,6,10;13]]_2$ & $[[36,6,6;30]]_2$ & \cite{ZK2026} & 1 & 1 & MI & \stepcounter{ternarycount}\theternarycount & $[[23,6,11;15]]_2$ & $[[29,6,11;16]]_2$ & \cite{G-binary-2025} & 1 & 1 & MI \\
\stepcounter{ternarycount}\theternarycount & $[[23,6,12;17]]_2$ & $[[36,6,6;30]]_2$ & \cite{ZK2026} & -1 & 1 & MI & \stepcounter{ternarycount}\theternarycount & $[[23,7,8;8]]_2$ & $[[23,7,8;9]]_2$ & \cite{G-binary-2025} & -1 & 0 & LE \\
\stepcounter{ternarycount}\theternarycount & $[[23,7,10;14]]_2$ & $[[28,7,10;16]]_2$ & \cite{G-binary-2025} & 0 & 1 & MI & \stepcounter{ternarycount}\theternarycount & $[[23,7,11;16]]_2$ & $[[28,7,10;16]]_2$ & \cite{G-binary-2025} & 0 & 1 & MI \\
\stepcounter{ternarycount}\theternarycount & $[[23,8,9;13]]_2$ & $[[27,8,9;16]]_2$ & \cite{G-binary-2025} & 1 & 1 & MI & \stepcounter{ternarycount}\theternarycount & $[[23,8,10;15]]_2$ & $[[27,8,9;16]]_2$ & \cite{G-binary-2025} & -1 & 1 & MI \\
\stepcounter{ternarycount}\theternarycount & $[[23,9,9;14]]_2$ & $[[27,9,9;16]]_2$ & \cite{G-binary-2025} & 0 & 1 & MI & \stepcounter{ternarycount}\theternarycount & $[[23,10,8;11]]_2$ & $[[26,10,8;16]]_2$ & \cite{G-binary-2025} & 1 & 0 & MI \\
\stepcounter{ternarycount}\theternarycount & $[[23,11,8;12]]_2$ & $[[26,10,8;16]]_2$ & \cite{G-binary-2025} & 0 & 0 & MI & \stepcounter{ternarycount}\theternarycount & $[[24,3,17;21]]_2$ & $[[36,3,17;23]]_2$ & \cite{G-binary-2025} & 0 & 1 & MI \\
\stepcounter{ternarycount}\theternarycount & $[[24,4,15;20]]_2$ & $[[34,4,15;23]]_2$ & \cite{G-binary-2025} & 0 & 1 & MI & \stepcounter{ternarycount}\theternarycount & $[[24,5,13;17]]_2$ & $-$ & $-$ & 1 & 1 & NP \\
\stepcounter{ternarycount}\theternarycount & $[[24,6,13;18]]_2$ & $[[36,6,6;30]]_2$ & \cite{ZK2026} & 0 & 1 & MI & \stepcounter{ternarycount}\theternarycount & $[[24,9,9;13]]_2$ & $[[27,9,9;16]]_2$ & \cite{G-binary-2025} & -1 & 1 & MI \\
\stepcounter{ternarycount}\theternarycount & $[[24,11,7;9]]_2$ & $[[24,10,7;9]]_2$ & \cite{G-binary-2025} & 1 & 1 & HD & \stepcounter{ternarycount}\theternarycount & $[[24,13,7;11]]_2$ & $[[36,12,4;24]]_2$ & \cite{ZK2026} & 0 & 1 & MI \\
\stepcounter{ternarycount}\theternarycount & $[[25,2,19;23]]_2$ & $-$ & $-$ & -1 & 0 & NP & \stepcounter{ternarycount}\theternarycount & $[[25,7,12;18]]_2$ & $[[36,6,6;30]]_2$ & \cite{ZK2026} & -1 & 1 & MI \\
\stepcounter{ternarycount}\theternarycount & $[[25,8,11;17]]_2$ & $[[36,8,4;28]]_2$ & \cite{ZK2026} & 0 & 1 & MI & \stepcounter{ternarycount}\theternarycount & $[[25,9,8;8]]_2$ & $[[25,9,8;9]]_2$ & \cite{G-binary-2025} & 1 & 1 & LE \\
\stepcounter{ternarycount}\theternarycount & $[[25,9,10;16]]_2$ & $[[27,9,9;16]]_2$ & \cite{G-binary-2025} & 0 & 1 & MI & \stepcounter{ternarycount}\theternarycount & $[[25,10,8;9]]_2$ & $[[25,9,8;9]]_2$ & \cite{G-binary-2025} & 0 & 1 & HD \\
\stepcounter{ternarycount}\theternarycount & $[[25,10,9;15]]_2$ & $[[26,10,8;16]]_2$ & \cite{G-binary-2025} & -1 & 1 & MI & \stepcounter{ternarycount}\theternarycount & $[[25,11,8;10]]_2$ & $[[26,10,8;16]]_2$ & \cite{G-binary-2025} & -1 & 1 & MI \\
\stepcounter{ternarycount}\theternarycount & $[[25,12,8;13]]_2$ & $[[26,10,8;16]]_2$ & \cite{G-binary-2025} & 0 & 0 & MI & \stepcounter{ternarycount}\theternarycount & $[[25,14,7;11]]_2$ & $[[29,14,7;13]]_2$ & \cite{G-binary-2025} & -1 & 1 & MI \\
\stepcounter{ternarycount}\theternarycount & $[[26,3,18;23]]_2$ & $[[36,3,17;23]]_2$ & \cite{G-binary-2025} & 0 & 1 & MI & \stepcounter{ternarycount}\theternarycount & $[[26,5,15;21]]_2$ & $[[34,5,15;23]]_2$ & \cite{G-binary-2025} & 0 & 1 & MI \\
\stepcounter{ternarycount}\theternarycount & $[[26,6,12;16]]_2$ & $[[29,6,11;16]]_2$ & \cite{G-binary-2025} & 1 & 1 & MI & \stepcounter{ternarycount}\theternarycount & $[[26,7,11;15]]_2$ & $[[28,7,10;16]]_2$ & \cite{G-binary-2025} & 1 & 1 & MI \\
\stepcounter{ternarycount}\theternarycount & $[[26,7,12;17]]_2$ & $[[36,6,6;30]]_2$ & \cite{ZK2026} & 1 & 1 & MI & \stepcounter{ternarycount}\theternarycount & $[[26,7,13;19]]_2$ & $[[36,6,6;30]]_2$ & \cite{ZK2026} & -1 & 1 & MI \\
\stepcounter{ternarycount}\theternarycount & $[[26,8,10;14]]_2$ & $[[27,8,9;16]]_2$ & \cite{G-binary-2025} & 1 & 1 & MI & \stepcounter{ternarycount}\theternarycount & $[[26,8,11;16]]_2$ & $[[27,8,9;16]]_2$ & \cite{G-binary-2025} & 0 & 1 & MI \\
\stepcounter{ternarycount}\theternarycount & $[[26,8,12;18]]_2$ & $[[36,8,4;28]]_2$ & \cite{ZK2026} & -1 & 1 & MI & \stepcounter{ternarycount}\theternarycount & $[[26,9,10;15]]_2$ & $[[27,9,9;16]]_2$ & \cite{G-binary-2025} & 1 & 1 & MI \\
\stepcounter{ternarycount}\theternarycount & $[[26,9,11;17]]_2$ & $[[36,9,4;27]]_2$ & \cite{ZK2026} & 0 & 1 & MI & \stepcounter{ternarycount}\theternarycount & $[[26,10,9;14]]_2$ & $[[26,10,8;16]]_2$ & \cite{G-binary-2025} & 1 & 1 & MI \\
\stepcounter{ternarycount}\theternarycount & $[[26,10,10;16]]_2$ & $[[26,10,8;16]]_2$ & \cite{G-binary-2025} & 0 & 1 & LD & \stepcounter{ternarycount}\theternarycount & $[[26,11,9;15]]_2$ & $[[26,10,8;16]]_2$ & \cite{G-binary-2025} & 0 & 1 & MI \\
\stepcounter{ternarycount}\theternarycount & $[[26,12,8;12]]_2$ & $[[26,10,8;16]]_2$ & \cite{G-binary-2025} & 0 & 1 & MI & \stepcounter{ternarycount}\theternarycount & $[[26,13,8;13]]_2$ & $[[30,13,8;13]]_2$ & \cite{G-binary-2025} & 0 & 1 & SL \\
\stepcounter{ternarycount}\theternarycount & $[[26,18,5;8]]_2$ & $[[36,12,4;24]]_2$ & \cite{ZK2026} & -1 & 1 & MI & \stepcounter{ternarycount}\theternarycount & $[[27,2,21;25]]_2$ & $-$ & $-$ & 0 & 1 & NP \\
\stepcounter{ternarycount}\theternarycount & $[[27,3,19;24]]_2$ & $-$ & $-$ & 0 & 1 & NP & \stepcounter{ternarycount}\theternarycount & $[[27,6,14;21]]_2$ & $[[33,5,14;23]]_2$ & \cite{G-binary-2025} & 0 & 1 & MI \\
\stepcounter{ternarycount}\theternarycount & $[[27,9,10;14]]_2$ & $[[27,9,9;16]]_2$ & \cite{G-binary-2025} & 1 & 1 & MI & \stepcounter{ternarycount}\theternarycount & $[[27,10,10;15]]_2$ & $[[27,10,9;16]]_2$ & \cite{G-binary-2025} & 1 & 1 & MI \\
\midrule
\end{tabular}%
}
\\[1ex]
\resizebox{\linewidth}{!}{%
\setlength{\tabcolsep}{3pt}%
\begin{tabular}{c|llc|cc|c||c|llc|cc|c}
\midrule
\mbox{No.} & \mbox{Our Qubit Code} & \mbox{Known Code} & \mbox{Source} & \mbox{$\Delta_{\Hull_{\rm H}}$} & \mbox{$\Delta_{d^{\perp_{\rm H}}}$} & \mbox{Type} & \mbox{No.} & \mbox{Our Qubit Code} & \mbox{Known Code} & \mbox{Source} & \mbox{$\Delta_{\Hull_{\rm H}}$} & \mbox{$\Delta_{d^{\perp_{\rm H}}}$} & \mbox{Type} \\
\midrule
\stepcounter{ternarycount}\theternarycount & $[[27,10,11;17]]_2$ & $[[36,10,4;26]]_2$ & \cite{ZK2026} & -1 & 1 & MI & \stepcounter{ternarycount}\theternarycount & $[[27,11,9;12]]_2$ & $[[27,11,9;16]]_2$ & \cite{G-binary-2025} & 1 & 0 & LE \\
\stepcounter{ternarycount}\theternarycount & $[[27,11,10;16]]_2$ & $[[27,11,9;16]]_2$ & \cite{G-binary-2025} & -1 & 1 & LD & \stepcounter{ternarycount}\theternarycount & $[[27,12,9;13]]_2$ & $[[27,11,9;16]]_2$ & \cite{G-binary-2025} & 1 & 0 & MI \\
\stepcounter{ternarycount}\theternarycount & $[[27,13,9;14]]_2$ & $[[27,11,9;16]]_2$ & \cite{G-binary-2025} & 0 & 0 & MI & \stepcounter{ternarycount}\theternarycount & $[[27,15,7;12]]_2$ & $[[29,15,7;13]]_2$ & \cite{G-binary-2025} & -1 & 1 & MI \\
\stepcounter{ternarycount}\theternarycount & $[[27,17,6;10]]_2$ & $[[36,12,4;24]]_2$ & \cite{ZK2026} & -1 & 1 & MI & \stepcounter{ternarycount}\theternarycount & $[[27,18,5;7]]_2$ & $[[36,12,4;24]]_2$ & \cite{ZK2026} & 0 & 1 & MI \\
\stepcounter{ternarycount}\theternarycount & $[[27,19,5;8]]_2$ & $[[36,12,4;24]]_2$ & \cite{ZK2026} & -1 & 1 & MI & \stepcounter{ternarycount}\theternarycount & $[[28,2,22;26]]_2$ & $-$ & $-$ & 0 & 1 & NP \\
\stepcounter{ternarycount}\theternarycount & $[[28,6,15;22]]_2$ & $[[34,5,15;23]]_2$ & \cite{G-binary-2025} & 0 & 1 & MI & \stepcounter{ternarycount}\theternarycount & $[[28,7,14;21]]_2$ & $[[33,7,14;23]]_2$ & \cite{G-binary-2025} & -1 & 1 & MI \\
\stepcounter{ternarycount}\theternarycount & $[[28,8,13;20]]_2$ & $[[36,8,4;28]]_2$ & \cite{ZK2026} & 0 & 1 & MI & \stepcounter{ternarycount}\theternarycount & $[[28,9,12;19]]_2$ & $[[36,9,4;27]]_2$ & \cite{ZK2026} & 0 & 1 & MI \\
\stepcounter{ternarycount}\theternarycount & $[[28,11,10;15]]_2$ & $[[28,11,10;16]]_2$ & \cite{G-binary-2025} & 1 & 1 & LE & \stepcounter{ternarycount}\theternarycount & $[[28,14,7;10]]_2$ & $[[29,14,7;13]]_2$ & \cite{G-binary-2025} & 1 & 1 & MI \\
\stepcounter{ternarycount}\theternarycount & $[[28,15,7;11]]_2$ & $[[29,15,7;13]]_2$ & \cite{G-binary-2025} & 0 & 1 & MI & \stepcounter{ternarycount}\theternarycount & $[[28,16,7;12]]_2$ & $[[29,16,7;13]]_2$ & \cite{G-binary-2025} & -1 & 1 & MI \\
\stepcounter{ternarycount}\theternarycount & $[[28,17,6;9]]_2$ & $[[36,12,4;24]]_2$ & \cite{ZK2026} & 0 & 1 & MI & \stepcounter{ternarycount}\theternarycount & $[[28,18,6;10]]_2$ & $[[36,12,4;24]]_2$ & \cite{ZK2026} & 0 & 1 & MI \\
\stepcounter{ternarycount}\theternarycount & $[[28,19,5;7]]_2$ & $[[36,12,4;24]]_2$ & \cite{ZK2026} & -1 & 1 & MI & \stepcounter{ternarycount}\theternarycount & $[[28,20,5;8]]_2$ & $[[36,12,4;24]]_2$ & \cite{ZK2026} & 0 & 1 & MI \\
\stepcounter{ternarycount}\theternarycount & $[[29,5,17;24]]_2$ & $-$ & $-$ & 0 & 1 & NP & \stepcounter{ternarycount}\theternarycount & $[[29,6,9;7]]_2$ & $[[29,6,9;10]]_2$ & \cite{G-binary-2025} & 0 & 1 & LE \\
\stepcounter{ternarycount}\theternarycount & $[[29,6,16;23]]_2$ & $[[33,6,14;23]]_2$ & \cite{G-binary-2025} & -1 & 1 & MI & \stepcounter{ternarycount}\theternarycount & $[[29,7,9;8]]_2$ & $[[29,7,9;10]]_2$ & \cite{G-binary-2025} & 1 & 0 & LE \\
\stepcounter{ternarycount}\theternarycount & $[[29,8,9;9]]_2$ & $[[29,7,9;10]]_2$ & \cite{G-binary-2025} & 0 & 0 & MI & \stepcounter{ternarycount}\theternarycount & $[[29,9,9;10]]_2$ & $[[29,7,9;10]]_2$ & \cite{G-binary-2025} & -1 & 0 & HD \\
\stepcounter{ternarycount}\theternarycount & $[[29,9,12;18]]_2$ & $[[36,9,4;27]]_2$ & \cite{ZK2026} & 1 & 1 & MI & \stepcounter{ternarycount}\theternarycount & $[[29,10,12;19]]_2$ & $[[36,10,4;26]]_2$ & \cite{ZK2026} & 0 & 1 & MI \\
\stepcounter{ternarycount}\theternarycount & $[[29,14,7;9]]_2$ & $[[29,14,7;13]]_2$ & \cite{G-binary-2025} & 0 & 1 & LE & \stepcounter{ternarycount}\theternarycount & $[[29,14,9;15]]_2$ & $[[30,9,9;15]]_2$ & \cite{G-binary-2025} & 0 & 0 & MI \\
\stepcounter{ternarycount}\theternarycount & $[[29,15,7;10]]_2$ & $[[29,15,7;13]]_2$ & \cite{G-binary-2025} & 1 & 1 & LE & \stepcounter{ternarycount}\theternarycount & $[[29,16,7;11]]_2$ & $[[29,16,7;13]]_2$ & \cite{G-binary-2025} & 0 & 1 & LE \\
\stepcounter{ternarycount}\theternarycount & $[[29,17,6;8]]_2$ & $[[36,12,4;24]]_2$ & \cite{ZK2026} & 0 & 1 & MI & \stepcounter{ternarycount}\theternarycount & $[[29,17,7;12]]_2$ & $[[29,16,7;13]]_2$ & \cite{G-binary-2025} & -1 & 1 & MI \\
\stepcounter{ternarycount}\theternarycount & $[[29,18,6;9]]_2$ & $[[36,12,4;24]]_2$ & \cite{ZK2026} & -1 & 1 & MI & \stepcounter{ternarycount}\theternarycount & $[[29,19,5;6]]_2$ & $[[36,12,4;24]]_2$ & \cite{ZK2026} & -1 & 1 & MI \\
\stepcounter{ternarycount}\theternarycount & $[[29,21,5;8]]_2$ & $[[36,12,4;24]]_2$ & \cite{ZK2026} & 0 & 1 & MI & \stepcounter{ternarycount}\theternarycount & $[[30,2,23;28]]_2$ & $-$ & $-$ & -1 & 0 & NP \\
\stepcounter{ternarycount}\theternarycount & $[[30,4,19;26]]_2$ & $-$ & $-$ & 0 & 1 & NP & \stepcounter{ternarycount}\theternarycount & $[[30,8,14;22]]_2$ & $[[33,8,14;23]]_2$ & \cite{G-binary-2025} & 0 & 1 & MI \\
\stepcounter{ternarycount}\theternarycount & $[[30,11,10;15]]_2$ & $[[30,9,9;15]]_2$ & \cite{G-binary-2025} & 1 & 1 & MI & \stepcounter{ternarycount}\theternarycount & $[[30,13,8;11]]_2$ & $[[30,13,8;13]]_2$ & \cite{G-binary-2025} & 1 & 1 & LE \\
\stepcounter{ternarycount}\theternarycount & $[[30,13,10;17]]_2$ & $[[34,13,10;19]]_2$ & \cite{G-binary-2025} & 0 & 1 & MI & \stepcounter{ternarycount}\theternarycount & $[[30,14,8;12]]_2$ & $[[30,14,8;13]]_2$ & \cite{G-binary-2025} & 0 & 1 & LE \\
\stepcounter{ternarycount}\theternarycount & $[[30,16,7;10]]_2$ & $[[36,12,4;24]]_2$ & \cite{ZK2026} & 1 & 1 & MI & \stepcounter{ternarycount}\theternarycount & $[[30,17,7;11]]_2$ & $[[36,12,4;24]]_2$ & \cite{ZK2026} & 0 & 1 & MI \\
\stepcounter{ternarycount}\theternarycount & $[[30,18,7;12]]_2$ & $[[33,18,7;13]]_2$ & \cite{G-binary-2025} & -1 & 1 & MI & \stepcounter{ternarycount}\theternarycount & $[[31,2,24;29]]_2$ & $[[32,2,24;30]]_2$ & \cite{FCLLC2025} & 0 & 1 & MI \\
\stepcounter{ternarycount}\theternarycount & $[[31,4,20;27]]_2$ & $-$ & $-$ & 0 & 1 & NP & \stepcounter{ternarycount}\theternarycount & $[[31,6,17;25]]_2$ & $[[36,6,6;30]]_2$ & \cite{ZK2026} & 0 & 1 & MI \\
\stepcounter{ternarycount}\theternarycount & $[[31,7,10;8]]_2$ & $[[31,7,9;8]]_2$ & \cite{G-binary-2025} & 1 & 1 & LD & \stepcounter{ternarycount}\theternarycount & $[[31,7,16;24]]_2$ & $[[36,6,6;30]]_2$ & \cite{ZK2026} & 0 & 1 & MI \\
\stepcounter{ternarycount}\theternarycount & $[[31,8,10;9]]_2$ & $[[31,8,9;10]]_2$ & \cite{G-binary-2025} & 1 & 1 & MI & \stepcounter{ternarycount}\theternarycount & $[[31,8,15;23]]_2$ & $[[33,8,14;23]]_2$ & \cite{G-binary-2025} & -1 & 1 & MI \\
\stepcounter{ternarycount}\theternarycount & $[[31,9,10;10]]_2$ & $[[31,8,9;10]]_2$ & \cite{G-binary-2025} & 0 & 1 & MI & \stepcounter{ternarycount}\theternarycount & $[[31,10,10;11]]_2$ & $[[31,10,9;13]]_2$ & \cite{G-binary-2025} & -1 & 1 & MI \\
\stepcounter{ternarycount}\theternarycount & $[[31,10,13;21]]_2$ & $[[36,10,4;26]]_2$ & \cite{ZK2026} & 0 & 1 & MI & \stepcounter{ternarycount}\theternarycount & $[[31,11,10;12]]_2$ & $[[31,11,9;13]]_2$ & \cite{G-binary-2025} & 1 & 0 & MI \\
\stepcounter{ternarycount}\theternarycount & $[[31,11,12;20]]_2$ & $[[36,11,4;25]]_2$ & \cite{ZK2026} & -1 & 1 & MI & \stepcounter{ternarycount}\theternarycount & $[[31,12,10;13]]_2$ & $[[31,12,9;13]]_2$ & \cite{G-binary-2025} & 1 & 0 & LD \\
\stepcounter{ternarycount}\theternarycount & $[[31,12,11;19]]_2$ & $[[35,12,11;19]]_2$ & \cite{G-binary-2025} & 0 & 1 & SL & \stepcounter{ternarycount}\theternarycount & $[[31,13,8;10]]_2$ & $[[36,12,4;24]]_2$ & \cite{ZK2026} & 1 & 0 & MI \\
\stepcounter{ternarycount}\theternarycount & $[[31,13,9;12]]_2$ & $[[31,13,9;13]]_2$ & \cite{G-binary-2025} & 1 & 0 & LE & \stepcounter{ternarycount}\theternarycount & $[[31,13,10;14]]_2$ & $[[31,12,10;16]]_2$ & \cite{G-binary-2025} & 0 & 0 & MI \\
\stepcounter{ternarycount}\theternarycount & $[[31,14,8;11]]_2$ & $[[36,12,4;24]]_2$ & \cite{ZK2026} & 0 & 0 & MI & \stepcounter{ternarycount}\theternarycount & $[[31,14,10;15]]_2$ & $[[31,12,10;16]]_2$ & \cite{G-binary-2025} & -1 & 0 & MI \\
\stepcounter{ternarycount}\theternarycount & $[[31,15,8;12]]_2$ & $[[36,12,4;24]]_2$ & \cite{ZK2026} & -1 & 0 & MI & \stepcounter{ternarycount}\theternarycount & $[[32,2,25;30]]_2$ & $[[32,2,24;30]]_2$ & \cite{FCLLC2025} & 0 & 1 & LD \\
\stepcounter{ternarycount}\theternarycount & $[[32,11,12;19]]_2$ & $[[36,11,12;19]]_2$ & \cite{G-binary-2025} & 0 & 1 & SL & \stepcounter{ternarycount}\theternarycount & $[[32,11,13;21]]_2$ & $[[36,11,4;25]]_2$ & \cite{ZK2026} & 0 & 1 & MI \\
\stepcounter{ternarycount}\theternarycount & $[[32,12,11;18]]_2$ & $[[35,12,11;19]]_2$ & \cite{G-binary-2025} & 1 & 1 & MI & \stepcounter{ternarycount}\theternarycount & $[[32,12,12;20]]_2$ & $[[36,12,4;24]]_2$ & \cite{ZK2026} & 0 & 1 & MI \\
\stepcounter{ternarycount}\theternarycount & $[[32,13,11;19]]_2$ & $[[34,13,10;19]]_2$ & \cite{G-binary-2025} & 0 & 1 & MI & \stepcounter{ternarycount}\theternarycount & $[[33,2,26;31]]_2$ & $-$ & $-$ & -1 & 1 & NP \\
\stepcounter{ternarycount}\theternarycount & $[[33,3,23;30]]_2$ & $-$ & $-$ & 0 & 1 & NP & \stepcounter{ternarycount}\theternarycount & $[[33,8,16;25]]_2$ & $[[36,8,4;28]]_2$ & \cite{ZK2026} & 0 & 1 & MI \\
\stepcounter{ternarycount}\theternarycount & $[[33,9,15;24]]_2$ & $[[36,9,4;27]]_2$ & \cite{ZK2026} & 0 & 1 & MI & \stepcounter{ternarycount}\theternarycount & $[[33,12,11;17]]_2$ & $[[35,12,11;19]]_2$ & \cite{G-binary-2025} & 1 & 1 & MI \\
\stepcounter{ternarycount}\theternarycount & $[[33,12,12;19]]_2$ & $[[35,12,11;19]]_2$ & \cite{G-binary-2025} & 1 & 1 & MI & \stepcounter{ternarycount}\theternarycount & $[[33,13,11;18]]_2$ & $[[34,13,10;19]]_2$ & \cite{G-binary-2025} & 0 & 1 & MI \\
\stepcounter{ternarycount}\theternarycount & $[[33,13,12;20]]_2$ & $[[36,12,4;24]]_2$ & \cite{ZK2026} & 0 & 1 & MI & \stepcounter{ternarycount}\theternarycount & $[[33,14,10;15]]_2$ & $[[33,8,10;15]]_2$ & \cite{G-binary-2025} & 1 & 0 & HD \\
\stepcounter{ternarycount}\theternarycount & $[[33,14,11;19]]_2$ & $[[34,14,10;19]]_2$ & \cite{G-binary-2025} & -1 & 1 & MI & \stepcounter{ternarycount}\theternarycount & $[[33,15,10;16]]_2$ & $[[33,12,10;16]]_2$ & \cite{G-binary-2025} & 1 & 0 & HD \\
\stepcounter{ternarycount}\theternarycount & $[[33,16,7;9]]_2$ & $[[36,12,4;24]]_2$ & \cite{ZK2026} & 1 & 1 & MI & \stepcounter{ternarycount}\theternarycount & $[[33,16,10;17]]_2$ & $[[34,15,10;19]]_2$ & \cite{G-binary-2025} & 0 & 0 & MI \\
\stepcounter{ternarycount}\theternarycount & $[[33,17,7;10]]_2$ & $[[36,12,4;24]]_2$ & \cite{ZK2026} & 1 & 1 & MI & \stepcounter{ternarycount}\theternarycount & $[[33,17,8;12]]_2$ & $[[33,17,8;13]]_2$ & \cite{G-binary-2025} & 0 & 1 & LE \\
\stepcounter{ternarycount}\theternarycount & $[[33,20,6;9]]_2$ & $[[36,12,4;24]]_2$ & \cite{ZK2026} & 1 & 1 & MI & \stepcounter{ternarycount}\theternarycount & $[[33,21,6;10]]_2$ & $[[36,12,4;24]]_2$ & \cite{ZK2026} & 0 & 1 & MI \\
\stepcounter{ternarycount}\theternarycount & $[[33,22,6;11]]_2$ & $[[36,12,4;24]]_2$ & \cite{ZK2026} & -1 & 1 & MI & \stepcounter{ternarycount}\theternarycount & $[[34,3,24;31]]_2$ & $-$ & $-$ & 0 & 1 & NP \\
\stepcounter{ternarycount}\theternarycount & $[[34,4,22;30]]_2$ & $-$ & $-$ & -1 & 1 & NP & \stepcounter{ternarycount}\theternarycount & $[[34,6,19;28]]_2$ & $[[36,6,6;30]]_2$ & \cite{ZK2026} & -1 & 1 & MI \\
\stepcounter{ternarycount}\theternarycount & $[[34,10,13;20]]_2$ & $[[36,10,4;26]]_2$ & \cite{ZK2026} & 1 & 1 & MI & \stepcounter{ternarycount}\theternarycount & $[[34,10,14;22]]_2$ & $[[38,10,14;22]]_2$ & \cite{G-binary-2025} & 1 & 1 & SL \\
\stepcounter{ternarycount}\theternarycount & $[[34,11,12;19]]_2$ & $[[36,11,12;19]]_2$ & \cite{G-binary-2025} & 1 & 1 & SL & \stepcounter{ternarycount}\theternarycount & $[[34,12,13;22]]_2$ & $[[36,12,4;24]]_2$ & \cite{ZK2026} & 0 & 1 & MI \\
\midrule
\end{tabular}%
}
\\[1ex]
\resizebox{\linewidth}{!}{%
\setlength{\tabcolsep}{3pt}%
\begin{tabular}{c|llc|cc|c||c|llc|cc|c}
\hline
\mbox{No.} & \mbox{Our Qubit Code} & \mbox{Known Code} & \mbox{Source} & \mbox{$\Delta_{\Hull_{\rm H}}$} & \mbox{$\Delta_{d^{\perp_{\rm H}}}$} & \mbox{Type} & \mbox{No.} & \mbox{Our Qubit Code} & \mbox{Known Code} & \mbox{Source} & \mbox{$\Delta_{\Hull_{\rm H}}$} & \mbox{$\Delta_{d^{\perp_{\rm H}}}$} & \mbox{Type} \\
\hline\hline
\stepcounter{ternarycount}\theternarycount & $[[34,18,7;10]]_2$ & $[[36,12,4;24]]_2$ & \cite{ZK2026} & 1 & 1 & MI & \stepcounter{ternarycount}\theternarycount & $[[34,18,9;16]]_2$ & $[[36,12,4;24]]_2$ & \cite{ZK2026} & 0 & 1 & MI \\
\stepcounter{ternarycount}\theternarycount & $[[35,2,27;33]]_2$ & $-$ & $-$ & 0 & 1 & NP & \stepcounter{ternarycount}\theternarycount & $[[35,11,12;18]]_2$ & $[[36,11,12;19]]_2$ & \cite{G-binary-2025} & 1 & 1 & MI \\
\stepcounter{ternarycount}\theternarycount & $[[35,12,12;19]]_2$ & $[[35,12,11;19]]_2$ & \cite{G-binary-2025} & 0 & 1 & LD & \stepcounter{ternarycount}\theternarycount & $[[35,14,12;21]]_2$ & $[[36,12,4;24]]_2$ & \cite{ZK2026} & -1 & 1 & MI \\
\stepcounter{ternarycount}\theternarycount & $[[35,20,8;15]]_2$ & $[[37,20,8;17]]_2$ & \cite{G-binary-2025} & -1 & 1 & MI & \stepcounter{ternarycount}\theternarycount & $[[36,13,11;17]]_2$ & $[[36,12,4;24]]_2$ & \cite{ZK2026} & 1 & 1 & MI \\
\stepcounter{ternarycount}\theternarycount & $[[36,20,8;14]]_2$ & $[[37,20,8;17]]_2$ & \cite{G-binary-2025} & 1 & 1 & MI & \stepcounter{ternarycount}\theternarycount & $[[36,21,8;15]]_2$ & $[[37,21,7;16]]_2$ & \cite{G-binary-2025} & 0 & 1 & MI \\
\stepcounter{ternarycount}\theternarycount & $[[36,22,7;12]]_2$ & $[[37,21,7;16]]_2$ & \cite{G-binary-2025} & 1 & 1 & MI & \stepcounter{ternarycount}\theternarycount & $[[36,23,7;13]]_2$ & $[[37,21,7;16]]_2$ & \cite{G-binary-2025} & -1 & 1 & MI \\
\stepcounter{ternarycount}\theternarycount & $[[37,2,29;35]]_2$ & $-$ & $-$ & 0 & 1 & NP & \stepcounter{ternarycount}\theternarycount & $[[37,3,26;34]]_2$ & $-$ & $-$ & -1 & 1 & NP \\
\stepcounter{ternarycount}\theternarycount & $[[37,8,18;27]]_2$ & $[[60,6,10;54]]_2$ & \cite{ZK2026} & 1 & 1 & MI & \stepcounter{ternarycount}\theternarycount & $[[37,9,11;10]]_2$ & $[[37,9,10;13]]_2$ & \cite{G-binary-2025} & 0 & 1 & MI \\
\stepcounter{ternarycount}\theternarycount & $[[37,9,18;28]]_2$ & $[[60,6,10;54]]_2$ & \cite{ZK2026} & 0 & 1 & MI & \stepcounter{ternarycount}\theternarycount & $[[37,10,11;11]]_2$ & $[[37,10,10;13]]_2$ & \cite{G-binary-2025} & 0 & 0 & MI \\
\stepcounter{ternarycount}\theternarycount & $[[37,11,11;12]]_2$ & $[[37,11,10;13]]_2$ & \cite{G-binary-2025} & -1 & 0 & MI & \stepcounter{ternarycount}\theternarycount & $[[37,12,11;13]]_2$ & $[[37,12,10;13]]_2$ & \cite{G-binary-2025} & 0 & 1 & LD \\
\stepcounter{ternarycount}\theternarycount & $[[37,13,9;10]]_2$ & $[[37,13,9;12]]_2$ & \cite{G-binary-2025} & 1 & 1 & LE & \stepcounter{ternarycount}\theternarycount & $[[37,13,11;14]]_2$ & $[[37,11,11;16]]_2$ & \cite{G-binary-2025} & 1 & 0 & MI \\
\stepcounter{ternarycount}\theternarycount & $[[37,13,14;24]]_2$ & $[[60,12,4;48]]_2$ & \cite{ZK2026} & -1 & 1 & MI & \stepcounter{ternarycount}\theternarycount & $[[37,14,9;11]]_2$ & $[[37,14,9;12]]_2$ & \cite{G-binary-2025} & 1 & 1 & LE \\
\stepcounter{ternarycount}\theternarycount & $[[37,14,11;15]]_2$ & $[[37,11,11;16]]_2$ & \cite{G-binary-2025} & 1 & 0 & MI & \stepcounter{ternarycount}\theternarycount & $[[37,15,11;16]]_2$ & $[[37,11,11;16]]_2$ & \cite{G-binary-2025} & 0 & 0 & HD \\
\stepcounter{ternarycount}\theternarycount & $[[37,16,11;17]]_2$ & $[[37,16,11;19]]_2$ & \cite{G-binary-2025} & -1 & 0 & LE & \stepcounter{ternarycount}\theternarycount & $[[37,17,8;10]]_2$ & $[[60,15,4;45]]_2$ & \cite{ZK2026} & -1 & 1 & MI \\
\stepcounter{ternarycount}\theternarycount & $[[37,18,8;11]]_2$ & $[[37,18,8;12]]_2$ & \cite{G-binary-2025} & -1 & 1 & LE & \stepcounter{ternarycount}\theternarycount & $[[37,18,10;17]]_2$ & $[[39,18,10;17]]_2$ & \cite{G-binary-2025} & 0 & 0 & SL \\
\stepcounter{ternarycount}\theternarycount & $[[37,19,8;12]]_2$ & $[[37,18,8;12]]_2$ & \cite{G-binary-2025} & 1 & 1 & HD & \stepcounter{ternarycount}\theternarycount & $[[37,19,9;16]]_2$ & $[[38,19,9;17]]_2$ & \cite{G-binary-2025} & -1 & 1 & MI \\
\stepcounter{ternarycount}\theternarycount & $[[37,19,10;18]]_2$ & $[[60,18,4;42]]_2$ & \cite{ZK2026} & -1 & 0 & MI & \stepcounter{ternarycount}\theternarycount & $[[37,20,8;13]]_2$ & $[[37,19,8;13]]_2$ & \cite{G-binary-2025} & 0 & 1 & HD \\
\stepcounter{ternarycount}\theternarycount & $[[37,20,9;17]]_2$ & $[[37,20,8;17]]_2$ & \cite{G-binary-2025} & 0 & 1 & LD & \stepcounter{ternarycount}\theternarycount & $[[37,21,8;14]]_2$ & $[[37,21,7;16]]_2$ & \cite{G-binary-2025} & 0 & 1 & MI \\
\stepcounter{ternarycount}\theternarycount & $[[37,22,8;15]]_2$ & $[[38,22,8;16]]_2$ & \cite{G-binary-2025} & 0 & 1 & MI & \stepcounter{ternarycount}\theternarycount & $[[37,23,7;12]]_2$ & $[[37,21,7;16]]_2$ & \cite{G-binary-2025} & 0 & 1 & MI \\
\stepcounter{ternarycount}\theternarycount & $[[37,24,7;13]]_2$ & $[[41,24,7;14]]_2$ & \cite{G-binary-2025} & 0 & 1 & MI & \stepcounter{ternarycount}\theternarycount & $[[38,8,19;30]]_2$ & $[[60,6,10;54]]_2$ & \cite{ZK2026} & 0 & 1 & MI \\
\stepcounter{ternarycount}\theternarycount & $[[38,9,18;29]]_2$ & $[[60,6,10;54]]_2$ & \cite{ZK2026} & 0 & 1 & MI & \stepcounter{ternarycount}\theternarycount & $[[38,10,17;28]]_2$ & $[[60,6,10;54]]_2$ & \cite{ZK2026} & -1 & 1 & MI \\
\stepcounter{ternarycount}\theternarycount & $[[38,19,9;15]]_2$ & $[[38,19,9;17]]_2$ & \cite{G-binary-2025} & 1 & 1 & LE & \stepcounter{ternarycount}\theternarycount & $[[38,20,9;16]]_2$ & $[[38,20,8;16]]_2$ & \cite{G-binary-2025} & 1 & 1 & LD \\
\stepcounter{ternarycount}\theternarycount & $[[38,21,8;13]]_2$ & $[[38,19,8;13]]_2$ & \cite{G-binary-2025} & 0 & 1 & HD & \stepcounter{ternarycount}\theternarycount & $[[38,22,8;14]]_2$ & $[[38,22,8;16]]_2$ & \cite{G-binary-2025} & -1 & 1 & LE \\
\stepcounter{ternarycount}\theternarycount & $[[38,23,8;15]]_2$ & $[[38,22,8;16]]_2$ & \cite{G-binary-2025} & -1 & 1 & MI & \stepcounter{ternarycount}\theternarycount & $[[38,25,7;13]]_2$ & $[[41,25,7;14]]_2$ & \cite{G-binary-2025} & -1 & 1 & MI \\
\stepcounter{ternarycount}\theternarycount & $[[38,34,3;4]]_2$ & $[[60,9,2;51]]_2$ & \cite{ZK2026} & 0 & 1 & MI & \stepcounter{ternarycount}\theternarycount & $[[39,6,22;33]]_2$ & $[[60,6,10;54]]_2$ & \cite{ZK2026} & 0 & 1 & MI \\
\stepcounter{ternarycount}\theternarycount & $[[39,10,11;11]]_2$ & $[[39,10,10;11]]_2$ & \cite{G-binary-2025} & -1 & 1 & LD & \stepcounter{ternarycount}\theternarycount & $[[39,12,15;25]]_2$ & $[[60,12,4;48]]_2$ & \cite{ZK2026} & 1 & 1 & MI \\
\stepcounter{ternarycount}\theternarycount & $[[39,13,9;10]]_2$ & $[[60,12,4;48]]_2$ & \cite{ZK2026} & 0 & 1 & MI & \stepcounter{ternarycount}\theternarycount & $[[39,17,11;18]]_2$ & $[[39,16,11;19]]_2$ & \cite{G-binary-2025} & 1 & 0 & MI \\
\stepcounter{ternarycount}\theternarycount & $[[39,18,8;11]]_2$ & $[[39,18,8;12]]_2$ & \cite{G-binary-2025} & -1 & 1 & LE & \stepcounter{ternarycount}\theternarycount & $[[39,22,8;13]]_2$ & $[[39,19,8;13]]_2$ & \cite{G-binary-2025} & 1 & 1 & HD \\
\stepcounter{ternarycount}\theternarycount & $[[39,35,3;4]]_2$ & $[[60,9,2;51]]_2$ & \cite{ZK2026} & 0 & 1 & MI & \stepcounter{ternarycount}\theternarycount & $[[40,2,31;38]]_2$ & $-$ & $-$ & 0 & 1 & NP \\
\stepcounter{ternarycount}\theternarycount & $[[40,4,26;36]]_2$ & $-$ & $-$ & 0 & 1 & NP & \stepcounter{ternarycount}\theternarycount & $[[40,5,24;35]]_2$ & $-$ & $-$ & 0 & 1 & NP \\
\stepcounter{ternarycount}\theternarycount & $[[40,6,22;32]]_2$ & $[[60,6,10;54]]_2$ & \cite{ZK2026} & 1 & 1 & MI & \stepcounter{ternarycount}\theternarycount & $[[40,12,15;24]]_2$ & $[[60,12,4;48]]_2$ & \cite{ZK2026} & 1 & 1 & MI \\
\stepcounter{ternarycount}\theternarycount & $[[40,13,15;25]]_2$ & $[[60,12,4;48]]_2$ & \cite{ZK2026} & 0 & 1 & MI & \stepcounter{ternarycount}\theternarycount & $[[40,14,15;26]]_2$ & $[[46,14,15;26]]_2$ & \cite{G-binary-2025} & -1 & 1 & SL \\
\stepcounter{ternarycount}\theternarycount & $[[40,21,8;13]]_2$ & $[[40,19,8;13]]_2$ & \cite{G-binary-2025} & 1 & 1 & HD & \stepcounter{ternarycount}\theternarycount & $[[40,24,7;12]]_2$ & $[[41,24,7;14]]_2$ & \cite{G-binary-2025} & 1 & 1 & MI \\
\stepcounter{ternarycount}\theternarycount & $[[40,26,7;14]]_2$ & $[[41,26,7;14]]_2$ & \cite{G-binary-2025} & 0 & 1 & SL & & & & & & & \\
\bottomrule
\end{tabular}%
}
\end{longtable}
\end{center}

\begin{table}[htbp]
\caption{New and improved EA qutrit codes of lengths $7 \leq n \leq 25$}
\label{tab:breakthroughs0118}
\centering
\begin{tabular}{c|l|l|c|c|c|c}
\toprule
No. & Our Qutrit Code & Known Code & Source & $\Delta_{\Hull_{\rm H}}$ & $\Delta_{d^{\perp_{\rm H}}}$ & Type \\
\midrule
$1$ & $[[7,1,7;6]]_3$ & $-$ & $-$ & 0 & 1 & NP \\
$2$ & $[[8,0,8;6]]_3$ & $[[9,0,8;6]]_3$ & \cite{G-ternary-2025} & 1 & 1 & SL \\
$3$ & $[[8,1,8;7]]_3$ & $[[11,0,8;7]]_3$ & \cite{G-ternary-2025} & 0 & 1 & MI \\
$4$ & $[[13,3,7;4]]_3$ & $[[13,3,7;5]]_3$ & \cite{G-ternary-2025} & 0 & 1 & LE \\
$5$ & $[[17,2,15;15]]_3$ & $[[26,0,14;18]]_3$ & \cite{G-ternary-2025} & 0 & 1 & MI \\
$6$ & $[[18,2,16;16]]_3$ & $[[21,0,16;16]]_3$ & \cite{G-ternary-2025} & 0 & 1 & MI \\
$7$ & $[[19,2,17;17]]_3$ & $[[26,0,14;18]]_3$ & \cite{G-ternary-2025} & 0 & 1 & MI \\
$8$ & $[[19,3,15;16]]_3$ & $[[26,0,14;18]]_3$ & \cite{G-ternary-2025} & 0 & 1 & MI \\
$9$ & $[[21,2,18;19]]_3$ & $-$ & $-$ & 0 & 1 & NP \\
$10$ & $[[22,2,19;20]]_3$ & $-$ & $-$ & 0 & 1 & NP \\
$11$ & $[[23,2,20;21]]_3$ & $-$ & $-$ & 0 & 1 & NP \\
$12$ & $[[24,2,21;22]]_3$ & $-$ & $-$ & 0 & 1 & NP \\
$13$ & $[[25,2,22;23]]_3$ & $[[26,2,22;24]]_3$ & \cite{LZ2024-DM} & 0 & 1 & MI \\
$14$ & $[[25,3,20;22]]_3$ & $-$ & $-$ & 0 & 1 & NP \\
\bottomrule
\end{tabular}
\end{table}

\begin{remark}\label{rem.known-dual}
    Table \ref{tab:change-distribution} summarizes the distribution of the listed entries 
according to the parameter changes in Table \ref{tab:parameter-transformations} and construction source. 
In particular, $10$ {\em extra} EAQECCs obtainable from our generalized extended codes are already known 
to improve Grassl's tables \cite{G-binary-2025,G-ternary-2025} by different methods 
in \cite{LLS2025,LSL2023,CLZ2025,LZ2024-DM,LLZS2024}. 
We emphasize that these $10$ EAQECCs are not included in Tables~\ref{tab:breakthroughs2140} and \ref{tab:breakthroughs0118}, 
and they are constructed in a unified way via our generalized extended codes, 
which is different from the methods used in the cited papers.
\end{remark}

\begin{table}[htbp]
\caption{Distribution of the new and improved EAQECCs listed in Tables~\ref{tab:breakthroughs2140} and \ref{tab:breakthroughs0118}}
\label{tab:change-distribution}
\centering
\begin{tabular}{c|c|cc|cc|c}
\toprule
No. & $(\Delta_{\Hull_{\rm H}},\Delta_{d^{\perp_{\rm H}}})$
& \# Qubit Codes & \# Qutrit Codes & \# Improved & \# New & Total \\
\midrule
1 & $(-1,0)$ & $11$ & $0$ & $9$ & $2$ & $11$ \\
2 & $(-1,1)$ & $57$ & $0$ & $53$ & $4$ & $57$ \\
3 & $(0,0)$ & $13$ & $0$ & $13$ & $0$ & $13$ \\
4 & $(0,1)$ & $121$ & $13$ & $104$ & $30$ & $134$ \\
5 & $(1,0)$ & $13$ & $0$ & $13$ & $0$ & $13$ \\
6 & $(1,1)$ & $52$ & $1$ & $52$ & $1$ & $53$ \\
\midrule
Total & $-$ & $267$ & $14$ & $244$ & $37$ & $281$ \\
\bottomrule
\end{tabular}
\end{table}

\begin{table}[htbp]
\caption{Reproducing some known EAQECCs with parameters improving Grassl's tables}
\label{tab:literature-separated-eaqeccs}
\centering
\begin{tabular}{c|l|l|cc|c|l}
\toprule
No. & Our Qubit Code & Known Code in \cite{G-binary-2025} & $\Delta_{\Hull_{\rm H}}$ & $\Delta_{d^{\perp_{\rm H}}}$ & Type & Also obtained in \\
\toprule
$1$ & $[[9,2,5;3]]_2$ & $[[9,2,5;4]]_2$ & $-1$ & $1$ & LE & \cite[Table~2]{LLS2025} \\
$2$ & $[[9,3,5;4]]_2$ & $[[9,2,5;4]]_2$ & $0$ & $1$ & HD & \cite[Table~6]{LSL2023} \\
$3$ & $[[11,2,7;7]]_2$ & $[[12,2,7;9]]_2$ & $1$ & $1$ & MI & \cite[Table~6]{LSL2023} \\
$4$ & $[[11,5,4;2]]_2$ & $[[11,5,4;3]]_2$ & $1$ & $1$ & LE & \cite[Table~2]{LLS2025} \\
$5$ & $[[13,3,6;4]]_2$ & $[[13,3,6;5]]_2$ & $1$ & $1$ & LE & \cite[Table~4]{LLS2025} \\ 
\midrule
No. & Our Qutrit Code & Known Code in \cite{G-ternary-2025} & $\Delta_{\Hull_{\rm H}}$ & $\Delta_{d^{\perp_{\rm H}}}$ & Type & Also obtained in \\
\midrule
$6$ & $[[7,0,7;5]]_3$ & $[[11,0,7;5]]_3$ & $1$ & $1$ & SL & \cite[Table~V]{CLZ2025} \\
$7$ & $[[7,2,5;3]]_3$ & $[[7,2,5;4]]_3$ & $0$ & $1$ & LE & \cite[Example~5]{LLS2025} \\
$8$ & $[[9,0,9;7]]_3$ & $[[9,0,9;8]]_3$  & $1$ & $1$ & LE & \cite[Table~1]{LLZS2024} \\
$9$ & $[[15,2,13;13]]_3$ & $[[23,0,13;13]]_3$ & $0$ & $1$ & MI & \cite[Table~4]{LZ2024-DM} \\
$10$ & $[[16,2,14;14]]_3$ & $[[26,0,14;18]]_3$ & $0$ & $1$ & MI & \cite[Table~V]{CLZ2025} \\
\bottomrule
\end{tabular}
\end{table}

\section{Concluding Remarks}\label{sec.conclusion}
In this paper, we propose generalized extended codes $\C({\bf u},a)$ of$q^2$-ary linear codes and use them to construct EAQECCs via the
Hermitian construction. This class of codes is obtained by considering the Hermitian
dual of certain second kind of extended codes discussed in \cite{SDC2024FFA,SDC2024DM} together with some monomial variations. We prove that every $[n+1,k+1]_{q^2}$ linear code whose Hermitian dual distance is greater than
one is monomially equivalent to a generalized extended code. This shows that generalized extended codes provide a unified representation for a large class of linear codes while retaining an explicit algebraic form suitable for controlling the Hermitian hulls and Hermitian dual distances.

We have determined the Hermitian hull dimension and the Hermitian dual distance of
$\C({\bf u},a)$. For the Hermitian hull, we establish that, if
$\ell=\dim(\Hull_{\rm H}(\C))$, then
$\dim(\Hull_{\rm H}(\C({\bf u},a)))$ can only be $\ell-1$, $\ell$, or
$\ell+1$. The exact case is classified by the position of the extension vector ${\bf u}$ relative to $\C+\C^{\perp_{\rm H}}$. 
We also show that, in the relevant case ${\bf u}\in (\C+\C^{\perp_{\rm H}})\setminus \C$, it suffices to study the representatives
in $\C^{\perp_{\rm H}}\setminus\Hull_{\rm H}(\C)$. For the Hermitian dual distance, we prove that
$d(\C({\bf u},a)^{\perp_{\rm H}})$ is either
$d(\C^{\perp_{\rm H}})$ or $d(\C^{\perp_{\rm H}})+1$. We then give necessary and
sufficient conditions for the latter case in terms of the interaction between
${\bf u}$ and the minimum weight codewords of $\C^{\perp_{\rm H}}$. These criteria were further reformulated in terms of maximal subcodes and finite-geometric conditions. In particular, we obtain explicit conditions under which the Hermitian hull dimension and the Hermitian dual distance increase simultaneously.

Applying these structural results to the Hermitian construction, we obtain 
six families of EAQECCs arising from $\C({\bf u},a)$. The examples and tables show that generalized extended codes $\C({\bf u},a)$ can produce new or improved EAQECCs. After detailed comparisons with best-known results collected in Grassl's online tables \cite{G-ternary-2025,G-binary-2025} 
and the latest records reported in \cite{LLS2025,K2023,FCLLC2025,LLZS2024,LZ2024-DM,CLZ2025,LZ2024-JAMC},
our construction produces $281$ new or improved parameters of qubit and qutrit EA codes, including $267$ qubit parameters for lengths $7 \leq n \leq 40$ and $14$ qutrit parameters for lengths $7 \leq n \leq 25$. 
Among them, $244$ codes have improved parameters when compared with the previous best-known while $37$ codes have are new parameters. Improvements occur in terms of length, dimension, minimum distance, as well as the amount of entanglement, with $180$ instances having simultaneous improvements in two or more parameters. Our construction also reproduces $10$ further improvements over Grassl's tables previously obtained by different methods in \cite{LLS2025,LSL2023,CLZ2025,LZ2024-DM,LLZS2024}, showing that generalized extended codes indeed provide a unified framework for obtaining excellent EAQECCs.

In addition to the Hermitian construction, the CSS and symplectic constructions constitute important methods for constructing EAQECCs. Extend the results in this paper to the Euclidean inner product setting is straightforward. The symplectic case, however, appears more complicated and may be worthy of a separate investigation.

\vfill
\end{document}